\newtheorem{lemma}{Lemma}
\newtheorem{theorem}[lemma]{Theorem}
\newtheorem{corollary}[lemma]{Corollary}
\newtheorem{proposition}[lemma]{Proposition}
\newtheorem{xmpl}[lemma]{Example}
\newenvironment{example}{\begin{xmpl}\rm}{\end{xmpl}}
\newtheorem{rmark}[lemma]{Remark}
\newenvironment{remark}{\begin{rmark}\rm}{\end{rmark}}
\theoremstyle{definition}
\newtheorem{definition}[lemma]{Definition}
\newcommand{\norm}[1]{\tilde{#1}}
\numberwithin{equation}{section}
\let\epsilon\varepsilon
\let\phi\varphi
\let\emptyset\varnothing
\newcommand{\diff}{H}
\newcommand{\init}[1]{#1_{0}}
\renewcommand{\Pr}{\mathbb{P}}
\newcommand{\MP}{\textsf{MP}}
\newcommand{\sMP}{\textsf{sMP}}
\newcommand{\asMP}{\textsf{asMP}}
\newcommand{\RT}{\textsf{RT}}
\newcommand{\FP}{\textsf{FP}}
\newcommand{\APRich}{\text{AP-Rich}\xspace}
\newcommand{\FPPoor}{\text{FP-poor}\xspace}
\newcommand{\energy}{\textsf{energy}}
\newcommand{\payoff}{\textsf{payoff}}
\newcommand{\play}{\textsf{play}}
\newcommand{\lolli}{\G_{\bowtie}}
\newcommand{\St}{\mbox{St}}
\newcommand{\Pot}{\mbox{Pot}}
\newcommand{\minim}{Min}
\newcommand{\gainmax}{G^+}
\newcommand{\investmax}{I^+}
\newcommand{\aux}{\lambda}
\newcommand{\Unif}{\mathit{Unif}}
\newcommand{\G}{{\cal G}}
\renewcommand{\P}{{\cal P}}
\newcommand{\PO}{Player~$1$\xspace}
\newcommand{\PT}{Player~$2$\xspace}
\newcommand{\PLi}{Player~$i$\xspace}
\newcommand{\thresh}{\texttt{Th}\xspace}
\newcommand{\Max}{\text{Max}\xspace}
\newcommand{\Min}{\text{Min}\xspace}
\newcommand{\Real}{\mathbb{R}}
\newcommand{\Nat}{\mathbb{N}}
\newcommand{\Rat}{\mathbb{Q}}
\newcommand{\zug}[1]{\langle #1  \rangle}
\newcommand{\set}[1]{\{ #1 \}}
\newcommand{\stam}[1]{}
\title{\Large Infinite-Duration All-Pay Bidding Games\thanks{This research was supported in part by the Austrian Science Fund (FWF) under grant Z211-N23 (Wittgenstein Award), ERC CoG 863818 (FoRM-SMArt), and  by the European Union's Horizon 2020 research and innovation programme under the Marie Sk\l odowska-Curie Grant Agreement No.~665385.}}
\author{Guy Avni\thanks{University of Haifa}
\and Ism\"ael Jecker\thanks{IST Austria} \and {\DJ}or{\dj}e \v{Z}ikeli\'c\thanks{IST Austria}}
\date{}
\begin{document}
\maketitle

\begin{abstract}
In a two-player zero-sum graph game the players move a token throughout a graph to produce an infinite path, which determines the winner or payoff of the game. Traditionally, the players alternate turns in moving the token. In {\em bidding games}, however, the players have budgets, and in each turn, we hold an ``auction'' (bidding) to determine which player moves the token: both players simultaneously submit bids and the higher bidder moves the token. The bidding mechanisms differ in their payment schemes. Bidding games were largely studied with variants of {\em first-price} bidding in which only the higher bidder pays his bid. We focus on {\em all-pay} bidding, where both players pay their bids. Finite-duration all-pay bidding games were studied and shown to be technically more challenging than their first-price counterparts. We study for the first time, infinite-duration all-pay bidding games. Our most interesting results are for {\em mean-payoff} objectives: we portray a complete picture for games played on strongly-connected graphs. We study both pure (deterministic) and mixed (probabilistic) strategies and completely characterize the optimal sure and almost-sure (with probability $1$) payoffs that the players can respectively guarantee. We show that mean-payoff games under all-pay bidding exhibit the intriguing mathematical properties of their first-price counterparts; namely, an equivalence with {\em random-turn games} in which in each turn, the player who moves is selected according to a (biased) coin toss. The equivalences for all-pay bidding are more intricate and unexpected than for first-price bidding. 
\end{abstract}
\clearpage
\setcounter{page}{1}
\newpage

\section{Background, Definitions, and Summary of Results}
Graph games are two-player zero-sum games with deep connections to foundations of logic \cite{Rab69}. They have numerous practical applications, e.g., verification \cite{EJS93}, reactive synthesis \cite{PR89}, and reasoning about multi-agent systems \cite{AHK02}. There are interesting theoretical problems on graph games: e.g., solving {\em parity games} is a rare problem in NP and coNP for which no polynomial-time algorithm is known and only recently a quasi-polynomial time algorithm was found~\cite{CJ+17}.

A graph game is played on a finite directed graph. The game proceeds by placing a token on one of the vertices and allowing the players to move it throughout the graph to produce an infinite path, which determines the winner or payoff of the game. Traditionally, the players alternate turns when moving the token. We study {\em bidding games} \cite{LLPU96,LLPSU99} in which the players have budgets, and in each turn, we hold an ``auction'' (bidding) to determine which player moves the token. 

\subsection{Bidding mechanisms, budget ratios, and strategies}
In all the mechanisms we consider, in each turn, both players simultaneously submit a bid that does not exceed their available budget, and the higher bidder moves the token. The mechanisms differ in their payment schemes. We classify the payment schemes according to two orthogonal properties: {\em who pays} and {\em who is the recipient}. For the first, we consider {\em first-price} bidding, in which only the higher bidder pays, and {\em all-pay} bidding in which both players pay their bids. For the latter, two mechanisms were defined in \cite{LLPSU99}: in {\em Richman bidding} (named after David Richman), payments are made to the other player, and in {\em poorman bidding} the payments are made to the ``bank'' thus the money is lost. A third payment scheme called {\em taxman} spans the spectrum between Richman and poorman, as we elaborate in Sec.~\ref{sec:tax}. 

Previously, bidding games were largely studied in combination with first-price bidding. In this work we study, for the first time, infinite-duration bidding games under all-pay bidding and portray a complete picture for both all-pay Richman and poorman bidding. 

%In a nutshell, when comparing Richman and poorman bidding, Richman bidding tend to be technically cleaner and easier to work with while poorman bidding is more suited for practical applications. We hope to convince the reader that the results on poorman bidding are more surprising.

We make the payment schemes precise below. For $i \in \set{1,2}$, suppose \PLi's budget is $B_i$ prior to a bidding and his bid is $b_i \in [0,B_i]$, and assume for convenience that \PO wins the bidding, thus $b_1 > b_2$. The budgets are updated as follows:
\begin{itemize}[noitemsep,topsep=0pt]
\item {\bf First-price:} Only the higher bidder pays.
\begin{itemize}[noitemsep,topsep=0pt]
\item {\bf Richman:} $B'_1 = B_1 - b_1$ and $B'_2 = B_2 + b_1$.
\item {\bf Poorman:} $B'_1 = B_1 - b_1$ and $B'_2 = B_2$.
%\item {\bf Taxman:} For a fixed $\tau \in [0,1]$, we have $B'_1 = B_1 - b_1$ and $B'_2 = B_2+(1-\tau) \cdot b_1$.
\end{itemize}
\item {\bf All-pay:} Both players pay their bids.
\begin{itemize}[noitemsep,topsep=0pt]
\item {\bf Richman:} $B'_1 = B_1 - b_1 + b_2$ and $B'_2 = B_2 + b_1 - b_2$. Thus, \PO pays \PT the difference between the two bids.
\item {\bf Poorman:} $B'_1 = B_1 - b_1$ and $B'_2 = B_2 - b_2$.
%\item {\bf Taxman.} For a fixed $\tau \in [0,1]$, we have $B'_1 = B_1 - b_1 + \tau \cdot b_2$ and $B'_2 = B_2- b_2+\tau \cdot b_1$.
\end{itemize}
\end{itemize}

For convenience, we assume ties are broken in favor of \PT, and our results are independent of the tie-breaking mechanism that is used.
A central quantity in bidding games is the following:

\begin{definition}
{\bf (Budget ratio).}
Suppose \PLi's budget is $B_i$, for $i \in \set{1,2}$, then \PLi's {\em ratio} is $\frac{B_i}{B_1 + B_2}$
\end{definition}

A {\em strategy} in a bidding game is a function that, in full generality, takes a finite {\em history} of the game, which includes the visited vertices, the bids made by the players, their outcomes, etc. It prescribes a probability distribution over legal bids, i.e., bids that do not exceed the available budget, and a neighboring vertex to move the token to upon winning the bidding. We say that a strategy is {\em pure} when it prescribes one bid with probability $1$ and otherwise we call the strategy {\em mixed}. One of our contributions is that we construct {\em budget based} strategies in which the bid depends only on the current vertex and the current budget, and the choice of move does not depend on the budget.

\paragraph*{Applications.}
All-pay bidding is often better suited than first-price bidding for modelling practical settings. Applications arise from viewing the players' budgets as resources with little or no inherent value, e.g., time or strength, and a strategy as a recipe to invest resources with the goal of maximizing the expected utility. In many settings, invested resources are lost, thus all-pay bidding is more appropriate than first-price bidding. All-pay poorman games can be seen as a dynamic variant of {\em Colonel Blotto} games, which date back to \cite{Bor21} and have been extensively studied since. Applications of Colonel Blotto games, which carry over to all-pay bidding games, include political lobbying and campaigning, rent seeking \cite{Tul80}, and modelling biological processes \cite{CRN12}. In fact, due to their dynamic nature, bidding games are a better model for these applications. 

Another application of bidding games is reasoning about systems in which the scheduler accepts payment in exchange for priority. {\em Blockchain technology} is one such example. Simplifying the technology, a blockchain is a log of transactions issued by clients and maintained by {\em miners}, who accept transaction fees from clients in exchange for writing transactions to the blockchain. In {\em Etherium}, the blockchain consists of snippets of code (called {\em smart contracts}). Verification of Etherium programs is both challenging and important since bugs can cause loss of money (e.g., \cite{CGV18}). Bidding games, and specifically all-pay poorman games, can model Etherium programs: we associate players with clients and, as is standard in model checking, we associate the states of the program with the vertices of the graph. All-pay poorman bidding is the most appropriate bidding mechanism since in Etherium, the transaction fees are always paid to the miners.

\subsection{Reachability bidding games}
\begin{definition}
{\bf (Reachability games).}
A reachability game has two target vertices $t_1$ and $t_2$. The game ends once a target $t_i$ is visited, for $i \in \set{1,2}$. Then, \PLi is the winner. 
\end{definition}
\begin{figure}[ht]
\begin{minipage}[b]{0.45\linewidth}
\centering
\includegraphics[height=2cm]{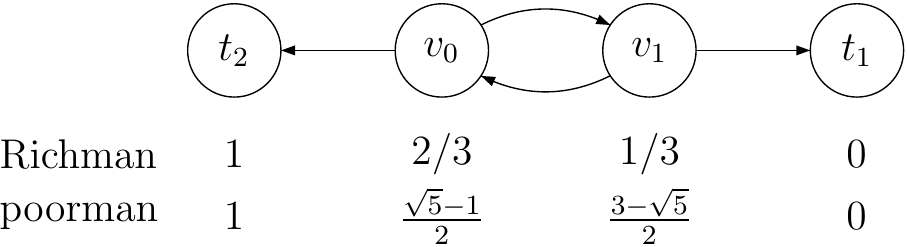}
\caption{A reachability bidding game with the threshold ratios under first-price Richman and poorman bidding.}
\label{fig:reach}
\end{minipage}
%\quad
\hspace{0.1\linewidth}
\begin{minipage}[b]{0.45\linewidth}
\centering
\includegraphics[height=2cm]{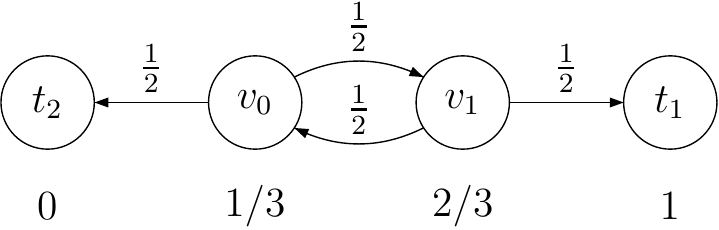}
\caption{The (simplified) random-turn game that corresponds the game in Fig.~\ref{fig:reach} with the probabilities to reach $t_1$ from each vertex.}
\label{fig:RTreach}
\end{minipage}
\end{figure}

In \cite{LLPU96,LLPSU99}, only first-price bidding mechanisms were considered and only in combination with reachability objectives. The main question studied concerned a necessary and sufficient initial budget ratio for winning the game called the {\em threshold ratio}, and denoted $\thresh(v)$, for a vertex $v$. Formally, for a vertex $v$, if \PO's ratio is greater than $\thresh(v)$, he deterministically wins the game from $v$, and if \PT's ratio is greater than $1-\thresh(v)$, she wins the game. Threshold ratios were shown to exist for reachability first-price bidding games. See for example Fig.~\ref{fig:reach}. Moreover, threshold ratios in first-price Richman bidding are particularly favorable: the threshold ratio in a vertex is the average of two of its neighbors. This implies an intriguing equivalence with a class of games called {\em random-turn games} (see Def.~\ref{def:RT}), which is well-studied in its own account since the seminar paper \cite{PSSW09}. We illustrate a simplified version of the equivalence on games with out-degree $2$ and the general statement can be found in \cite{LLPSU99,AHC19}.
\begin{example}
Consider the game depicted in Fig.~\ref{fig:reach}. Construct a Markov chain by labeling each edge with probability $0.5$ (see Fig.~\ref{fig:RTreach}). Note that for each vertex $u$, we have $\thresh(u) = 1-\Pr[\text{reach}(u, t_1)]$.\hfill$\triangleleft$
\end{example}

\begin{remark}
\label{rem:reach-poor}
For reachability objectives, apart from Richman bidding, no equivalence is known between bidding games and random-turn games, for any other bidding mechanism. Moreover, such an equivalence is unlikely to exist since values in stochastic games are rational numbers (they constitute a solution to a linear program) and threshold ratios under  first-price poorman bidding are irrational already in the game depicted in Fig.~\ref{fig:reach}.\hfill$\triangleleft$
\end{remark}

\noindent{\bf Reachability all-pay poorman bidding games}
were only recently studied \cite{AIT20}. Technically, these games are significantly harder than first-price bidding and there are large gaps in our understanding of this model. To illustrate, contrary to reachability first-price bidding, mixed strategies are required already in the simplest interesting bidding game ``\PO needs to win two biddings in a row'', whose solution was left as an open question in \cite{LLPSU99}. It was shown in \cite{AIT20} that for $n > 1$, when \PT's budget is $1$ and \PO's budget is in $(1+\frac{1}{n}, 1+\frac{1}{n-1}]$, in the first bidding, a \PO optimal strategy bids uniformly at random from $\set{\frac{i}{n}: 1 \leq i \leq n}$ and guarantees winning with probability $\frac{1}{n}$. Reachability all-pay bidding games become complicated very fast; e.g., strategies in ``\PO wins three times in a row'' require infinite support, and experiments hint that unlike ``win twice in a row'' the optimal winning probability as a function of the initial budget ratio is a continuous function. Moreover, the basic question of the existence of a value of the game, remains open.

  \begin{figure}
    \begin{minipage}[b]{0.45\linewidth}
    \centering
\includegraphics[height=1.5cm]{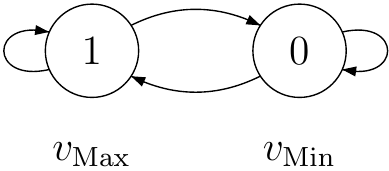}
\caption{\small The mean-payoff game $\lolli$ with the weights in the vertices.}
\label{fig:bowtie}
\end{minipage}
\hspace{0.1\linewidth}
  \begin{minipage}[b]{0.45\linewidth}
    \centering
\includegraphics[height=1.5cm]{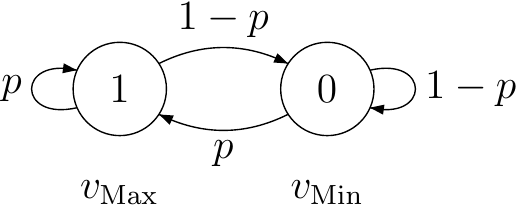}
\caption{The simplified random-turn game that corresponds to $\lolli$ w.r.t. $p \in [0,1]$.}
\label{fig:RTBowtie}
\end{minipage}
\end{figure}

 \stam{
\includegraphics[height=2cm]{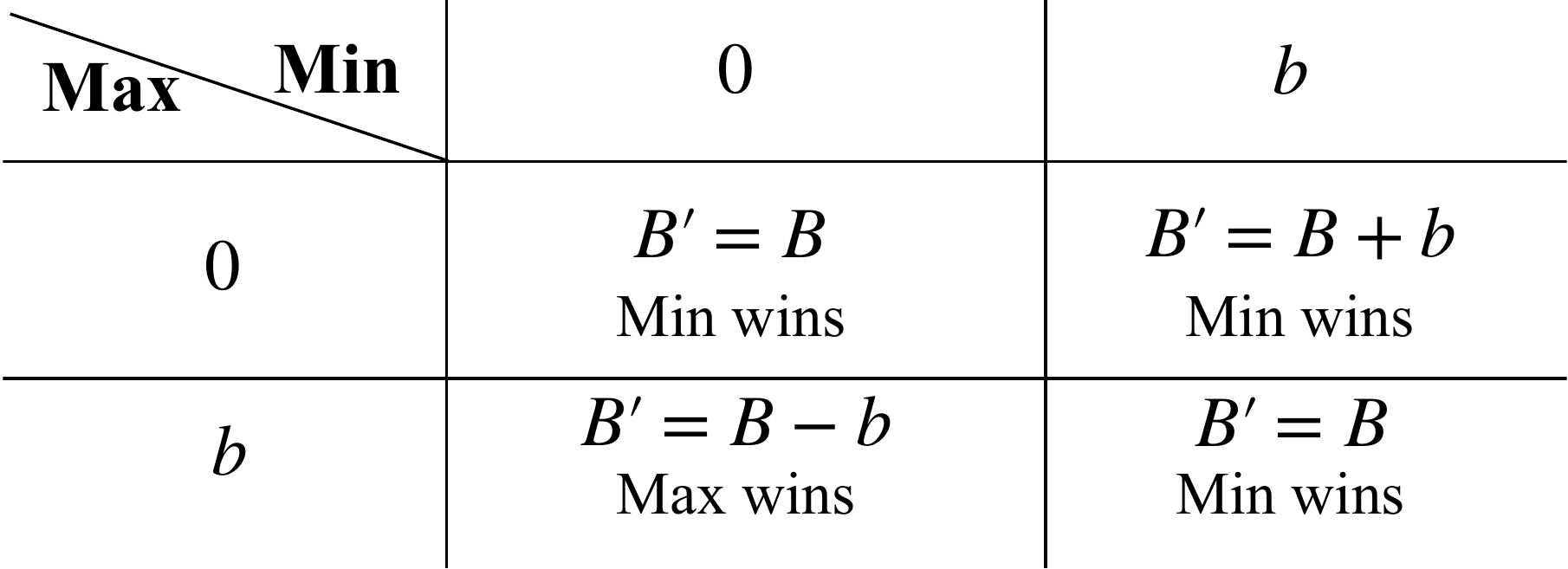}
\caption{\small \Max's budget updates in four bidding outcomes under \APRich.}% assuming \Max bids according to rows.}
\label{tab:Richman}
\end{minipage}
}

\subsection{Mean-payoff bidding games}
{\em Mean payoff} games are quantitative games. Each infinite play has a {\em payoff}, which is \PO's reward and \PT's cost, thus we call the players in a mean-payoff game \Max and \Min, respectively.  We illustrate the mean-payoff objective in the following example.

\begin{example}
Suppose that two advertisers repeatedly (e.g., daily) compete to publish their ad on a content-provider's website (e.g., New York Times). We associate the advertisers with two players in a mean-payoff bidding game. The payoff is the long-run average time that \Max's ad shows (e.g., the number of days his ad appears in a year). Since bids are paid to the content provider, poorman bidding is the appropriate bidding mechanism. When the site has only one ad slot, only the higher bidder's ad shows and only he pays his bid, thus we use first-price poorman bidding. Alternatively, when there are two ad slots (e.g., at the top and bottom of the page), the players compete on who gets the better position, and both pay their bids, thus we use all-pay poorman bidding. Our goal is to find an optimal bidding strategy for an advertiser that, given his budget constraints, maximizes the long-run ratio of the time that his ad shows. To find such a strategy, we reason about $\lolli$ in Fig.~\ref{fig:bowtie}: \Max moves to the vertex $v_\Max$ upon winning a bidding, which represents his ad showing or showing in the favorable position.\hfill$\triangleleft$
\end{example}

Formally, a {\em mean-payoff game} is played on a weighted graph $\zug{V, E, w}$, where $w: V \rightarrow \Rat$. The payoff is defined as follows.

\begin{definition}
\label{def:MP}
{\bf (Payoff and energy).} Consider an infinite path $\eta = \eta_0, \eta_1, \ldots$. For $n >1$, let $\eta^n = \eta_0, \ldots, \eta_n$ be a prefix of $\eta$. The {\em energy} of $\eta^n$, denoted $\energy(\eta^n)$, is the sum of weights it traverses, thus $\energy(\eta^n) = \sum_{0 \leq i < n} w(\eta_i)$. The {\em payoff} of $\eta$, denoted $\payoff(\eta)$, is $\payoff(\eta) = \lim \inf_{n \to \infty} \frac{1}{n} \cdot \energy(\eta^n)$. Note that the use of $\lim\inf$ gives \Min an advantage.
\end{definition}

\begin{remark}
Unless stated otherwise, we consider games played on strongly-connected graphs. Under first-price bidding, this implies a solution to general games since we first solve the bottom-strongly connected components (BSSCs) and then construct a reachability game in which a player's goal is to force the game to a BSCC that is ``good'' for him. A similar solution would apply also under all-pay bidding, but reachability games are not yet solved for these bidding mechanisms.
\end{remark}

The central question in mean-payoff bidding games is (see Def.~\ref{def:MP-val} for a formal definition): 
\begin{center}{\it What is the optimal payoff a player can guarantee given an initial budget ratio?}\end{center}
For example, suppose \Max's ratio is $2/3$. What is the optimal payoff he can guarantee in $\lolli$ (Fig.~\ref{fig:bowtie}) under first-price Richman bidding? Would \Max prefer first-price Richman or poorman bidding? Does the answer change when the ratio is $1/3$? 

We answer these questions by showing an equivalence between bidding games and random-turn games, which are defined as follows.

\begin{definition}
\label{def:RT}
{\bf (Random-turn games).} Consider a bidding game $\G$ that is played on a graph over a set of vertices $V$. For $p \in [0,1]$, the random-turn game that corresponds to $\G$ w.r.t. $p$, denoted $\RT(\G, p)$, is a game in which instead of bidding, in each turn we toss a (biased) coin to determine which player gets to move the token: \PO is chosen with probability $p$ and \PT with probability $1-p$. Formally, $\RT(\G, p)$ is a stochastic game \cite{Con92}. For each vertex $v\in V$, we add two vertices $v_1$ and $v_2$. The vertex $v$ is a ``Nature'' vertex and simulates the coin toss, thus it has two outgoing edges: one with probability $p$ to $v_1$ and a second with probability $1-p$ to $v_2$. For $i \in \set{1,2}$, the vertex $v_i$ is controlled by \PLi and there are deterministic edges from $v_i$ to $u$, for every neighbor $u$ of $v$. The  objective in $\RT(\G,p)$ matches that of $\G$, thus when $\G$ is a mean-payoff game, so is $\RT(\G, p)$. Its {\em mean-payoff value}, denoted $\MP\big(\RT(\G, p)\big)$, is a well-known concept and is defined as the expected payoff under optimal play of the two players. It is known that the optimal value exists and that it can be achieved using optimal pure {\em positional} strategies; namely, a strategy that prescribes, at each vertex, a successor that does not depend on the history of the game. Since $\G$ is strongly-connected, the value does not depend on the initial vertex. 
\end{definition}

\begin{example}
In $\lolli$, since when \Max and \Min win a bidding they respectively move to $v_\Max$ and $v_\Min$, we can simplify $\RT(\lolli, p)$ to a weighted Markov chain (see Fig.~\ref{fig:RTBowtie}). Informally, we expect that a random walk ``stays'' in $v_\Max$ portion $p$ of the time and since the weights are simple, we have $\MP\big(\RT(\lolli, p)\big) = p$. \hfill$\triangleleft$
\end{example}

\subsubsection{Mean-payoff First-price bidding games}
\label{sec:intro-MP-FP}
%Mean-payoff first-price bidding games were studied under Richman \cite{AHC19}, poorman \cite{AHI18}, and taxman \cite{AHZ19} bidding. 
We survey the results on first-price bidding games obtained in \cite{AHC19,AHI18}.

\medskip
\noindent{\bf First-price Richman bidding:} The initial budgets do not matter in mean-payoff first-price Richman bidding games. Moreover, these games are equivalent to fair random-turn games by associating optimal payoff in the bidding game with expected payoff in the random-turn game. Formally, consider a strongly-connected mean-payoff game $\G$ and suppose both players have positive initial ratios. Then, for every $\epsilon >0$, \Max has a pure strategy that guarantees a payoff of at least $\MP\big(\RT(\G, 0.5)\big) - \epsilon$. Since Def.~\ref{def:MP} favors \Min, this implies that \Min can guarantee a payoff of at most $\MP\big(\RT(\G, 0.5)\big) + \epsilon$.
For example, in $\lolli$, both players can (roughly) guarantee a payoff of $0.5$, no matter the initial ratios. 

\medskip
\noindent{\bf First-price poorman bidding:} While the equivalence for Richman bidding can be seen as a generalization of the equivalence for reachability objectives, recall that no such equivalence is known for first-price poorman bidding. We thus find it surprising that mean-payoff first-price poorman bidding {\em are} equivalent to random-turn games. In fact, the equivalence is richer than under Richman bidding. In a mean-payoff game $\G$ with a ratio that exceeds $r \in (0,1)$ and for every $\epsilon > 0$, under first-price poorman bidding, \Max can deterministically guarantee a payoff of $\MP\big(\RT(\G, r)\big)-\epsilon$. Again, a dual result holds for \Min. For example, in $\lolli$, with a ratio of $2/3$, the optimal payoff \Max can guarantee is $2/3$. Thus, when \Max's initial ratio is greater than \Min's ratio, he prefers playing with poorman bidding, when his ratio is less than \Min, he prefers Richman, and interestingly, when the ratios are the same, the payoffs under both bidding rules coincide. 

\medskip
A secondary contribution of this work is a new and significantly simpler construction of optimal budget-based strategies under first-price Richman and poorman bidding. 

\begin{remark}
\label{rem:strategies}
{\bf (Strategies in bidding games vs. stochastic games).}
We point out that strategies in bidding games are much more complicated than in stochastic games. At a vertex $v$ in a stochastic game, a strategy only needs to select a vertex $u$ to move the token to from $v$. In a bidding game, in addition to the choice of $u$, a strategy prescribes a bid. While in reachability games, a bidding strategy can easily be extracted from the solution of the random-turn game, in mean-payoff games, this is no longer the case: knowing the optimal payoff a player can achieve in a game does not give any hint on the optimal bidding strategy and finding the right bids is indeed a challenging task.
\end{remark}

\subsection{Mean-payoff all-pay bidding games}
The starting point of this research is inspired by the results for first-price poorman bidding: the moral of those results is that as we ``go to the infinity'', bidding games become cleaner and exhibit a more elegant mathematical structure. We ask: Does this phenomenon also hold for all-pay bidding, where reachability games are highly complex? Would infinite-duration all-pay bidding games reveal a clean mathematical structure like their first-price counterparts? We answer both of these questions positively. 

In this section, we survey our most technically-challenging contribution in which we portray a complete picture for mean-payoff all-pay Richman and poorman bidding played on strongly-connected graphs: we study both pure and mixed strategies and completely characterize the optimal and almost-sure (with probability $1$) payoffs the players can respectively guarantee. We draw corollaries of these results on qualitative objectives (Sec.~\ref{sec:qual}) and on computational complexity (Sec.~\ref{sec:compl}). In Sections~\ref{sec:AP-Rich} and~\ref{sec:AP-poor}, we respectively prove the results for mean-payoff all-pay Richman and poorman bidding.

Before we state our results, we need several definitions. Let $f$ and $g$ be two strategies for \Max and \Min, respectively. When both strategies are deterministic, together with an initial vertex, they give rise to a unique {\em play}, which we denote $\play(f,g)$, where for ease of notation we omit the initial vertex since it usually does not play a role in our results. Roughly, we obtain $\play(f,g)$ inductively. Suppose a finite play $\pi$ that ends in $v$ is defined. Then, we feed $\pi$ into $f$ and $g$, to obtain actions $\zug{b_i, u_i}$, for each $i \in \set{1,2}$, where $b_i$ is a legal bid and $u_i$ is a neighbor of $v$. Then, if $b_1 > b_2$, the token moves to $u_1$ and otherwise it moves to $u_2$. When $f$ and $g$ are mixed, they give rise to a distribution over infinite plays, denoted $dist(f,g)$. Since we consider mixed strategies with continuous support, the definition requires us to define a probability space using a {\em cylinder construction}~\cite[Theorem 2.7.2]{AD00}, which is technical but standard and we do not present it here (see more details in Sec.~\ref{sec:martingale}).
% When $f$ and $g$ are clear from the context, we omit them and simply write $\mathbb{P}$ and $\mathbb{E}$ instead of $\mathbb{P}_{dist(f,g)}$ and expectation $\mathbb{E}_{dist(f,g)}$.

\begin{definition}
\label{def:MP-val}
{\bf (Mean-payoff value).} Consider a mean-payoff game $\G$ and a ratio $r$. 
\begin{itemize}[noitemsep,topsep=0pt]
\item The {\em sure-value} of $\G$ w.r.t. $r$, denoted $\sMP(\G,r)$, is $c \in \Real$ if with a ratio that exceeds $r$, \Max can deterministically guarantee a payoff of $c$: for every $\epsilon >0$ and no matter where the game starts, there is a deterministic \Max strategy $f$ s.t. for every deterministic \Min strategy $g$, we have $\payoff(\play(f,g)) > c-\epsilon$. And, \Max cannot do better: for every deterministic \Max strategy $f$, with a ratio that exceeds $1-r$, there is a deterministic \Min strategy $g$ that guarantees $\payoff(\play(f,g)) < c+\epsilon$.

\item The {\em almost-sure value} of $\G$ w.r.t. $r$, denoted $\asMP(\G, r)$, is $c \in \Real$ if for every $\epsilon >0$ and no matter where the game starts, when \Max's ratio exceeds $r$, he has a mixed strategy $f$ s.t. for every deterministic \Min strategy $g$, we have $\Pr_{\pi \sim dist(f,g)}[\payoff(\pi) > c-\epsilon]=1$, and dually, when \Min's ratio exceeds $1-r$, she has a mixed strategy $g$ s.t. for every deterministic \Max strategy $f$, we have $\Pr_{\pi \sim dist(f,g)}[\payoff(\pi) < c+\epsilon]=1$.
\end{itemize}
\end{definition}

%For $\gamma \in \set{\FP, \AP} \times \set{\text{Rich}, \text{Poor}, \text{Tax}}$, we sometimes use $\MP_\gamma(\G, r)$ to highlight to which bidding mechanism the value relates. 

\paragraph*{All-pay Richman bidding:}
A simple argument shows that deterministic strategies are ``useless'': for every \Max strategy, \Min has a strategy that wins all but a constant number of biddings. For example, in $\lolli$, no matter what the initial ratio is, \Max cannot deterministically guarantee any positive payoff. On the positive side, we show that with mixed strategies first-price and all-pay Richman bidding coincide. For example, in $\lolli$, with any positive initial ratio, \Max can guarantee an almost-sure payoff of $0.5$. We prove the following result on mean-payoff all-pay Richman games.

%To formally define ``useless'', we need to refer to the minimal payoff that can be attained in a mean-payoff game $\G$, denoted $\minMP(\G)$. Let $\cycles(\G)$ denote the set of simple cycles in $\G$. Then, we define $\minMP(\G) = \min_{\eta \in \cycles(\G)} \energy(\eta)/|\eta|$. Intuitively, if \Min was to win all biddings, she would draw the game to a cycle obtaining the minimum in the expression and repeat the cycle indefinitely. The payoff would then be $\minMP(\G)$. 

\begin{theorem}
\label{thm:AP-Rich}
Consider a strongly-connected mean-payoff all-pay Richman bidding game $\G$. For every ratio $r \in (0,1)$, we have:
\begin{itemize}[noitemsep,topsep=0pt]
\item Deterministic strategies: $\sMP(\G, r) = \MP\big(\RT(\G, 0)\big)$.
\item Mixed strategies: $\asMP(\G, r) = \MP\big(\RT(\G, 0.5)\big)$.
\end{itemize}
\end{theorem}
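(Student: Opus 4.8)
I would first record the clean budget dynamics of all-pay Richman bidding: because both players pay and payments go to the opponent, after a bidding with bids $b_1$ (\Max) and $b_2$ (\Min) the budgets become $B_1 - b_1 + b_2$ and $B_2 - b_2 + b_1$ \emph{regardless of who wins}; only the token move depends on who bid higher. Thus the total budget is conserved and the player who moves the token pays the other the difference of the bids. For the lower bound, note that $\MP\big(\RT(\G,0)\big)$ is the value of the game in which \Min moves at every step, i.e.\ the minimum cycle mean of $\G$; since every infinite path in a strongly-connected weighted graph has mean-payoff at least the minimum cycle mean, \emph{any} \Max strategy (say, bidding $0$ forever) already guarantees a payoff of at least $\MP\big(\RT(\G,0)\big)$. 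For the matching upper bound, fix a deterministic \Max strategy $f$; since $f$ is deterministic, \Min can predict \Max's bid $b_1$ at every reachable history. \Min matches it, bidding exactly $b_1$ whenever $b_1 \le B_2$: by the tie-breaking rule \Min wins the move and, as the transfer is $0$, both budgets are unchanged. Whenever $b_1 > B_2$, \Min bids $0$ and concedes the move, but then \Max pays \Min the full difference $b_1 > B_2$, so \Min's budget more than doubles. Budget conservation bounds the number of such concessions by a constant depending only on the initial ratio, after which \Min's budget permanently exceeds half the total, \Max can never outbid it again, and \Min wins every remaining bidding for free and steers the token onto a minimum-mean cycle. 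Hence $\sMP(\G,r)=\MP\big(\RT(\G,0)\big)$.

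\textbf{The mixed equality: set-up.} The target value is now the \emph{fair} random-turn value $\MP\big(\RT(\G,0.5)\big)$, and the point is that the Richman transfer is \emph{symmetric}: if both players draw their bids from a common distribution, each wins with probability $\tfrac12$ and the expected net transfer vanishes, so the induced process is a fair coin and the budget is a martingale --- this is exactly what pins the equivalence to $p=\tfrac12$ rather than to $p=r$ as under poorman bidding. I would fix optimal positional strategies and the potential (bias) function $\Pot$ of the fair random-turn game $\RT(\G,0.5)$, which satisfies a balance equation of the form
\[
\Pot(v) \;=\; w(v) + \tfrac12\bigl(\Pot(v^{+})+\Pot(v^{-})\bigr) - \MP\big(\RT(\G,0.5)\big),
\]
where $v^{+}$ and $v^{-}$ are the \Max- and \Min-optimal successors of $v$.

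\textbf{The mixed equality: strategy and martingale.} For the \Max direction I would let \Max play a budget-based mixed strategy that, at vertex $v$, moves to $v^{+}$ upon winning and randomizes its bid at a scale governed by the potential gap $\Pot(v^{+})-\Pot(v^{-})$ and by \Max's current budget. The crux is to fold the budget and the position into a single real-valued process, roughly
\[
\Psi_n \;=\; \phi(\text{Max's budget ratio at }\eta_n) + \Pot(\eta_n) + \energy(\eta^n) - n\cdot\MP\big(\RT(\G,0.5)\big),
\]
and to choose the bid distribution together with the bounded auxiliary function $\phi$ so that $\Psi_n$ is a \emph{submartingale no matter how the deterministic \Min bids}: when \Min underbids, \Max wins and the $\Pot$-term rises by the optimal gap; when \Min overbids to deny the move, the Richman refund raises the budget-term by just enough to compensate. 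Since $\phi$, $\Pot$ and the weights are bounded, $\Psi_n$ has bounded increments, so a martingale strong law of large numbers on the cylinder probability space of Sec.~\ref{sec:martingale} gives $\liminf_n \tfrac1n\Psi_n \ge 0$ almost surely; as the bounded terms $\phi$ and $\Pot$ wash out under the $\tfrac1n$ normalization, this yields $\liminf_n \tfrac1n\energy(\eta^n) \ge \MP\big(\RT(\G,0.5)\big)-\epsilon$ with probability $1$, for any positive initial ratio. The \Min direction is symmetric, and together they give $\asMP(\G,r)=\MP\big(\RT(\G,0.5)\big)$.

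\textbf{Main obstacle.} The deterministic half is elementary. The weight of the proof is in the mixed half, and specifically in engineering the randomized bids and the auxiliary function $\phi$ so that $\Psi_n$ is genuinely a submartingale against an adversarial, potential-aware \Min \emph{while keeping \Max's bids legal}, i.e.\ while preventing \Max's budget from being driven to $0$. These two demands are in tension --- forcing wins spends budget, whereas the Richman transfer only refunds \Max when it loses --- so the bid distribution must simultaneously secure enough winning moves and enough compensating refunds at every budget level. Making this balance exact, and then importing an almost-sure convergence theorem over a continuous-support probability space, is the intricate core that the introduction anticipates.
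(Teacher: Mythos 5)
Your deterministic half is correct and coincides with the paper's own proof (Lemma~\ref{lem:negative-Richman}): the lower bound is the minimum-cycle-mean observation, and the upper bound is exactly the predict-and-match strategy for \Min, whose rare losses more than double her budget and are therefore bounded in number.

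The mixed half follows the paper's general architecture (potentials and strengths of a random-turn game, budget-based randomized bids, a submartingale argument via an Azuma--Hoeffding-type bound on the cylinder space, symmetry for \Min), but its central object cannot exist: no \emph{bounded} $\phi$ can make $\Psi_n$ a submartingale against every deterministic \Min. Take $\G=\lolli$ and let \Min always bid the supremum $\beta$ of \Max's current bid support. Then \Min wins every bidding (ties go to \Min), and in every round the positional part of $\Psi_n$, namely $\Pot(\eta_{n+1})-\Pot(\eta_n)+w(\eta_n)-\MP\big(\RT(\lolli,0.5)\big)$, equals exactly $-\tfrac12$; so the submartingale property forces $\mathbb{E}[\Delta\phi]\geq\tfrac12$ in every such round. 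Legality bounds \Max's bids by his budget, so $\beta\leq B$ and \Max's budget at most doubles per round, while \Min can afford $y=\beta\leq B\leq 1-B$ as long as $B\leq\tfrac12$. Hence from any reachable budget $B\leq\tfrac12$, \Min can sustain this attack for at least $\log_2\frac{1}{2B}$ rounds, forcing the range of $\phi$ to be at least $\tfrac12\log_2\frac{1}{2B}$. A bounded $\phi$ therefore requires the strategy to keep \Max's budget bounded away from $0$; but \Max's budget decreases precisely when he wins, so this forces his bids to vanish as the budget approaches that bound, and the same overbidding attack then applies at the scale of the remaining slack, pushing the bound upward indefinitely --- a regress that is incompatible with the initial budget $r<1$ being reachable. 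In short, $\phi$ must behave like $\log B$, which is unbounded below on $(0,1]$, and this breaks both your bounded-increment hypothesis for the concentration step and your ``wash-out'' step as stated (the latter is repairable, since $\log B\leq 0$ can be discarded with the correct sign, but that is no longer washing out).

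This is exactly the point where the paper's proof is engineered differently, and the difference is essential rather than cosmetic. The paper never puts the budget inside the submartingale: it maintains the \emph{pathwise} invariant $B(\pi^n)\geq(1+\alpha)^{\big(L(\pi^n)-c\cdot\investmax(\pi^n)+\gainmax(\pi^n)\big)/(2S_{\max})}$ of Claim~\ref{cl:AP-Rich-inv}, proved round by round with Bernoulli's inequality and the shift-function identity $(1-\alpha)=(1+\alpha)^{-c}$; only the bounded-increment luck $L_n$ is a submartingale (Claim~\ref{cl:LB-luck}, via Lemma~\ref{lemma:concentration}), and the unbounded logarithmic budget term is eliminated with the correct sign using $B\leq 1$. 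Note also that the identity $(1-\alpha)=(1+\alpha)^{-c}$ requires $c=\lambda(\alpha)>1$, i.e., a strict slack: the paper's strategy is built for $p=\frac{1}{2+\epsilon}$, not for $p=\tfrac12$, because the multiplicative budget dynamics leave a second-order loss that the slack must absorb; the value $\MP\big(\RT(\G,0.5)\big)$ is recovered only in the limit, using continuity of $p\mapsto\MP\big(\RT(\G,p)\big)$. Your sketch aims at $p=\tfrac12$ directly and offers no mechanism producing the ``$-\epsilon$'' that appears in its conclusion; some such slack-plus-continuity (or an equivalent device) is unavoidable.
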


\paragraph*{All-pay poorman bidding:}
Given the results on all-pay Richman, it seems safe to guess that under all-pay poorman bidding, deterministic strategies are useless and that first-price and all-pay poorman coincide. Both guesses, however, turn out to be incorrect. Consider again the game $\lolli$ and suppose \Max's budget is $B = 0.75$ and \Min's budget is $C = 0.25$, thus the initial ratio is $0.75$. As a baseline, recall that under first-price poorman, the optimal payoff \Max can guarantee is $0.75$. 

First, deterministic strategies are {\em useful} in all-pay poorman bidding for the player who has the higher ratio. For example, in $\lolli$, with a ratio that exceeds $0.75$, \Max can deterministically guarantee a payoff of $2/3$. On the other hand, when $B \leq C$, a simple argument shows that deterministic strategies are useless.

The real surprise is with mixed strategies. Given a choice between all-pay and first-price poorman bidding, with a ratio of $0.75$, \Max strictly prefers all-pay bidding! In $\lolli$, he can guarantee an almost-sure payoff of $5/6$. This is tight; namely, with a ratio that exceeds $0.25$, \Min can guarantee an almost-sure payoff of $1/6$. Thus, when \Max's ratio is at most $0.5$, he would prefer first-price over all-pay poorman bidding. 

We prove the following result on mean-payoff all-pay poorman games. 

\begin{theorem}\label{thm:AP-poor}
Consider a strongly-connected mean-payoff all-pay poorman bidding game $\G$ and initial budgets $\init{B}$ for \Max and $\init{C}$ for \Min, thus the ratio is $r = \frac{\init{B}}{\init{B}+\init{C}}$.
\begin{itemize}[noitemsep,topsep=0pt]
\item Deterministic strategies: If $\init{B}>\init{C}$ then $\sMP(\G, r) \geq \MP\big(\RT(\G, 1-\frac{\init{C}}{\init{B}})\big)$,
and if $\init{B} \leq \init{C}$, then $\sMP(\G, r) = \MP\big(\RT(\G, 0)\big)$. 
\item Mixed strategies: If $\init{B}>\init{C}$ then $\asMP(\G, r) = \MP\big(\RT(\G, 1- \frac{\init{C}}{2\init{B}})\big)$,
and if $\init{B} \leq \init{C}$, then $\asMP(\G, r) =  \MP\big(\RT(\G, \frac{\init{B}}{2\init{C}})\big)$. 
\end{itemize}
\end{theorem}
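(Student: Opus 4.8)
The plan is to reduce both parts of Theorem~\ref{thm:AP-poor} to controlling the long-run frequency with which \Max wins biddings, and then to solve that frequency problem using the special budget dynamics of all-pay poorman bidding. First recall that $\MP\big(\RT(\G,p)\big)$ comes equipped with a potential $\Pot\colon V\to\Real$ and optimal positional strategies satisfying the mean-payoff optimality equation $\Pot(v)+\MP\big(\RT(\G,p)\big)=w(v)+p\cdot\Pot^+(v)+(1-p)\cdot\Pot^-(v)$, where $\Pot^+(v)$ and $\Pot^-(v)$ are the maximal and minimal potentials over the neighbors of $v$ and $D(v):=\Pot^+(v)-\Pot^-(v)\ge 0$. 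Writing $X_i=1$ when the token moves from $\eta_i$ to a $\Pot^+$-neighbor (and $X_i=0$ otherwise), telescoping the potential along a play $\eta$ yields the identity
\[
\tfrac{1}{n}\,\energy(\eta^n)-\MP\big(\RT(\G,p)\big)=\tfrac{1}{n}\sum_{i<n}(X_i-p)\,D(\eta_i)+O(1/n).
\]
Hence it suffices to design bidding strategies that make the $D$-weighted win-frequency $\tfrac1n\sum_{i<n}X_iD(\eta_i)$ at least $p\cdot\tfrac1n\sum_{i<n}D(\eta_i)$ up to $o(1)$, which forces $\payoff(\eta)\ge\MP\big(\RT(\G,p)\big)-\epsilon$. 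The second ingredient is the crucial feature of all-pay poorman bidding: since \emph{both} bids are deducted and budgets never grow, along any infinite play \Max's bids sum to at most $\init{B}$ and \Min's to at most $\init{C}$. The game therefore behaves like an \emph{infinite all-pay auction} in which each player distributes a fixed total resource across the biddings, and in which---because ties favor \Min---\Min can ``steal'' a bidding that \Max contests with bid $b$ by spending exactly $b$.

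For \emph{deterministic} strategies we argue as follows. When $\init{B}\le\init{C}$, against any deterministic \Max strategy \Min can predict \Max's bids and match them; matching costs \Min a total of at most \Max's total spend $\le\init{B}\le\init{C}$, so \Min wins every bidding and drives the payoff to $\MP\big(\RT(\G,0)\big)$, while the reverse inequality is the trivial minimum-mean-cycle bound, giving the stated equality. When $\init{B}>\init{C}$ we only need \Max's lower bound: \Max bids an amount proportional to $D(\eta_i)$, calibrated so that the $D$-weighted spend is spread as evenly as possible across a window of biddings. On such a window the ratio argument shows that \Min, with total resource $\init{C}$, can steal at most a $\tfrac{\init{C}}{\init{B}}$ fraction of the (evenly spread) $D$-weighted mass, so \Max secures a $D$-weighted win-frequency of at least $1-\tfrac{\init{C}}{\init{B}}$, i.e.\ $\payoff(\eta)\ge\MP\big(\RT(\G,1-\tfrac{\init{C}}{\init{B}})\big)$. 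The bookkeeping needed to make ``even spreading'' work over the infinite horizon is handled by playing in geometrically scaled phases and passing to the $\liminf$.

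For \emph{mixed} strategies the same skeleton applies, but \Max bids \emph{uniformly at random} on an interval whose length scales with $D(\eta_i)$ and his remaining budget. The point is that a uniform bid on $[0,h]$ has expected cost $h/2$, so for the same total budget \Max can afford to double the interval length; and since beating a uniform bid on $[0,h]$ with probability $q$ costs \Min exactly $qh$, doubling $h$ doubles \Min's cost per stolen bidding and halves the $D$-weighted fraction she can steal. A direct expectation computation then gives an expected $D$-weighted loss-frequency of at most $\tfrac{\init{C}}{2\init{B}}$ when $\init{B}>\init{C}$, and the symmetric underdog calculation gives the guarantee $\tfrac{\init{B}}{2\init{C}}$ when $\init{B}\le\init{C}$. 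To turn these expectations into the \emph{almost-sure} statements we upgrade the win-frequency bound via a supermartingale concentration argument on the cylinder probability space of Section~\ref{sec:martingale}. Because the construction is symmetric under swapping \Max and \Min, the minimizer obtains the matching upper bound by playing the analogous randomized strategy, which is why both mixed-strategy bounds are equalities in both budget regimes.

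The main obstacle I expect is the almost-sure concentration against an \emph{adaptive deterministic} adversary: the increments of the natural martingale tracking \Max's empirical $D$-weighted win-frequency are neither independent nor uniformly bounded, since the randomized bids shrink together with the remaining budget. Making this rigorous requires normalizing the martingale so that its increment variances are summable, so that Borel--Cantelli forces convergence of the empirical win-frequency with probability $1$, all carried out inside the cylinder construction. A secondary difficulty is purely the infinite-horizon bookkeeping for the deterministic even-spreading strategy; note that for the deterministic $\init{B}>\init{C}$ case only the inequality is claimed, so no matching \Min strategy is needed there, whereas in the mixed case the equalities force us to exhibit the role-swapped randomized \Min strategy and check that the two guarantees meet exactly at the claimed $p$.
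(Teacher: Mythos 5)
Your plan correctly reproduces the easier ingredients: the potential/telescoping step is the paper's Lemma~\ref{lem:magic} and Corollary~\ref{cor:magic}, the bid-matching argument for $\init{B}\leq\init{C}$ is the paper's Lemma~\ref{lem:pure-poorman-useless}, the factor-of-two heuristic for uniform bids is the right intuition, and the role-swap symmetry is indeed how the paper closes the mixed-strategy equalities. The gap is in the step that actually produces the constants $1-\frac{\init{C}}{\init{B}}$ and $1-\frac{\init{C}}{2\init{B}}$: your ``infinite all-pay auction with fixed total resources'' accounting is only sound over a finite horizon, and both of your constructions are defeated by a \Min who waits. Since in all-pay poorman bidding \Max pays every bid whether he wins or loses, his raw budget, and hence any bid scale derived from it, is non-increasing, and his bids are summable along every play. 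Against the deterministic ``evenly spread, geometrically scaled phases'' strategy, \Min bids $0$ until the first phase $K$ with $\sum_{k\geq K}B_k\leq\init{C}$ (where $B_k$ is the budget \Max allots to phase $k$; such $K$ exists since $\sum_k B_k\leq\init{B}<\infty$), and from then on matches \Max's bid at every step: she wins every remaining bidding (ties favor her) at total cost at most $\init{C}$, driving the payoff to $\MP\big(\RT(\G,0)\big)$ rather than $\MP\big(\RT(\G,1-\frac{\init{C}}{\init{B}})\big)$. The same waiting attack kills the mixed construction when the uniform interval $[0,\alpha B_i]$ scales with \Max's \emph{raw} remaining budget $B_i$: outbidding that interval at every step from time $T$ on costs \Min $\alpha\sum_{i\geq T}B_i$, which is almost surely finite with expectation roughly $2B_T$, while $B_T\to 0$ almost surely; striking once $B_T$ is small enough, \Min steals every subsequent bidding with probability bounded away from $0$, contradicting the claimed almost-sure guarantee. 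The root cause is that in raw-budget coordinates \Max has no mechanism for raising his bids in response to theft.

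The missing idea, which is the heart of the paper's proof, is a change of coordinates that creates exactly such a mechanism: Lemma~\ref{lem:asym} reduces the poorman game with $\init{B}>W\init{C}$ to a $W$-asymmetric game in which \Min's budget is pegged to $1$ and \Max's effective budget (essentially the ratio $B/C$) updates as $\norm{B}'=\norm{B}-\norm{x}+W\norm{y}$, i.e.\ it \emph{grows} by $W\norm{y}$ whenever \Min spends $\norm{y}$. This has two consequences your accounting cannot reproduce. First, \Max's bids, taken proportional to his effective budget, automatically rise during any stealing campaign by \Min, so the cost per stolen bidding never decays. Second, once the effective budget exceeds a fixed threshold, \Max \emph{deterministically} bids above \Min's entire unit budget and forces a win; this forcing rule is precisely what bounds the weighted number of biddings \Min can ever steal (Claims~\ref{claim:APPdMax_bound}, \ref{claim:APPpMax_bound} and~\ref{claim:APPpMin_bound}) in both the pure and the mixed constructions (Lemmas~\ref{lem:APPdMax}, \ref{lem:APPpMax} and~\ref{lem:APPpMin}). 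With these two mechanisms the paper runs multiplicative invariants such as $B(\pi)\geq\init{B}\cdot(1+\alpha)^{(L(\pi)-\diff(\pi))/(2WS_{\max})}$ and obtains your win-frequency bounds rigorously. Your final worry about concentration is, by contrast, a non-issue once the right coordinates are used: the paper's ``luck'' increments are normalized by the current effective budget, hence uniformly bounded, and the Azuma--Hoeffding-based Lemma~\ref{lemma:concentration} applies directly, with no need for variance-summability or Borel--Cantelli arguments.
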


\begin{figure}
\begin{center}
\begin{tabular}{|c|c|c|c|c|}
\hline
&\multicolumn{2}{c}{Richman}\vline&\multicolumn{2}{c}{poorman}\vline\\
\hline
First-price&\multicolumn{2}{c}{$\RT(\G, \frac{1}{2})$ \cite{AHC19}}\vline&\multicolumn{2}{c}{$\RT(\G, r)$ \cite{AHI18}}\vline\\
%&\multicolumn{2}{c}{$(\frac{1}{2})$ \cite{AHC19} }\vline &\multicolumn{2}{c}{$(\frac{3}{4})$ \cite{AHI18}} \vline \\
\hline
All-pay&Pure&Mixed&Pure&Mixed\\
 & $\RT(\G, 0)$ & $\RT(\G, \frac{1}{2})$ & $\RT(\G, \frac{2r-1}{r})$ & $\RT(\G, \frac{3r-1}{r})$ \\
%& $(0)$ & $(\frac{1}{2})$ & $(\frac{2}{3})$ &  $(\frac{5}{6})$ \\
\hline
 \end{tabular}
 \caption{For a strongly-connected mean-payoff game $\G$ and a ratio $r \in (0,1)$, the table summarizes the equivalences with random-turn games for the various bidding mechanisms and allowed strategies.}
\end{center}
\end{figure}

\subsection{Taxman bidding}
\label{sec:tax}
Taxman bidding span the spectrum between Richman and poorman bidding. It is parameterized by a constant $\tau \in [0,1]$, and when \PLi, for $i \in \set{1,2}$, wins a bidding with a bid $b$, he pays $\tau\cdot b$ to the bank and $(1-\tau)\cdot b$ to the other player. Thus, poorman bidding is $\tau=1$ and Richman bidding is $\tau=0$. Threshold ratios were shown to exist in reachability first-price taxman bidding games \cite{LLPSU99}. Mean-payoff first-price taxman bidding games were studied in \cite{AHZ19}, where the equivalence for first-price Richman and poorman bidding was unified: a mean-payoff game $\G$ with taxman parameter $\tau \in [0,1]$ and ratio $r \in (0,1)$ is equivalent to the random-turn game $\RT(\G,\frac{r+\tau\cdot (1-r)}{1+\tau})$.

Our proof for all-pay poorman can be extended to all-pay taxman bidding. Since the notation is already heavy, to ease the presentation we omit the general proof. The properties of a mean-payoff game $\G$ with taxman parameter $\tau \in [0,1]$ and budgets $X$ for  \Max and $Y$ for \Min are as follows. Let $\norm{X} = X + \tau Y$ and $\norm{Y} = Y + \tau X$. With deterministic strategies, when $X>Y$, we have $\sMP(\G, r) \geq \MP\big(\RT(\G, 1 - \frac{\norm{Y}}{\norm{X}})\big)$, and when $X \leq Y$, deterministic strategies are useless. For mixed strategies, when $X>Y$, we have $\asMP(\G, r) = \MP\big(\RT(\G, 1 - \frac{\norm{Y}}{2\norm{X}})\big)$ and when $X \leq Y$, we have $\sMP(\G, r) =  \MP\big(\RT(\G, \frac{\norm{X}}{2\norm{Y}})\big)$.

\subsection{Qualitative all-pay bidding games}
\label{sec:qual}
We focus on {\em parity} objectives, which are important, for example, since the problem of LTL synthesis reduces to solving a parity game \cite{PR89}. 
\begin{definition}{\bf (Parity objectives).}
A parity game is played on a graph $\zug{V, E, p}$, where $p: V \rightarrow \Nat$ is a parity function. \PO wins an infinite play iff the maximal index that is visited infinitely often is odd. 
\end{definition}

Under first-price bidding, parity bidding games reduce to reachability bidding games. The proof relies on a lemma shown in \cite{AHC19,AHI18,AHZ19} that in a strongly-connected parity taxman bidding game, one of the players deterministically wins with any positive initial ratio. Intuitively, when the highest parity index is odd, \PO wins with any positive initial budget since no matter how small (but positive) his initial budget is, he can draw the game to the vertex with the highest parity index. Below, we describe a corresponding result for parity all-pay bidding games.

\begin{theorem}
\label{thm:parity}
Consider a strongly-connected parity game with a highest odd parity index, a cycle with highest even parity, and an initial ratio $r \in (0,1)$ for \PO. 
\begin{itemize}[noitemsep,topsep=0pt]
\item Under all-pay Richman: for any $r$, \PO almost-surely wins and cannot surely win in $\G$.
\item Under all-pay poorman: \PO almost-surely wins with any $r$, and surely-wins only when $r > 0.5$.
\end{itemize}
\end{theorem}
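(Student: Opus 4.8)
The plan is to reduce the parity objective to a mean-payoff objective with $\{0,1\}$-valued weights and then invoke Theorems~\ref{thm:AP-Rich} and~\ref{thm:AP-poor}. Let $V^*$ be the set of vertices carrying the (odd) highest parity index, and let $K$ be a cycle whose highest parity is even; note that $K \cap V^* = \emptyset$, since $V^*$ carries the global maximum. I would define a mean-payoff game $\G_w$ on the same graph by putting weight $1$ on every vertex of $V^*$ and weight $0$ elsewhere, so that for any play $\eta$ the payoff in $\G_w$ is exactly the $\liminf$ long-run frequency of visits to $V^*$. The crucial point of this $\{0,1\}$-weighting is one-directional: a strictly positive payoff forces $V^*$, and hence the highest parity index, to be visited infinitely often, so \PO wins the parity game on $\eta$. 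This is precisely what makes the reduction work where the textbook exponential parity-to-mean-payoff weighting would not, since a mean-payoff value can be non-positive even when the opponent only manages to make the good vertex appear with frequency tending to $0$; we never need to infer parity from a non-positive payoff.

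The key lemma I would establish computes the sign of $\MP(\RT(\G_w, p))$. For $p=0$ the coin never favors \PO, so \PT controls every move, confines the token to $K$, and keeps the frequency of $V^*$ at $0$, giving $\MP(\RT(\G_w, 0)) = 0$. For $p>0$ I claim $\MP(\RT(\G_w, p)) > 0$: letting $D$ be the diameter of the (strongly connected) graph, from any vertex \PO can follow a shortest path toward $V^*$, so whenever he wins $D$ consecutive coin tosses — an event of probability at least $p^{D}>0$, regardless of \PT — the token reaches $V^*$. Hence from every vertex $V^*$ is hit within bounded expected time, the return time to $V^*$ has uniformly bounded expectation, and a standard hitting-time argument forces a strictly positive long-run frequency of $V^*$; this lower-bounds the mean-payoff value.

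Given the lemma, the positive (winning) statements follow by reading off the relevant bias $p$ from the two mean-payoff theorems. Under all-pay Richman with mixed strategies, $\asMP(\G_w, r) = \MP(\RT(\G_w, 0.5)) > 0$ for every $r \in (0,1)$, so \PO has a mixed strategy guaranteeing a positive payoff almost surely, i.e. visiting $V^*$ infinitely often, and thus winning the parity game almost surely. Under all-pay poorman the relevant biases $1 - \frac{\init{C}}{2\init{B}}$, $\frac{\init{B}}{2\init{C}}$ (mixed) and $1 - \frac{\init{C}}{\init{B}}$ (deterministic, case $\init{B} > \init{C}$) are all strictly positive, so the same reasoning gives an almost-sure win for every $r$, and the lower bound $\sMP(\G_w, r) \geq \MP(\RT(\G_w, 1-\frac{\init{C}}{\init{B}})) > 0$ gives a sure win whenever $r > 0.5$.

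It remains to prove the negative statements, and here the mean-payoff value is genuinely useless, so I would argue directly. Under all-pay Richman, for any \Max strategy \Min can win all but a constant number of biddings; once she wins every bidding she controls the token, routes it to $K$, and loops there forever, so the highest parity seen infinitely often is even and \PO loses — hence \PO cannot surely win for any $r$. The identical argument applies under all-pay poorman when $\init{B} \leq \init{C}$ (that is, $r \le 0.5$), where deterministic \Min strategies are again dominant, yielding the matching bound that \PO surely wins only when $r > 0.5$. I expect the main obstacle to be exactly this asymmetry between the two directions: the positive results drop out cleanly from the mean-payoff equivalence via the $\{0,1\}$-weighting, but the impossibility results must bypass mean-payoff entirely and exploit the fact that deterministic all-pay strategies eventually cede full control of the token to the opponent.
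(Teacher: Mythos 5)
Your proposal is correct and takes essentially the same route as the paper: the identical $\{0,1\}$-weighting on the vertices of highest (odd) parity, the same key lemma that $\MP\big(\RT(\G,p)\big)>0$ for every $p\in(0,1)$ in a strongly-connected non-negatively-weighted game (proved by the same ``win enough consecutive coin tosses with probability $p^{|V|-1}$'' argument), and the same invocation of Theorems~\ref{thm:AP-Rich} and~\ref{thm:AP-poor} to read off the positive statements. Your explicit direct treatment of the negative statements --- \Min winning all but constantly many biddings and then looping the even cycle --- is precisely what the paper's terse ``follows by combining'' step implicitly relies on, via its Lemmas~\ref{lem:negative-Richman} and~\ref{lem:pure-poorman-useless}, so this is an unpacking of the paper's argument rather than a departure from it.
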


In Sec.~\ref{sec:qual-full}, we prove Thm.~\ref{thm:parity} by reducing parity bidding games to mean-payoff bidding games and using Thms.~\ref{thm:AP-poor} and~\ref{thm:AP-Rich}. This proof technique applies also to first-price bidding games and significantly simplifies the previous techniques, which are based on reasoning on reachability bidding games. In parity first-price bidding games, the solution to general games follows from solutions to games played on SCCs and a solution to reachability bidding games. Thm.~\ref{thm:parity} gives one of these ingredients and the second, namely a solution to reachability all-pay bidding games, is yet to be solved.

\subsection{Computational complexity}
\label{sec:compl}
The computational complexity problem we are interested in is given a mean-payoff bidding game and a budget ratio, find the optimal sure or almost-sure value. The following theorem follows from the complexity of the corresponding problem in mean-payoff stochastic games, since random-turn games are a special case of stochastic games.
\begin{theorem}
Given a strongly-connected mean-payoff all-pay Richman or poorman bidding game $\G$ and an initial ratio $r \in (0,1)$, deciding whether the sure or almost-sure value in $\G$ w.r.t. $r$ is at least $0.5$ is in NP and coNP. Deciding whether \PO almost-surely or surely wins a strongly-connected parity all-pay Richman or poorman bidding game can be done in linear time.
\end{theorem}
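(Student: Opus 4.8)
The plan is to read the complexity off the random-turn characterizations of Theorems~\ref{thm:AP-Rich} and~\ref{thm:AP-poor}. For each mechanism and each rational ratio $r\in(0,1)$, both theorems express the relevant value as $\MP\big(\RT(\G,p)\big)$ for a parameter $p$ that is a ratio of linear functions of $r$---namely $p\in\{0,\tfrac12\}$ for Richman, and $p\in\{0,\tfrac{2r-1}{r},\tfrac{3r-1}{2r},\tfrac{r}{2(1-r)}\}$ for the four poorman cases---so $p$ is computable from the input in polynomial time. First I would therefore reduce, in polynomial time, the question ``is the sure/almost-sure value of $\G$ w.r.t.\ $r$ at least $0.5$?'' to the question ``is $\MP\big(\RT(\G,p)\big)\ge 0.5$?'' for the matching $p$.

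Next I would use that, by Definition~\ref{def:RT}, $\RT(\G,p)$ is a mean-payoff stochastic game~\cite{Con92}. Deciding whether the value of a mean-payoff stochastic game meets a rational threshold is in $\text{NP}\cap\text{coNP}$: both players have optimal positional strategies, so one guesses such a strategy for a single player and solves the resulting Markov decision process in polynomial time by linear programming. Composing the polynomial-time construction of $\RT(\G,p)$ with this membership places the bidding-game threshold problem in $\text{NP}\cap\text{coNP}$; in the degenerate cases $p=0$ the game is already a Markov decision process, so the value is computable in polynomial time. The one place needing care is the deterministic poorman case with $\init{B}>\init{C}$, where Theorem~\ref{thm:AP-poor} records only $\sMP(\G,r)\ge\MP\big(\RT(\G,1-\tfrac{\init{C}}{\init{B}})\big)$; to recast the threshold question as an exact query on $\RT(\G,p)$ one needs the matching \Min upper bound from the budget-based strategy of Section~\ref{sec:AP-poor}, and I expect this to be the main obstacle.

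For the parity claim I would invoke Theorem~\ref{thm:parity} through the parity-to-mean-payoff reduction of Section~\ref{sec:qual-full}. After that reduction the outcome is governed by two linear-time-checkable quantities: the parity of the maximal index $d$ appearing in the graph, and the positivity of the pure/mixed parameter $p$. If $p>0$ then \PO controls a positive fraction of the moves and, using strong connectivity, forces the $d$-vertex infinitely often, winning exactly when $d$ is odd; if $p=0$ all choices fall to \PT, who can retreat to an even-dominated cycle when one exists. This yields: \PO almost-surely wins iff $d$ is odd (any $r$, both mechanisms), and \PO surely wins iff $d$ is odd and, under poorman, $r>0.5$ (never under Richman, whose pure parameter is $0$). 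Each test---a single scan for $d$, standard linear-time SCC/cycle detection to witness the even cycle of Theorem~\ref{thm:parity}, and the comparison $r>0.5$---runs in linear time, giving the second claim.
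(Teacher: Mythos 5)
Your proposal is correct and follows essentially the same route as the paper, which justifies the theorem in one line: via Theorems~\ref{thm:AP-Rich} and~\ref{thm:AP-poor} the threshold question reduces in polynomial time to the value-threshold problem for the stochastic game $\RT(\G,p)$, which lies in NP$\,\cap\,$coNP, and via Theorem~\ref{thm:parity} the parity questions reduce to linear-time checks of the maximal parity index and of the ratio. The one obstacle you flag---needing the matching upper bound $\sMP(\G,r)\le\MP\big(\RT(\G,1-\tfrac{\init{C}}{\init{B}})\big)$ in the pure poorman case with $\init{B}>\init{C}$---is already supplied by the paper in Section~\ref{subsec:APPdMax} (Lemma~\ref{lemma:MP_APP_up} combined with the player-swap argument), so it is not a genuine gap.
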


We leave open the problem of improving the bounds. Since the upper bounds for mean-payoff games are derived from solving random-turn games, the problem is relevant and open also for first-price bidding. It is possible that solving random-turn games is in P and it is possible that it is as hard as solving general stochastic games, which is a long standing open problem.

\subsection{Related work}
All the results surveyed above highly depend on the fact that the players' bids can be arbitrarily small. This is a problematic assumption for practical applications. To address this limitation {\em discrete bidding games} were studied in \cite{DP10}, where the budgets are given in ``cents'' and the minimal positive bid is one cent. Their motivation came from recreational play like bidding chess \cite{BP09,LW18}. Discrete all-pay Richman bidding has been studied in \cite{MWX15} (we encourage the reader to try playing all-pay Richman tic-tac-toe online: \url{https://bit.ly/2WmOjHO}). While the issue of tie breaking does not play a key role in continuous bidding, it is important in discrete bidding \cite{AAH19}. Non-zero-sum first-price Richman games were studied in~\cite{MKT18}.% and used to reason about ongoing negotiations.

\subsection{Outline of the rest of the paper}
Richman bidding is technically easier than poorman bidding. We thus start by proving Thm.~\ref{thm:AP-Rich} for Richman bidding in Sec.~\ref{sec:AP-Rich} while presenting general techniques that will also be used in the proof of Thm.~\ref{thm:AP-poor} for poorman bidding.
Namely, in Sec.~\ref{sec:strength} we describe a framework for extending a solution to $\lolli$ to games played on strongly-connected graphs, which was developed for first-price bidding and extends to all-pay bidding. In Sec.~\ref{sec:martingale}, we survey notations and results from probability theory and martingale theory that will be relevant in the rest of the paper.
The proof of Thm.~\ref{thm:AP-poor} for poorman bidding is presented in  Sec.~\ref{sec:AP-poor}. In Sec.~\ref{sec:parityproof}, we present a proof of Thm.~\ref{thm:parity} on parity bidding games (both Richman and poorman). Finally, we close with a conclusion section (Sec.~\ref{sec:conc}).

\section{Mean-Payoff All-Pay Richman Games}
\label{sec:AP-Rich}
In this section we prove Thm.~\ref{thm:AP-Rich}. We start in Sec.~\ref{sec:richmandet} by proving that pure strategies are useless. In Sec.~\ref{sec:fpRichman} we revisit mean-payoff first-price Richman games and present a new construction of optimal strategies. This serves both as a warm-up for all-pay bidding with mixed strategies and the construction is of independent interest. We then turn to construct optimal mixed strategies.

\subsection{Deterministic strategies are useless}\label{sec:richmandet}
We prove the claim of Thm.~\ref{thm:AP-Rich} on pure strategies; namely, for a strongly-connected mean-payoff game $\G$, for every ratio $r$, we have $\sMP(\G, r) = \MP\big(\RT(\G, 0)\big)$. It suffices to show that for any initial budgets and given a strategy of \Max, \Min can counter it with a strategy that ensures winning all but a constant number of biddings. Thus as the underlying game graph is strongly-connected, given any deterministic \Max strategy \Min can eventually push the game to the cycle in $\G$ of minimal weight and keep looping the cycle. By doing this, \Min ensures the mean-payoff equal to $\MP\big(\RT(\G,0)\big)$. Hence, it remains to prove the following lemma.

\begin{lemma}
\label{lem:negative-Richman}
Let $\G$ be a strongly-connected all-pay Richman bidding game.
For any initial ratio $r \in (0,1)$ and a deterministic strategy of \Max,
\Min has a strategy that wins all but a constant number of biddings.
\end{lemma}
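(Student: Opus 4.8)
The plan is to exploit a special feature of all-pay Richman bidding. Since both players pay their bids but the payments flow to the opponent, the net effect of a bidding with bids $b_1$ (Max) and $b_2$ (Min) is that Min's budget changes by exactly $b_1 - b_2$ and Max's by $-(b_1-b_2)$, \emph{regardless of who wins}; in particular the total budget $T = B + C$ is invariant. Consequently Max wins a bidding (i.e.\ $b_1 > b_2$) if and only if Min's budget strictly increases at that step, while Min, who wins ties as Player~$2$, wins whenever her budget does not increase. This reframes the combinatorial claim about biddings as a statement about the monotone behaviour of Min's budget, which is bounded by the invariant $T$.

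Because the lemma fixes a deterministic Max strategy $f$, Min may simulate in advance the unique play induced by $f$ and her own deterministic strategy, and hence treat Max's current bid $b_1$ as known before she bids. I would have Min play the following budget-based strategy, writing $C$ for her current budget: if $b_1 \le C$, bid $b_2 = b_1$, producing a tie she wins for free, with budgets unchanged; if $b_1 > C$, bid $b_2 = 0$, conceding the move but collecting all of Max's bid, so her budget becomes $C + b_1$. Note Min genuinely cannot win when $b_1 > C$, since every legal bid of hers is at most $C < b_1$; given that she must lose, bidding $0$ maximizes her gain.

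The payoff of bidding $0$ is that whenever Max wins we have $b_1 > C$, so Min's budget at least \emph{doubles}, $C \mapsto C + b_1 > 2C$. As her budget can never exceed $T$, this can happen at most $\log_2(T/\init{C})$ times, a constant depending only on the initial ratio $r$ (indeed $\init{C} = (1-r)T$). Every remaining bidding is won by Min through matching. Hence Min wins all but at most $\lceil \log_2 \tfrac{1}{1-r}\rceil$ biddings, proving the lemma; plugging this into the surrounding argument, Min steers the token onto a minimal-weight cycle and loops it forever, and the constantly many biddings lost to Max perturb only a negligible prefix, leaving the $\liminf$-average payoff equal to $\MP\big(\RT(\G,0)\big)$.

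The subtlety worth flagging — and the reason the naive ``always match'' strategy fails — is that matching alone does not bound Max's wins: Max could overbid Min's budget by vanishing margins indefinitely, winning infinitely often while the transferred amounts merely sum to a finite total and Min's budget creeps up to $T/2$ without ever catching up. The essential idea is therefore to make each forced loss \emph{expensive} for Max by bidding $0$, converting ``Max wins a bidding'' into ``Min's budget doubles'' and thereby capping the number of Max wins logarithmically. I expect this doubling observation to be the one non-routine step; the budget bookkeeping and the transfer to mean-payoff are then straightforward.
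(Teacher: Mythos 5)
Your proposal is correct and follows essentially the same route as the paper's proof: \Min exploits \Max's determinism to learn his bid $b$ in advance, matches it (winning the tie with budgets unchanged) when $b \le C$, and bids $0$ otherwise, so that each lost bidding strictly increases her budget, which is bounded by the invariant total. The only difference is cosmetic: you count losses via the doubling argument (giving the sharper bound $\lceil \log_2 \frac{1}{1-r}\rceil$), whereas the paper observes that each loss adds more than $\init{C}$ to \Min's budget and settles for the linear bound $\lceil \init{B}/\init{C}\rceil + 1$; both are constants, which is all the lemma needs.
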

\begin{proof}
Let $\init{B}$ and $\init{C}$ respectively denote \Max and \Min's initial budgets. Suppose \Max plays according to some pure strategy. Let $C$ be \Min's budget prior to a bidding. Suppose \Max bids $b$. Knowing \Max's bid, \Min bids as follows. If $b > C$, \Min bids $0$ and otherwise she bids $b$. \Min's strategy is clearly legal. Recall that \Min wins ties. Thus, every time she wins a bidding, the budgets are unchanged. The only biddings that she loses are the ones in which \Max bids more than $C$. But this can happen at most $\lceil \init{B}/\init{C} \rceil+1$ times.
\end{proof}

\subsection{Warm up; Revisiting mean-payoff first-price Richman games}\label{sec:fpRichman}
Constructions of optimal strategies in mean-payoff first-price Richman games were shown in \cite{AHC19,AHZ19}. The construction we show here is significantly simpler. Moreover, it is the first budget-based strategy (the bids depend only on the current vertex and budget), which will be crucial later in all-pay bidding.

Our constructions throughout the paper are based on the {\em shift function} $\lambda:(0,1) \rightarrow (1,+\infty)$, which is defined as $\lambda(x) = -\frac{\log(1-x)}{\log(1+x)}$.

\stam{
\begin{definition}
\label{def:shift-function}
{\bf (Shift function).} The \emph{shift function} $\lambda:(0,1) \rightarrow (1,+\infty)$ is defined as $\lambda(x) = -\frac{\log(1-x)}{\log(1+x)}$.
\end{definition}
}
%The proof of the following lemma can be found in App.~\ref{app:shift-function}.
\begin{lemma}
\label{lem:shift-function}
The shift function has the following properties:
\begin{itemize}[noitemsep,topsep=0pt]
\item For every $c \in (1,+\infty)$, there exists $\alpha \in (0,1)$ such that $\lambda(\alpha)=c$.
\item For $c \in (1,+\infty)$ and $c = \lambda(\alpha)$, we have $(1-\alpha) = (1+\alpha)^{-c}$.
\end{itemize}
\end{lemma}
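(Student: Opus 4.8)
The plan is to treat the two properties separately: the second is immediate algebra, while the first is a surjectivity statement that I would establish via continuity and the intermediate value theorem.

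For the second property I would simply unwind the definition. Given $c = \lambda(\alpha) = -\frac{\log(1-\alpha)}{\log(1+\alpha)}$, I multiply through by $\log(1+\alpha)$, which is strictly positive since $\alpha \in (0,1)$, to get $c\,\log(1+\alpha) = -\log(1-\alpha)$, i.e.\ $\log\big((1+\alpha)^{c}\big) = \log\big((1-\alpha)^{-1}\big)$. Exponentiating yields $(1+\alpha)^{c} = (1-\alpha)^{-1}$, which rearranges to the desired identity $(1-\alpha) = (1+\alpha)^{-c}$. No analysis is needed here; it is just a rewriting of the defining equation.

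For the first property I would first note that $\lambda$ is continuous on $(0,1)$, being a quotient of continuous functions whose denominator $\log(1+x)$ is strictly positive throughout $(0,1)$. I would then compute the two boundary limits. As $x \to 1^{-}$, the numerator $-\log(1-x) \to +\infty$ while the denominator $\log(1+x) \to \log 2 > 0$, so $\lambda(x) \to +\infty$. As $x \to 0^{+}$, both $-\log(1-x)$ and $\log(1+x)$ tend to $0$, and using the first-order expansions $-\log(1-x) = x + o(x)$ and $\log(1+x) = x + o(x)$ (equivalently, one application of L'H\^opital's rule) the ratio tends to $1$.

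With these limits in hand the existence claim follows from the intermediate value theorem: fixing any $c \in (1,+\infty)$, the limit at $0^{+}$ supplies a point where $\lambda < c$ and the limit at $1^{-}$ supplies a point where $\lambda > c$, so by continuity $\lambda$ attains the value $c$ at some $\alpha$ in between. I expect no genuine obstacle; the only mildly delicate point is the limit at $0^{+}$, which is a routine Taylor or L'H\^opital computation. If one wanted the stronger statement that $\alpha$ is \emph{unique}, it would suffice to check that $\lambda$ is strictly increasing (via its derivative), but since the lemma asks only for existence I would not pursue monotonicity.
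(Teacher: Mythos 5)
Your proof is correct and follows essentially the same route as the paper: continuity of $\lambda$ plus the boundary limits $\lim_{x\to 0^+}\lambda(x)=1$ (L'H\^opital) and $\lim_{x\to 1^-}\lambda(x)=+\infty$, combined with the intermediate value theorem for the first item, and direct algebraic rewriting of the defining equation for the second. You merely spell out the steps (e.g.\ exponentiating to get $(1+\alpha)^c=(1-\alpha)^{-1}$) that the paper leaves implicit.
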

\begin{proof}
The shift function is surjective since (1) $\lim_{x \rightarrow 0}\aux(x)=1$ (l'H\^opital rule), (2) $\lim_{x \rightarrow 1}\aux(x)=+\infty$, and (3) $\lambda$ is continuous as its denominator is strictly positive over the domain, and $\log$ is continuous.
\stam{
\begin{enumerate}[(a)]
  \item
  $\lim_{x \rightarrow 0}\aux(x)=1$ (l'H\^opital rule);
  \item
  $\lim_{x \rightarrow 1}\aux(x)=+\infty$;
  \item
  $\lambda$ is continuous as its denominator is strictly positive over the domain, and $\log$ is continuous.
\end{enumerate}
}
As a consequence, for every $y \in [1,\infty)$, there exists $x \in [0,1]$ such that $\lambda(x)=y$.
The second item is a direct consequence of the definition of the shift function.
\end{proof}

For ease of presentation, we illustrate the construction on the simple game $\lolli$, and it can easily be extended to general SCCs using the framework in the next section. 

%Previous constructions of optimal strategies for mean-payoff first-price Richman games can be found in \cite{AHC19,AHZ19}. Below we revisit mean-payoff \FPRich games and devise optimal budget-based strategies. 

\begin{proposition}
\label{prop:Richman-lolli}
In the mean-payoff game $\lolli$ (Fig.~\ref{fig:bowtie}), under first-price Richman bidding, for every initial ratio $r \in (0,1)$ and $\epsilon >0$, 
\Max has a deterministic budget-based strategy that guarantees a payoff of at least $0.5-\epsilon$,
thus the sure mean-payoff value of $\G$ is $\MP\big(\RT(\lolli, 0.5)\big) = 0.5$.
\end{proposition}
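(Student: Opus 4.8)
The plan is to exhibit a single budget-based bidding strategy for \Max, parameterized by a small constant $\alpha \in (0,1)$, and to show that it guarantees a payoff of $\frac{1}{1+\lambda(\alpha)}$, which tends to $0.5$ as $\alpha \to 0$ by Lemma~\ref{lem:shift-function}. Recall that in $\lolli$ the winner of each bidding places the token on his preferred vertex ($v_\Max$ of weight $1$ or $v_\Min$ of weight $0$), so that, up to an additive $O(1)$ term, the energy $\energy(\eta^n)$ equals the number of biddings \Max wins among the first $n$, and hence $\payoff$ is the $\liminf$ of the fraction of biddings won by \Max. It therefore suffices to force this fraction to be at least $\frac{1}{1+\lambda(\alpha)}$.

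The strategy is simple: at every vertex \Max bids exactly the fraction $\alpha$ of his current budget and, upon winning, moves the token to $v_\Max$. This is legal and budget-based, and it works for every positive initial ratio. The crux of the analysis is to read the logarithm of \Max's budget as a ledger of his wins and losses. Writing $T$ for the (conserved) total budget and $B_n$ for \Max's budget after $n$ biddings, each bidding that \Max wins multiplies his budget by exactly $(1-\alpha)$, while each bidding he loses increases it by at least $\alpha B$ — in first-price Richman the winning bidder \Min pays her bid to \Max, and this bid must be at least \Max's bid $\alpha B$ to outbid him — hence multiplies it by at least $(1+\alpha)$. Letting $W_n$ and $L_n = n - W_n$ denote the numbers of \Max's wins and losses, this yields $\log B_n \geq \log B_0 + W_n \log(1-\alpha) + L_n \log(1+\alpha)$.

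The key insight is that the global constraint $B_n \leq T$ caps how far \Max can fall behind. Combining $\log B_n \leq \log T$ with the inequality above and dividing by $\log(1+\alpha) > 0$ gives $L_n \leq \lambda(\alpha)\,W_n + \log(T/B_0)/\log(1+\alpha)$, where the shift function $\lambda(\alpha) = -\log(1-\alpha)/\log(1+\alpha)$ appears precisely. Substituting $L_n = n - W_n$ and rearranging bounds the win fraction by $W_n/n \geq \frac{1 - O(1/n)}{1+\lambda(\alpha)}$, so $\payoff \geq \frac{1}{1+\lambda(\alpha)}$. Since $\lim_{x\to 0}\lambda(x) = 1$, choosing $\alpha$ small enough makes this at least $0.5-\epsilon$. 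The matching bound, giving $\sMP(\lolli,r) = \MP\big(\RT(\lolli,0.5)\big) = 0.5$, follows from the symmetry of $\lolli$ under swapping the two players (and the weights $1 \leftrightarrow 0$): the same strategy lets \Min keep the token at $v_\Min$ a fraction at least $0.5-\epsilon$ of the time, capping the payoff at $0.5+\epsilon$, and $\MP\big(\RT(\lolli,0.5)\big)$ was already computed to be $0.5$.

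I expect the main obstacle to be the bookkeeping that turns this combinatorial win-count into a genuine mean-payoff bound: justifying the $O(1)$ slack (the at-most-constant discrepancy between $\energy(\eta^n)$ and the win-count coming from the local structure of $\lolli$, together with the bounded number of biddings near the boundary where \Min cannot afford to outbid), and verifying that the $\liminf$ passes correctly through these error terms so that the guarantee holds against \emph{every} \Min strategy rather than only on average.
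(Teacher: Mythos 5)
Your proposal is correct and is essentially the paper's own argument: the same budget-based strategy (bid the fixed fraction $\alpha$ of the current budget, move to $v_\Max$ on winning), the same multiplicative accounting (a win scales the budget by exactly $1-\alpha$, a loss by at least $1+\alpha$), the same use of the shift function via $(1-\alpha)=(1+\alpha)^{-\lambda(\alpha)}$, and the same cap on the total budget to force the win/loss balance. Your log-ledger inequality $\log B_n \geq \log B_0 + W_n\log(1-\alpha) + L_n\log(1+\alpha)$ is precisely the paper's invariant $B \geq (1+\alpha)^{-k-k_0}$ (stated there for energy $k$ under weights renormalized to $1+\epsilon$ and $-1$) after taking logarithms, so the two proofs differ only in presentation.
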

\begin{proof}
Let $\epsilon >0$ and let $\init{B}>0$ be \Max's initial budget. We show that \Max can guarantee a payoff of at least $\frac{1}{2+\epsilon}$. We re-normalize the weights to be $w(v_\Max) = c = 1+\epsilon$ and $w(v_\Min) = -1$. Recall that the energy of a finite play is the sum of the weights it traverses. The following observation is a direct consequence of the definition of payoff. 

\noindent{\bf Observation:} Suppose \Max plays according to a strategy that guarantees that the energy is bounded from below by a constant. Then, the payoff with the updated weights is non-negative, and the payoff with the original weights is at least $\frac{1}{2+\epsilon}$.

Let $\alpha$ such that $\alpha = \lambda(c)$ (see Lem.~\ref{lem:shift-function}). Let $k_0 \in \Nat$ be the initial energy. We devise a strategy of \Max that maintains the invariant that when the energy is $k \in \Nat$, his budget exceeds $(1+\alpha)^{-k-k_0}$. The invariant implies $k > -k_0$. Indeed, recall that the sum of budgets in Richman bidding is $1$. Thus, $k=k_0$ is impossible since the invariant would imply that \Max's budget exceeds $(1+\alpha)^{0} = 1$. The observation above implies that the strategy guarantees a payoff of at least $\frac{1}{2+\epsilon}$, as required. 

We turn to construct \Max's strategy. We choose $k_0\in\mathbb{N}$ such that $\init{B} = B+\delta$, where $B = (1+\alpha)^{-k_0}$ and $\delta >0$. This is possible since $\lim_{k \to \infty} (1+\alpha)^{-k} =0$. We call $\delta$ the ``spare change'', and it is never used for bidding. We refer to $B$ as \Max's main budget. \Max's strategy bids as follows: when \Max's main budget is $B$, he bids $\alpha \cdot B$. Note that the strategy is budget based since the bid depends only on the budget. 

We prove by induction that by following this strategy \Max maintains the invariant that when the energy is $k$, his main budget is at least $(1+\alpha)^{-k-k_0}$. Initially, the invariant holds by our choice of $k_0$. For the inductive step, we distinguish between the two outcomes of a bidding. If \Max loses, the energy decreases to $k-1$. Moreover, \Min overbids \Max, thus \Max's new main budget $B'$ is at least $B + \alpha B \geq \frac{1}{(1+\alpha)^{k+k_0}} + \frac{\alpha}{(1+\alpha)^{k+k_0}} = (1+\alpha)^{-(k-1)-k_0}$. On the other hand, if \Max wins, the energy increases to $k+c$ and his new main budget $B'$ is at least $B - \alpha B = \frac{1 - \alpha}{(1+\alpha)^{k+k_0}}$. Since $(1-\alpha) = (1+\alpha)^{-c}$ (see Lem.~\ref{lem:shift-function}), we obtain $B' = (1+\alpha)^{-(k+c)-k_0}$, and we are done.
\end{proof}

\subsection{A framework for solving SCCs}
\label{sec:strength}
In this section, we describe a framework that was developed in \cite{AHC19,AHI18,AHZ19} for any first-price bidding mechanism and intuitively extends a solution to $\lolli$ to general SCCs. The framework extends from first-price to all-pay bidding and we rely on it throughout the rest of the paper. 

Intuitively, in Prop.~\ref{prop:Richman-lolli}, in order to bound the energy from below, we bid in such a way that bounds the difference between \Max bidding wins and loses in a finite play. In $\lolli$, the bound on the difference of wins translates immediately to a bound on the energy, and thus to a guarantee on the payoff. In general SCCs, vertices have different ``importance'', called {\em strength}, and each bid is ``scaled'' according to its strength. The strengths are chosen in such a way that in each finite path, bounding the difference of bidding wins and losses implies a bound on the accumulated energy.

The definition of strengths relies on {\em potentials}, which were originally defined in the context of the strategy iteration algorithm \cite{How60}. Let $\G$ be a strongly-connected mean-payoff game and $p \in (0,1)$. It is well-known that optimal {\em positional} strategies exist in mean-payoff stochastic games \cite{Put05}; namely, strategies in which moves depend only on the current position. Consider such optimal strategies $\sigma_\Max$ and $\sigma_\Min$ for the two players. For every vertex $v$, we denote $v^+ = \sigma_\Max(v)$ and $v^- = \sigma_\Min(v)$. Intuitively, when \Max and \Min win a bidding in $v$, they should move to $v^+$ and $v^-$, respectively. We denote the potential of a vertex $v$ by $\Pot_p(v)$ and the strength of $v$ by $\St_p(v)$, and we define them as solutions to the following equations. The potential equation roughly coincides with the equation to compute the expected energy in a path to a target. 

\begin{equation*}
\begin{split}
&\Pot_p(v) = p \cdot \Pot_p(v^+) + (1-p) \cdot \Pot_p(v^-) + w(v) - \MP(\RT(\G,p))\\
&\St_p(v) = p\cdot (1-p) \cdot \big(\Pot_p(v^+) - \Pot_p(v^-)\big)
\end{split}
\end{equation*}

%We suggest an intuitive way to read the definition of potentials. Consider a weighted Markov chain in which each vertex $v$ has two neighbors and the probability of proceeding from $v$ to $v^+$ and $v^-$ is respectively $p$ and $1-p$. Then, the potential of $v$ roughly coincides with the expression to compute the expected energy in a path from $v$ to the target.
Note that $\St(v) \geq 0$, for every $v \in V$.
We denote the maximal strength by $S_{\max} = \max_{v \in V} \St_p(v)$
and we assume $S_{\max} > 0$ otherwise the game is trivial as all weights are equal.

Consider a finite path $\eta = v_1,\ldots, v_n$ in $\G$.
We intuitively think of $\eta$ as a play,
where for every $1 \leq i < n$,
the bid of \Max in $v_i$ is $\St(v_i)$
and he moves to $v_i^+$ upon winning.
Thus, when $v_{i+1} = v_i^+$, we think of \Max as {\em investing} $\St_p(v_i)$
and when $v_{i+1} \neq v_i^+$, we think of \Min winning the bid
thus \Max\ {\em gains} $\St_p(v_i)$.
We denote by $\investmax(\eta)$ and $\gainmax(\eta)$ the sum of investments and gains,
respectively. The difference between \Max's wins and loses in $\eta$ is then $\investmax(\eta) - \gainmax(\eta)$. 
Note that $\investmax(\eta)$ and $\gainmax(\eta)$ are defined w.r.t $\RT(\G, p)$ and $p$ will be clear from the context.
Recall that the energy of $\eta$ is the sum of weights it traverses.
The following lemma connects the energy, potentials, and strengths.

\begin{lemma}
\label{lem:magic}
\cite{AHC19,AHI18,AHZ19}
Consider a strongly-connected game $\G$, and $p = \frac{\nu}{\nu+\mu} \in (0,1)$, thus \Max's budget is $\nu$ and \Min's budget is $\mu$. For every finite path $\eta$ from $v$ to $u$, we have \[\Pot_p(v) - \Pot_p(u) + (n-1)\cdot\MP\big(\RT(\G,p)\big) \leq \energy(\eta) + \frac{\nu+\mu}{\nu\mu}\cdot\big( \gainmax(\eta) \cdot \nu - \investmax(\eta)\cdot \mu \big).\]
\end{lemma}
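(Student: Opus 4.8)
The plan is to prove the inequality by a telescoping argument over the vertices of $\eta = v_1, \ldots, v_n$, exploiting the Bellman optimality conditions satisfied by the optimal positional strategies $\sigma_\Max$ and $\sigma_\Min$. The crucial fact I would extract first is that, because these strategies are optimal in the mean-payoff stochastic game, the potentials realize the extremal values over neighbors: $\Pot_p(v^+) = \max_z \Pot_p(z)$ and $\Pot_p(v^-) = \min_z \Pot_p(z)$, where $z$ ranges over the neighbors of $v$. In particular, for \emph{every} neighbor $z$ of $v$ we have $\Pot_p(z) \geq \Pot_p(v^-)$; this is the only place where optimality (as opposed to the mere defining recurrence) is needed.

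Next I would analyze a single step $v_i \to v_{i+1}$ and bound the potential drop $\Pot_p(v_i) - \Pot_p(v_{i+1})$, splitting into the two cases that define $\investmax$ and $\gainmax$. If $v_{i+1} = v_i^+$ (an investment step), substituting the potential equation and using $\St_p(v_i) = p(1-p)\big(\Pot_p(v_i^+) - \Pot_p(v_i^-)\big)$ gives the exact identity
\[
\Pot_p(v_i) - \Pot_p(v_{i+1}) = w(v_i) - \MP\big(\RT(\G,p)\big) - \tfrac{1}{p}\St_p(v_i).
\]
If instead $v_{i+1} \neq v_i^+$ (a gain step), I would first replace $\Pot_p(v_{i+1})$ by $\Pot_p(v_i^-)$ using minimality, which only increases the upper bound on the drop, and the same substitution then yields
\[
\Pot_p(v_i) - \Pot_p(v_{i+1}) \leq w(v_i) - \MP\big(\RT(\G,p)\big) + \tfrac{1}{1-p}\St_p(v_i).
\]

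Summing these $n-1$ bounds, the left-hand side telescopes to $\Pot_p(v) - \Pot_p(u)$, the weight terms add up to $\energy(\eta)$, the constant terms give $-(n-1)\,\MP\big(\RT(\G,p)\big)$, and the strength terms collect into $-\tfrac{1}{p}\investmax(\eta) + \tfrac{1}{1-p}\gainmax(\eta)$. Finally I would substitute $p = \tfrac{\nu}{\nu+\mu}$, so that $\tfrac{1}{p} = \tfrac{\nu+\mu}{\nu}$ and $\tfrac{1}{1-p} = \tfrac{\nu+\mu}{\mu}$; a short common-denominator computation turns the strength terms into exactly $\tfrac{\nu+\mu}{\nu\mu}\big(\gainmax(\eta)\cdot\nu - \investmax(\eta)\cdot\mu\big)$, and rearranging the $(n-1)\,\MP\big(\RT(\G,p)\big)$ term to the left gives the claimed inequality.

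The main obstacle I anticipate is the gain step: the inequality direction is the delicate point, since I must ensure that replacing $v_{i+1}$ by $v_i^-$ \emph{weakens} rather than strengthens the bound, and this hinges precisely on $v_i^-$ minimizing the potential over all neighbors — a property I must justify from the optimality of $\sigma_\Min$ and not merely from the potential recurrence. The investment step, by contrast, is an exact identity requiring no optimality beyond the recurrence itself, and everything remaining is bookkeeping together with the algebraic passage from $p$ to the budgets $\nu$ and $\mu$.
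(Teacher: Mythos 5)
Your proposal is correct, and it is essentially the argument behind this lemma in the literature: the paper itself does not prove it but imports it from \cite{AHC19,AHI18,AHZ19}, where the proof is exactly this telescoping over the path, with the investment step handled as an identity via the potential recurrence and the gain step bounded using the fact that $\Pot_p(v^-)$ minimizes the potential over the neighbors of $v$. The one ingredient you flag as needing justification --- that optimal positional strategies are potential-greedy, i.e.\ $\Pot_p(v^-) \leq \Pot_p(z)$ for every neighbor $z$ --- is precisely the standard Bellman-optimality property of the potentials produced by strategy iteration for mean-payoff stochastic games (the context in which the paper defines $\Pot_p$, citing \cite{How60,Put05}), so your argument goes through as stated.
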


For example, suppose $\eta$ is a cycle, i.e., $u = v$, that $\nu=\mu=1$, and $\MP\big(\RT(\G,p)\big)=0$. Then, exactly as in $\lolli$, we have $\investmax(\eta)-\gainmax(\eta) \leq \energy(\eta)$ (and equality holds when \Min plays optimally). See \cite{AHI18,AHC19} for further examples.
%A simple instantiation of Lem.~\ref{lem:magic} is when $\eta$ is a cycle, from every vertex $v$, $\eta$ continues either to $v^+$ or $v^-$, the ratio is $0.5$, thus $\nu=\mu=1$, and $\MP\big(\RT(\G, 0.5)\big)=0$. The lemma then says that $\energy(\eta) = \investmax(\eta)-\gainmax(\eta)$. Thus, in order to bound the energy from below
We obtain the following corollary by dividing both sides by $n$, 
and letting $n$ tend towards infinity.

\begin{corollary}\label{cor:magic}
Consider a strongly-connected game $\G$, let $\eta$ be an infinite play and let $\mu,\nu \in \Real_{> 0}$.
For every $n \in \mathbb{N}$, let $\eta^n$ denote the prefix of $\eta$ of size $n$.
Then
\[
\payoff(\eta)
\geq
\MP(\RT(\G, p)) +
\frac{\mu + \nu}{\mu\nu} \cdot \liminf_{n\rightarrow \infty}\frac{\mu \cdot \investmax(\eta^n) -\nu \cdot \gainmax(\eta^n)}{n}.
\]
\end{corollary}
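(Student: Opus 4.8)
The plan is to apply Lemma~\ref{lem:magic} to each finite prefix $\eta^n$ of the infinite play $\eta$, rearrange it into a lower bound on the energy $\energy(\eta^n)$, divide through by $n$, and pass to the limit inferior. Writing $p = \frac{\nu}{\nu+\mu}$ and letting $v$ denote the initial vertex and $u_n$ the last vertex of $\eta^n$, Lemma~\ref{lem:magic} rearranges to
\[
\energy(\eta^n) \;\geq\; \Pot_p(v) - \Pot_p(u_n) + (n-1)\cdot\MP\big(\RT(\G,p)\big) + \frac{\nu+\mu}{\nu\mu}\cdot\big(\mu\cdot\investmax(\eta^n) - \nu\cdot\gainmax(\eta^n)\big),
\]
where I have moved the strength term of the lemma to the left and flipped its sign. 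Dividing by $n$ and recalling that $\payoff(\eta) = \liminf_{n\to\infty}\frac{1}{n}\energy(\eta^n)$ reduces the statement to controlling the limiting behaviour of the three summands on the right.

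The two ``error'' summands are handled by elementary estimates. Since $\G$ has finitely many vertices, $\Pot_p$ attains only finitely many values, so $|\Pot_p(v) - \Pot_p(u_n)|$ is bounded by a constant independent of $n$; hence $\frac{1}{n}\big(\Pot_p(v)-\Pot_p(u_n)\big) \to 0$. Likewise $\frac{n-1}{n}\cdot\MP\big(\RT(\G,p)\big) \to \MP\big(\RT(\G,p)\big)$. Thus, up to vanishing terms, the right-hand side equals $\MP(\RT(\G,p))$ plus $\frac{\nu+\mu}{\nu\mu}\cdot\frac{\mu\cdot\investmax(\eta^n)-\nu\cdot\gainmax(\eta^n)}{n}$.

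Finally I would take $\liminf_{n\to\infty}$ of both sides. Monotonicity of $\liminf$ preserves the inequality, and because the first two summands converge, their limit splits off from the liminf of the remaining term via the identity $\liminf_n(a_n+b_n)=\lim_n a_n+\liminf_n b_n$, valid whenever $(a_n)$ converges. Since $\frac{\nu+\mu}{\nu\mu}>0$, this positive constant may be pulled out of the liminf, yielding exactly
\[
\payoff(\eta) \;\geq\; \MP(\RT(\G,p)) + \frac{\mu+\nu}{\mu\nu}\cdot\liminf_{n\to\infty}\frac{\mu\cdot\investmax(\eta^n)-\nu\cdot\gainmax(\eta^n)}{n}.
\]
No step here is deep; the only points requiring care are the sign flip when rearranging the lemma and the observation that the potential difference is bounded, so that dividing by $n$ kills it, which is where finiteness of the vertex set enters. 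The main pitfall is the liminf bookkeeping: one must separate the convergent prefactor terms cleanly, rather than relying only on superadditivity of $\liminf$, to avoid losing a constant in the final bound.
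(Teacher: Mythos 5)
Your proof is correct and follows exactly the route the paper takes: the paper derives Corollary~\ref{cor:magic} from Lemma~\ref{lem:magic} with the one-line remark ``dividing both sides by $n$ and letting $n$ tend towards infinity,'' and your write-up simply fills in that computation. Your handling of the sign flip, the vanishing bounded potential term, and the $\liminf(a_n+b_n)=\lim a_n+\liminf b_n$ splitting is precisely the bookkeeping the paper leaves implicit.
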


\subsection{Mean-payoff all-pay Richman games with mixed strategies}\label{sec:mixedAPRich}
This section consists of the more technically challenging proof of Thm.~\ref{thm:AP-Rich}: we show that no matter the initial ratios, the optimal almost-sure payoff under all-pay Richman bidding equals  the optimal expected payoff in an un-biased random-turn game, thus all-pay and first-price Richman bidding coincide. We illustrate the ideas behind the construction in the following example.

\begin{example}
\label{ex:Richman}
We describe a simple \Max strategy for $\lolli$, which achieves an expected payoff of $0.25$; still not optimal, but better than any deterministic strategy can achieve. We start with the following observation. Suppose \Max chooses a bid uniformly at random from $\set{0, b}$, for some $b > 0$. We assume \Min wins ties. Thus, knowing \Max's strategy, \Min chooses between deterministically bidding $0$ or $b$. There are four possible outcomes (see Fig.~\ref{tab:Richman}). The ``bad'' outcomes for \Max are $\zug{0,0}$ and $\zug{b,b}$ since \Min wins without any budget penalty. The two other outcomes are ``good'' since they are similar to first-price Richman bidding: \Max pays $b$ for winning and gains $b$ when losing. To choose $b$, we rely on an optimal bidding strategy $f_\FP$ for first-price Richman bidding. As seen in Prop.~\ref{prop:Richman-lolli}, $f_\FP$ guarantees that in any finite play, \Max wins roughly half the biddings. Under all-pay Richman bidding, consider a finite play $\pi$ and let $\pi'$ be the restriction of $\pi$ to good bidding outcomes. We choose $b = f_\FP(\pi')$. Intuitively, we expect half the outcomes in a play to be good, out of these, $f_\FP$ guarantees that \Max wins half the biddings, for a total expected payoff of $0.25$. 

We minimize the probability of ending in a bad outcome by bidding uniformly at random in $[0, b]$. This opens a spectrum between good and bad outcomes: \Max is ``lucky'' if his bid is either just above \Min's bid or way below it. We show that lucky events cancel unlucky events, which we formally prove by defining a submartingale called {\em luck} that sums \Max's luck in a finite play. Finally, we note that it is technically not possible to define such a mixed bidding strategy when $f_\FP$ is not budget-based and the previous constructions in \cite{AHC19,AHZ19} are not budget based, hence the importance of the new proof of Prop.~\ref{prop:Richman-lolli}.\hfill$\triangleleft$
\end{example}

\begin{figure}[ht]
\center
\includegraphics[height=2cm]{table.pdf}
\caption{\small \Max's budget updates in four bidding outcomes under \APRich.}% assuming \Max bids according to rows.}
\label{tab:Richman}
\end{figure}

\begin{lemma}
\label{lem:MP-Rich}
Let $\G$ be a strongly-connected mean-payoff all-pay Richman bidding game.
For every initial ratio $r \in (0,1)$ of \Max and for every  $\epsilon > 0$,
\Max has a mixed budget-based strategy in $\G$
that guarantees almost-surely a payoff of at least $\MP(\RT(\G, \frac{1}{2 + \epsilon}))$.
\end{lemma}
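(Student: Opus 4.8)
The plan is to push the almost-sure guarantee through the strength/potential machinery of Section~\ref{sec:strength}, and to tame the randomness with a martingale argument in the spirit of Example~\ref{ex:Richman}. Fix the bias $p=\frac{1}{2+\epsilon}$, let $\Pot_p,\St_p$ and the optimal successors $v^+,v^-$ be as in the framework, and normalize so that the two \emph{virtual} first-price budgets are equal, $\mu=\nu$. With $\mu=\nu$, Corollary~\ref{cor:magic} reads
\[ \payoff(\eta)\ \geq\ \MP\big(\RT(\G,p)\big)\ +\ 2\,\liminf_{n\to\infty}\frac{\investmax(\eta^n)-\gainmax(\eta^n)}{n}. \]
Hence it suffices to construct a mixed budget-based \Max strategy under which, against \emph{every} deterministic \Min strategy, $\liminf_n \frac{\investmax(\eta^n)-\gainmax(\eta^n)}{n}\geq 0$ almost surely; this gives the claimed bound $\MP(\RT(\G,\frac{1}{2+\epsilon}))$, and in $\lolli$ recovers the value $\frac{1}{2+\epsilon}$.

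The strategy sits on top of the deterministic budget-based first-price strategy $f_\FP$ of Proposition~\ref{prop:Richman-lolli} (lifted to strongly-connected graphs by the framework): at a vertex $v$ with current budget $B$, let $b_v=f_\FP(v,B)$ be the first-price bid, and have \Max bid $U\sim\Unif[0,b_v]$, moving to $v^+$ whenever he wins. This is budget-based precisely because $f_\FP$ is, which is why the new, budget-based proof of Proposition~\ref{prop:Richman-lolli} was needed. The purpose of the randomization is the following coupling: if \Min (deterministically) bids $\beta$, then under all-pay Richman the budget changes by $\Delta B=\beta-U$ irrespective of who wins, \Max wins with probability $1-\beta/b_v$, and therefore $\mathbb{E}[\Delta B]=b_v\big(\tfrac12-\mathbb{P}[\text{\Max wins}]\big)$. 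Thus winning more than half of the (strength-scaled) biddings costs budget in expectation at exactly the first-price rate, winning fewer gains budget, and the budget-neutral win probability is $\tfrac12$ — this is the source of the bias $p\to\tfrac12$.

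The analysis then mirrors the deterministic invariant of Proposition~\ref{prop:Richman-lolli}, but in expectation. Following the example, I would define the \emph{luck} $L_n$ as the accumulated deviation between the realized step-wise change of $\log B$ and its conditional expectation, so that $L_n$ is a martingale by construction; its increments are bounded because the uniform law moves $\log B$ by at most a fixed amount per step, and $B$ is kept bounded away from $0$ by reserving a spare change as in Proposition~\ref{prop:Richman-lolli}. The bid is designed so that, \emph{for every} \Min bid $\beta$, the conditional expectation of the log-budget increment dominates the first-price bound tying $\log B$ to $\investmax-\gainmax$; this yields along the play a relation of the form $\log B_n\geq \log\init{B}+\kappa\big(\gainmax(\eta^n)-\investmax(\eta^n)\big)+L_n$ for a constant $\kappa>0$.

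Finally, since the budget is bounded ($0<B_n\leq \mu+\nu$), the quantity $\kappa(\gainmax(\eta^n)-\investmax(\eta^n))+L_n$ is bounded above, and a strong law for bounded-increment (sub)martingales — equivalently Azuma--Hoeffding together with Borel--Cantelli over the cylinder probability space of Section~\ref{sec:martingale} — gives $\liminf_n L_n/n\geq 0$ almost surely. Consequently $\limsup_n \frac{\gainmax(\eta^n)-\investmax(\eta^n)}{n}\leq 0$, i.e.\ $\liminf_n \frac{\investmax(\eta^n)-\gainmax(\eta^n)}{n}\geq 0$ almost surely, and substituting into Corollary~\ref{cor:magic} finishes the proof. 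I expect the crux to be the third step: turning the per-step log-budget potential into a genuine sub/supermartingale \emph{against an adversarial} \Min, i.e.\ verifying the domination uniformly over all legal bids $\beta$ (including the degenerate regimes $\beta=0$ and $\beta\geq b_v$, where \Max wins with probability $1$ or $0$) while simultaneously keeping the increments bounded and the budget strictly positive so that all logarithms are defined.
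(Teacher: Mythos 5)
Your high-level architecture genuinely matches the paper's proof: uniform random bids on top of a budget-based deterministic bid (this is why the new proof of Prop.~\ref{prop:Richman-lolli} matters), a ``luck'' process correcting the pathwise budget evolution, Azuma--Hoeffding over the cylinder space, and the framework of Sec.~\ref{sec:strength} to convert a drift bound on $\investmax-\gainmax$ into a payoff bound. But there is a genuine gap in the accounting, and it sits exactly where you predicted the crux would be. The first problem is that your normalization $\mu=\nu$ is inconsistent with the bias $p=\frac{1}{2+\epsilon}$: Lemma~\ref{lem:magic} and Corollary~\ref{cor:magic} are only valid when $p=\frac{\nu}{\mu+\nu}$, so setting $\mu=\nu$ forces $p=\frac{1}{2}$; you cannot fix $p=\frac{1}{2+\epsilon}$ and independently declare the two weights equal, so the displayed inequality you derive from Corollary~\ref{cor:magic} is not an instance of it.

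The second, more substantive problem is that the symmetric pathwise invariant you posit, $\log B_n\geq \log \init{B}+\kappa\big(\gainmax(\eta^n)-\investmax(\eta^n)\big)+L_n$ with $L_n$ a bounded-increment martingale, is not achievable by this bidding scheme. Your expectation computation is correct but it concerns $\mathbb{E}[\Delta B]$, which is additive; to obtain a pathwise inequality on $\log B$ one must absorb the concavity (Jensen/Bernoulli) loss, and that loss is paid by discounting \Max's wins relative to his losses. This is visible in the paper's invariant (Claim~\ref{cl:AP-Rich-inv}), which reads $B(\pi^n)\geq (1+\alpha)^{\frac{L(\pi^n)-c\cdot\investmax(\pi^n)+\gainmax(\pi^n)}{2S_{\max}}}$ with $c=1+\epsilon>1$: the win case of the induction uses the shift-function identity $(1-\alpha)=(1+\alpha)^{-c}$ of Lemma~\ref{lem:shift-function}, which has no solution with $\alpha>0$ when $c=1$, so insisting on symmetric weighting degenerates the bid parameter to $\alpha=0$. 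Correspondingly, the luck increments themselves must be scaled asymmetrically (the factor $c$ in the winning branch of the paper's $\Delta L_i$) for the submartingale property to survive an adversarial $y$. The asymmetry is then precisely the source of $\frac{1}{2+\epsilon}$: the paper plugs $\mu=c=1+\epsilon$, $\nu=1$ into Lemma~\ref{lem:magic}, giving $p=\frac{1}{1+c}=\frac{1}{2+\epsilon}$, whereas your symmetric bound, if it held, would directly yield the exact value $\MP(\RT(\G,\frac{1}{2}))$ --- a statement the paper can only recover in the limit $\epsilon\to 0$ using continuity of $\MP(\RT(\G,p))$ in $p$. To repair your argument, replace the target $\liminf_n\frac{\investmax-\gainmax}{n}\geq 0$ by $\liminf_n\frac{c\cdot\investmax-\gainmax}{n}\geq 0$ and carry the weights $(\mu,\nu)=(1+\epsilon,1)$ through Corollary~\ref{cor:magic}.
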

\begin{proof}
Let $c = 1+\epsilon$, and let $p = \frac{1}{2 + \epsilon} = \frac{1}{1+c}$. We fix two optimal positional strategies in the random-turn game $\RT(\G, \frac{1}{2 + \epsilon})$ for $\Max$ and $\Min$, and use them to define vertex strengths and neighboring vertices $v^+$ and $v^-$ for each vertex $v$ as in Sec.~\ref{sec:strength}.
%We find vertex strengths using $\RT(\G, p)$ as in Section~\ref{sec:strength}. 
%Let $f$ and $g$ be deterministic optimal strategies for Max and Min respectively in the random-turn game on $G$ with coin bias $\frac{1}{k+1}$. For each vertex $v\in V$, let $f(v)=v^+$ and $g(v)=v^-$, and define the potential function $Pot:V\rightarrow \mathbb{R}$ and the strength function $St:V\rightarrow \mathbb{R}_{\geq 0}$ w.r.t.~these two optimal strategies. Recall, potential function can be chosen so that $Pot(v^-)\leq Pot(v) \leq Pot(v^+)$ for each $v\in V$, thus allowing us to assume nonnegative strengths. Finally, let $S_{\max}=\max_{v\in V}St(v)$. From the defining relation of potentials and strengths, all strengths being equal to $0$ is equivalent to all weights in the graph being $0$, in which case the claim of the proposition is trivial. Hence, we may w.l.o.g.~assume that $S_{\max}>0$.
Let $\alpha \in (0,1)$ s.t. $\lambda(\alpha) = 1+\epsilon$ (see Lem.~\ref{lem:shift-function}). % \in (0,1)$ s.t. $1-\alpha=(1+\alpha)^{-c}$ (for each $c>1$ there is unique such $\alpha$). 
%Suppose that the budgets of Max and Min are normalized so they sum up to $1$. 
We define \Max's strategy $f$ in the bidding game $\G$ as follows:
\begin{itemize}[noitemsep,topsep=0pt]
	\item When the token is on vertex $v$ with strength $s = \St_p(v)$, and \Max's budget is $B$, \Max bids
	$x \sim \Unif [0,\alpha B\frac{s}{S_{\max}}]$.
	\item Upon winning, \Max moves the token to $v^+$.
\end{itemize}

The strategy $f$ is clearly budget-based, so in order to prove the lemma it suffices to show that, no matter which mixed strategy $g$ \Min chooses,
we have $\Pr_{\pi \sim dist(f,g)}[\lim \inf_{n \to \infty} \payoff(\pi) \geq \MP\big(\RT(\G, p)\big)] = 1$.

\stam{
Proofs in first-price bidding show an invariant between changes in energy and changes in budget (see Prop.~\ref{prop:Richman-lolli}).
For all-pay bidding, however, such an invariant is not possible.
Our invariant uses a new component, which we refer to as ``luck''.
Intuitively, a \Max bid of $x$ is ``unlucky'' when $x$ is much larger than $y$ or when $x$ is slightly smaller than $y$.
In the first case, \Max pays a lot for winning and in the second, \Min pays little for winning.
The top-left to bottom-right diagonal in Tab.~\ref{tab:Richman} take these scenarios to the extreme.
Dually, when $x$ is just above $y$ or way below $y$, \Max is ``lucky''.
See the other diagonal in the table.
The key idea of the proof is that on average, the unlucky and lucky cases cancel out. 
}

Fix a mixed strategy $g$ of $\Min$. Intuitively, consider the event in which \Max bids $x$ and \Min bids $y$. \Max's budget gain is $x-y$. \Max is ``lucky'' when $x-y$ is maximized, which happens either when $x$ is slightly above $y$ (then \Max pays little for winning) or when $x$ is way lower than $y$ (then \Max gains a lot when losing). We formalize luck below and later show that the expected luck is non-negative in each bidding. Assume Wlog that at each turn \Min bids $y \in [0,\alpha B\frac{s}{S_{\max}}]$, since she has the tie-breaking advantage and does not profit from bidding higher. For any infinite play $\pi$ that can arise from strategies $f$ and $g$, define $L_0(\pi) = \log_{1+\alpha} r$, and for each $i\in\mathbb{N}$ let
\begin{equation}\label{eq:deltah}
	\Delta L_i(\pi) = L_i(\pi) - L_{i-1}(\pi) = \begin{cases}
	c(s+2S_{\max}\frac{y-x}{\alpha B}), &\text{if $x>y$}, \\
	(-s+2S_{\max}\frac{y-x}{\alpha B}), &\text{if $x\leq y$,}
	\end{cases}
\end{equation}
where $B$ is $\Max$'s budget and $x$ and $y$ are the bids of $\Max$ and $\Min$ at the $i$-th bidding in $\pi$, respectively.

\stam{
\noindent{\bf Claim}: For $i \geq 1$, we have $\mathbb{E}_x [L_i(\pi)] \geq 0$, thus $L$ is a sub-martingale. %Suppose that following a finite play, \Max bids $x\sim \Unif[0,\alpha B\frac{s}{S_{\max}}]$, then for every \Min bid $y\in [0,\alpha B\frac{s}{S_{\max}}]$, we have 
%\begin{equation}\label{equ:luck}
%\mathbb{E}_{x\sim \Unif[0,\alpha B\frac{s}{S_{\max}}]}[\Delta L] \geq 0.
%\end{equation}

Let $\beta$ denote $\alpha B \frac{s}{S_{\max}}$, let us fix a finite play $\pi$
and a bid $y \in [0,\alpha B \frac{s}{S_{\max}}]$ of \minim{}.
We obtain:
\[
\begin{array}{lll}
\mathbb{E}_{x\sim \Unif[0,\beta]}[\Delta L] & = &
\frac{1}{\beta}\int_0^{\beta} \! \Delta L(x) \, \mathrm{d}x\\ & = &
\frac{1}{\beta} \Big( \int_0^{y} \! \Delta L(x) \, \mathrm{d}x +
\int_y^{\beta} \! \Delta L(x) \, \mathrm{d}x\Big)\\  & = &
\frac{1}{\beta} \Big(\int_0^{y} \! (- \frac{s}{2S_{max}} + \frac{y-x}{\alpha B}) \, \mathrm{d}x +
\int_y^{\beta} \! c(\frac{s}{2S_{max}} + \frac{y-x}{\alpha B}) \, \mathrm{d}x\Big)\\ & = &
\frac{1}{\beta} \Big( y(- \frac{s}{2S_{max}} + \frac{y}{\alpha B})-\frac{y^2}{2 \alpha B} +
c(\beta-y)(\frac{s}{2S_{max}} + \frac{y}{\alpha B}) - \frac{c\beta^2}{2 \alpha B} + \frac{cy^2}{2 \alpha B} \Big).
\end{array}
\]
Since $\beta = \alpha B \frac{s}{S_{\max}}$, we may substitute $\frac{s}{S_{\max}}=\frac{\beta}{\alpha B}$ above to get
\[
\begin{array}{lll}
\mathbb{E}_{x\sim \Unif[0,\beta]}[\Delta L] & = &
\frac{1}{\beta}\Big( y(-\frac{\beta}{2\alpha B} + \frac{y}{\alpha B}) - \frac{y^2}{2\alpha B} + c(\beta-y)(\frac{\beta}{2\alpha B} + \frac{y}{\alpha B}) - \frac{c\beta^2}{2\alpha B} + \frac{cy^2}{2\alpha B} \Big)\\ & = &
\frac{1}{\beta}\Big( \frac{y(y-\beta)}{2\alpha B} + \frac{c\beta y}{2\alpha B} - \frac{cy^2}{2\alpha B} \Big)\\ & = &
\frac{1}{\beta}\Big( \frac{y(y-\beta)}{2\alpha B} + \frac{cy(\beta-y)}{2\alpha B} \Big)\\ & = &
\frac{(c-1)y(\beta-y)}{2\beta\alpha B} \geq 0,
\end{array}
\]
where the last inequality follows since $c>1$ and $y\in [0,\beta]$.\hfill (of claim) $\triangleleft$
\stam}

In Prop.~\ref{prop:Richman-lolli}, we devise an invariant between \Max's budget and the energy of a finite play. Here, the invariant is more involved. First, the graph is more involved than $\lolli$, thus we depend on the framework in Sec.~\ref{sec:strength} and bound the difference between \Max's wins and loses in a finite play, formally denoted $\investmax(\pi^n) -\gainmax(\pi^n)$, for a prefix $\pi^n$ of a play $\pi$. Recall that Lem.~\ref{lem:magic} implies that such a bound also implies a bound on the energy. Second, in all-pay bidding, we incorporate the luck into the invariant.
%For any finite prefix $\pi^n=\pi_0,\pi_1,\dots,\pi_n$ of $\pi$, recall that $\investmax(\pi^n)$ and $\gainmax(\pi^n)$ are the sums of the strengths of vertices where \Max wins and loses the bidding, respectively. 
Let $B(\pi^n)$ denote \Max's budget following the finite play $\pi^n$ and we use $L(\pi^n)$ instead of $L_n(\pi)$ above. 
The following claim identifies the key invariant that holds throughout the game and on which the rest of our proof is based.

\begin{restatable}{claimth}{AP-Rich-inv}
\label{cl:AP-Rich-inv}
For every finite prefix $\pi^n$ of an infinite play $\pi$ coherent with the strategies $f$ and $g$, we have
\begin{equation}\label{eq:inv_APrichman}
B(\pi^n) \geq (1+\alpha)^{\frac{L(\pi^n)-c\cdot\investmax(\pi^n)+\gainmax(\pi^n)}{2S_{\max}}}.
\end{equation}
\end{restatable}
\noindent{\bf Proof of Claim \ref{cl:AP-Rich-inv}:} Let $\pi^n$ be a finite prefix of a play.
To ease notation, we write $\diff = c\cdot\investmax(\pi^n)-\gainmax(\pi^n)$ and omit references to $\pi^n$.
We show that
%\begin{equation*}
$B \geq (1+\alpha)^{\frac{L-\diff}{2S_{\max}}}$.
%\end{equation*}
We proceed by induction on $n$.
The base case follows from our choice of $L_0$.
Suppose by induction that the equation holds for the values $B$ and $\diff$ obtained after prefix $\pi^n$,
and that in the next bidding \Max bids $x$ and \Min bids $y$.
Let $B'=B+y-x=B+\Delta B$ and $L'=L+\Delta L$.
We want to show that $B'\geq (1+\alpha)^{\frac{L'-\diff'}{2S_{\max}}}$,
where $\diff' = \diff + cs$ if \Max wins, and $\diff' = \diff - s$ if \Min wins.
% if \Min wins the bidding,
% and that $B'\geq (1+\alpha)^{L-\frac{1}{2S_{\max}}(c\cdot(\gainmax(\pi)+s)-\investmax(\pi))}$
% if \Max wins the bidding.
By the definition of $\Delta L$,
we get
\begin{equation}\label{eq:deltab}
	\Delta B = \begin{cases}
	(\Delta L + s)\frac{\alpha B}{2S_{\max}}, &\text{if $x\leq y$};\\
	(\frac{\Delta L}{c}-s)\frac{\alpha B}{2S_{\max}}, &\text{if $x>y$}.
	\end{cases}
	\end{equation}
To conclude, we distinguish between the case in which \Min wins and \Max wins:
\begin{enumerate}
				\item If \Min wins the bidding, i.e.~$x\leq y$, then
				$\diff' = \diff - s$,
				and we get:
				\begin{equation*}
				\begin{array}{lcl}
				B' & \stackrel{{\color{white}()}}{=} & B + \Delta B
				\stackrel{\eqref{eq:deltab}}{=} B + (\Delta L + s)\frac{\alpha B}{2S_{\max}} \\
				   & \stackrel{{\color{white}()}}{=} & B \cdot (1 + (\Delta L + s)\frac{\alpha}{2S_{\max}})
				   \stackrel{\text{Bernoulli}}{\geq} B \cdot (1+\alpha)^{\frac{\Delta L+s}{2S_{\max}}} \\
				   &\stackrel{\text{ind. hyp.}}{\geq} &
				   (1+\alpha)^{\frac{L + \Delta L -\diff+s}{2S_{\max}}}\\
				   & \stackrel{{\color{white}()}}{=} &
				   (1+\alpha)^{\frac{L' - \diff'}{2S_{\max}}}.
				\end{array}
				\end{equation*}
				Here, Bernoulli's inequality could be used since $\alpha > -1$ and $\frac{\Delta L+s}{2S_{\max}}=\frac{y-x}{\alpha B}\in [0,1]$.
				
				\item If \Max wins the bidding, i.e., $x>y$, then $\diff' = \diff + cs$,
				and we get:
				\begin{equation*}
				\begin{array}{lcl}
				B' & = & B+\Delta B
				\stackrel{\eqref{eq:deltab}}{=} B + (\frac{\Delta L}{c}-s)\frac{\alpha B}{2S_{\max}}\\
				& \stackrel{{\color{white}()}}{=} & B \cdot (1-\alpha(-\frac{\Delta L}{c\cdot 2S_{\max}}+\frac{s}{2S_{\max}}))
				\stackrel{\text{Bernoulli}}{\geq} B \cdot (1-\alpha)^{-\frac{\Delta L}{c\cdot 2S_{\max}}+\frac{s}{2S_{\max}}}\\
				& \stackrel{\text{Lemma}~\ref{lem:shift-function}}{\geq} & B \cdot (1+\alpha)^{\frac{\Delta L-cs}{2S_{\max}}}
				\stackrel{\text{ind. hyp.}}{\geq}
				(1+\alpha)^{\frac{L + \Delta L -\diff - cs}{2S_{\max}}}\\
				& \stackrel{{\color{white}()}}{=} &
				(1+\alpha)^{\frac{L'-\diff'}{2S_{\max}}}
				\end{array}
				\end{equation*}
				Here, Bernoulli's inequality could be used since $-\alpha > -1$ and $-\frac{\Delta L}{c\cdot 2S_{\max}}+\frac{s}{2S_{\max}}=\frac{x-y}{\alpha B}\in [0,1]$.
				We also used Lemma \ref{lem:shift-function}: $1-\alpha = (1+\alpha)^{-c}$ since $\lambda(\alpha) = c$.
			\end{enumerate}\hfill (of claim) $\triangleleft$

Since the sum of budgets of the players is $1$, we have that $B(\pi^n)\leq 1 = (1+\alpha)^0$. Hence, by comparing the exponents in eq.~\eqref{eq:inv_APrichman} we obtain $c\cdot\investmax(\pi^n)-\gainmax(\pi^n) \geq L(\pi^n)$. On the other hand, by plugging $\nu = 1$ and $\mu=c$ into Lemma~\ref{lem:magic}, since $p=1/(c+1)$ we obtain $\frac{c+1}{c}(c\cdot\investmax(\pi^n)-\gainmax(\pi^n)) \leq \energy(\pi^n) - P - n\cdot\MP(\RT(\G,p))$ (note that there are $n+1$ vertices along $\pi^n$, hence the factor $n$). Combining the two inequalities gives
\begin{equation}\label{eq:mpenergy}
\begin{split}
\energy(\pi^n) %&= \mathbb{E}[\frac{(c+1)S_{\max}}{c}E+P+(n-1) \MP(\RT(\G, p))] \\
\geq \frac{c+1}{c}\cdot L(\pi^n)+P+n\cdot\MP(\RT(\G,p)).
\end{split}
\end{equation}
The last equation holds for any finite prefix $\pi^n$ of an infinite play $\pi$, so by dividing both sides by $n$ and letting $n\rightarrow \infty$ we get
\begin{equation}\label{eq:mpenergy}
\payoff(\pi) %&= \mathbb{E}[\frac{(c+1)S_{\max}}{c}E+P+(n-1) \MP(\RT(\G, p))] \\
\geq \frac{c+1}{c}\cdot\liminf_{n\rightarrow\infty}\frac{L_n(\pi)}{n}+\MP(\RT(\G,p)).
\end{equation}
The infinite play $\pi$ was arbitrary, hence eq.~\eqref{eq:mpenergy} holds for any play that is coherent with $f$ and $g$. 

To conclude the lemma, recall that \Max tries to maximize his luck. We prove an almost-sure lower-bound on the luck in the following claim:

\begin{restatable}{claimth}{LB-luck}
\label{cl:LB-luck}
$\mathbb{P}_{\pi\sim dist(f,g)}[\liminf_{n\rightarrow\infty}\frac{L_n}{n}\geq 0]=1$.
\end{restatable}

Proving this last claim implies the lemma. Indeed, since eq.~\eqref{eq:mpenergy} holds for any infinite play $\pi$ we conclude that $\payoff(\pi)\geq \MP(\RT(\G,p))$ almost-surely, and thus $f$ guarantees the desired mean-payoff almost-surely. The proof of this claim, however, is intricate. We regard $(L_n)_{n=0}^{\infty}$ as a stochastic process in the probability space over the set of all infinite plays defined by strategies $f$ and $g$. We then show that $(L_n)_{n=0}^{\infty}$ is a {\em submartinale}, %(w.r.t.~a suitably chosen filtration). 
which intuitively means that for every infinite play $\pi$, the expectation of $L_{n+1}(\pi)$ given the finite history $\pi^n$ is at least $L_n(\pi)$. The claim then follows from results from martingale theory.
In the following section we introduce the necessary background and describe the proof.
\end{proof}

Since $\MP\big(\RT(\G, p)\big)$ is continuous in $p$ \cite{Cha12,Sol03}, it follows from Lem.~\ref{lem:MP-Rich} that \Max can ensure a payoff of at least $\MP\big(\RT(\G, 0.5)\big)$, with every initial budget ratio. To deduce that $\asMP(\G, r)=\MP(\RT(\G,0.5))$,  for every $r$, we also need to show that \Min can ensure a payoff of at most $\MP\big(\RT(\G, 0.5)\big) + \epsilon$ for any $\epsilon>0$ and with every initial budget ratio. To construct an optimal strategy for \Min we rely on the advantage that the definition of payoff (Def.~\ref{def:MP}) gives to \Min. We consider the game $\G^-$ obtained from $\G$ by negating the weight of each vertex. It is not hard to show that $\MP\big(\RT(\G^-, 0.5)) = - \MP\big(\RT(\G, 0.5))$. Then, to ensure a payoff of at most $\MP\big(\RT(\G, 0.5)) +\epsilon$, \Min follows an optimal \Max strategy in $\G^-$. The symmetry argument is standard and has already been used in the first-price bidding games setting~\cite{AHC19,AHI18,AHZ19}, so we omit the details. Hence we have $\asMP(\G, r)=\MP\big(\RT(\G,0.5)\big)$ for every $r$, which concludes the proof of Thm.~\ref{thm:AP-Rich}.

\subsection{An aside on martingale theory}\label{sec:martingale}

In this section, we presents results on martingale theory which are needed to prove Claim~\ref{cl:LB-luck}, as well as in the later parts of this paper. We start with an intermezzo on necessary background on probability theory and martingale theory. We keep this exposition brief. For more details,  we refer the reader to~\cite{Williams:book}.
%and only present notions and results that are used in this work. For a detailed introduction to martingale theory, we refer the reader to~\cite{xxx}.

A {\em probability space} is a triple $(\Omega,\mathcal{F},\mathbb{P})$, where $\Omega$ is a non-empty {\em sample space}, $\mathcal{F}$ is a {\em sigma-algebra} over $\Omega$ which is a collection of subsets of $\Omega$ which is closed under complementation and countable unions and contains $\emptyset$, and $\mathbb{P}:\mathcal{F}\rightarrow [0,1]$ is a function such that $\mathbb{P}[\emptyset]=0$, $\mathbb{P}[\Omega\backslash A]=1-\mathbb{P}[A]$ for each $A\in\mathcal{F}$, and $\mathbb{P}[\cup_{i=1}^{\infty}A_i]=\sum_{i=1}^{\infty}\mathbb{P}[A_i]$ for a sequence of pairwise disjoint sets $A_1,A_2,\dots$ in $\mathcal{F}$. An element of $\mathcal{F}$ is said to be an {\em event}.

In the case of a bidding game $\G$ and mixed strategies $f$ and $g$ of $\Max$ and $\Min$, we let $\Omega_{\G}$ be the set of all infinite plays in $\G$, $\mathcal{F}_{\G}$ be the unique smallest sigma-algebra which contains all subsets of $\Omega_\G$ defined by plays with a common finite prefix (thus every finite play defines one such set), and $\mathbb{P}$ be the probability measure $dist(f,g)$ defined by the {\em cylinder construction}~\cite[Theorem 2.7.2]{AD00}. Since the construction of $dist(f,g)$ is standard but technical, we omit it from this exposition however we note that it satisfies all intuitive properties.

We say that a sequence $(\mathcal{F}_i)_{i=0}^{\infty}$ of sigma-algebras in $(\Omega,\mathcal{F},\mathbb{P})$ is a {\em filtration} if $\mathcal{F}_0\subseteq\mathcal{F}_1\subseteq \dots\subseteq\mathcal{F}$. In the case of bidding games, we are particularly interested in the so-called {\em canonical filtration} $(\mathcal{R}_i)_{i=0}^{\infty}$ of $(\Omega_{\G},\mathcal{F}_{\G},dist(f,g))$. Each $\mathcal{R}_i$ is defined as the smallest sigma-algebra containing all subsets of $\Omega_\G$ defined by plays with a common finite prefix of length at most $i$. Intuitively, $\mathcal{R}_i$ contains those events which are defined by what happened in $\G$ during the first $i$ steps.

A {\em random variable} $X$ in $(\Omega,\mathcal{F},\mathbb{P})$ is an $\mathcal{F}$-measurable function $X:\Omega\rightarrow \mathbb{R}$ (w.r.t.~the standard Lebesgue measure on $\mathbb{R}$), i.e.~a function for which $\{\omega\in\Omega\mid f(\omega)\leq c\}\in\mathcal{F}$ for each $c\in\mathbb{R}$. A {\em stochastic process} $(X_i)_{i=0}^{\infty}$ is a sequence of random variables in $(\Omega,\mathcal{F},\mathbb{P})$.

\medskip Before being able to define (sub)martingales, we need to introduce one more important notion. Let $(\Omega,\mathcal{F},\mathbb{P})$ be a probability space, $X$ a random variable, and $\mathcal{F}'\subseteq\mathcal{F}$ a sigma-sub-algebra of $\mathcal{F}$. The {\em conditional expectation} of $X$ w.r.t.~$\mathcal{F'}$ is an $\mathcal{F}'$-measurable random variable $Y$ such that, for each $A\in\mathcal{F}'$, we have that $\mathbb{E}[Y\cdot 1_A] = \mathbb{E}[X\cdot 1_A]$. Here, $1_A$ is an indicator function of $A$, defined as $1_A(\omega)=1$ if $\omega\in A$ and $1_A(\omega)=0$ otherwise.

Intuitively, conditional expectation of $X$ w.r.t.~$\mathcal{F}'$ is an $\mathcal{F}'$-measurable random variable which captures the behavior of $X$ on those events contained in $\mathcal{F}'$. Note that $X$ is not necessarily equal to its conditional expectation as $X$ need not be $\mathcal{F}'$-measurable. In fact, conditional expectation of a random variable $X$ need not even exist. However, it is known that whenever $X$ is {\em integrable} (meaning that $\mathbb{E}[|X|]<\infty$), the conditional expectation of $X$ w.r.t.~$\mathcal{F}'$ exists and is almost-surely unique. Almost-sure uniqueness means that, if two random variables $Y$ and $Y'$ satisfy the definition of conditional expectation of $X$ w.r.t.~$\mathcal{F}'$, then $\mathbb{P}[Y=Y']=1$. In this case, we denote any such random variable as $\mathbb{E}[X\mid \mathcal{F}']$.

We are finally ready to define the notion of a submartingale.

\begin{definition}[Submartingale]
Let $(\Omega,\mathcal{F},\mathbb{P})$ be a probability space, $(\mathcal{F}_i)_{i=0}^{\infty}$ a filtration and $(X_i)_{i=0}^{\infty}$ a stochastic process. Then we say that $(X_i)_{i=0}^{\infty}$ is a {\em submartingale} w.r.t.~$(\mathcal{F}_i)_{i=0}^{\infty}$ if
\begin{itemize}
    \item for each $i\in\mathbb{N}_0$, $X_i$ is integrable and $\mathcal{F}_i$-measurable, and
    \item for each $i\in\mathbb{N}_0$, $\mathbb{E}[X_{i+1}\mid\mathcal{F}_i]\geq X_i$ almost-surely.
\end{itemize}
If in the second point above we have equality for each $i$, we say that $(X_i)_{i=0}^{\infty}$ is a {\em martingale} w.r.t.~$(\mathcal{F}_i)_{i=0}^{\infty}$.
\end{definition}

The following theorem is the key result from martingale theory that will be needed in our proofs.

\begin{theorem}[Azuma-Hoeffding inequality~\cite{azuma1967}]
Let $(\Omega,\mathcal{F},\mathbb{P})$ be a probability space and $(\mathcal{F}_i)_{i=0}^{\infty}$ a filtration. Suppose that $(X_i)_{i=0}^{\infty}$ is a submartingale w.r.t.~$(\mathcal{F}_i)_{i=0}^{\infty}$, and suppose that there exists $c>0$ such that $|X_{i+1}-X_i|\leq c$ almost-surely for each $i\in\mathbb{N}_0$. Then, for each $N\in\mathbb{N}_0$ and $\epsilon>0$ we have that
\begin{equation*}
    \mathbb{P}[X_i-X_0\leq -\epsilon] \leq \mathrm{e}^{\frac{-\epsilon^2}{2Nc^2}}.
\end{equation*}
\end{theorem}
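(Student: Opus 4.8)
The plan is to use the standard exponential-moment (Chernoff) method, adapted to the fact that we have a submartingale rather than a martingale and therefore seek only a lower-tail bound. First I would note that the index on the left-hand side should read $X_N$, and set $D_i = X_i - X_{i-1}$ so that $X_N - X_0 = \sum_{i=1}^{N} D_i$; the submartingale hypothesis gives $\mathbb{E}[D_i \mid \mathcal{F}_{i-1}] \ge 0$ almost surely and the bounded-difference hypothesis gives $|D_i| \le c$. For any $\theta > 0$, Markov's inequality applied to the nonnegative random variable $e^{-\theta(X_N - X_0)}$ yields
\[
\mathbb{P}[X_N - X_0 \le -\epsilon] = \mathbb{P}\big[e^{-\theta(X_N - X_0)} \ge e^{\theta\epsilon}\big] \le e^{-\theta\epsilon}\,\mathbb{E}\big[e^{-\theta(X_N - X_0)}\big],
\]
so the whole problem reduces to controlling the moment generating function $\mathbb{E}[e^{-\theta(X_N - X_0)}]$.

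Next I would peel off one increment at a time using the tower property. Since $e^{-\theta(X_{N-1}-X_0)}$ is $\mathcal{F}_{N-1}$-measurable, conditioning on $\mathcal{F}_{N-1}$ gives
\[
\mathbb{E}\big[e^{-\theta(X_N-X_0)}\big] = \mathbb{E}\big[e^{-\theta(X_{N-1}-X_0)}\,\mathbb{E}[e^{-\theta D_N}\mid\mathcal{F}_{N-1}]\big],
\]
and the task becomes to bound the conditional factor $\mathbb{E}[e^{-\theta D_N}\mid\mathcal{F}_{N-1}]$ by a deterministic constant. This is the crux, and I would establish it through a conditional Hoeffding lemma. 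Writing $m = \mathbb{E}[D_N\mid\mathcal{F}_{N-1}] \ge 0$ and using convexity of $x \mapsto e^{-\theta x}$ on $[-c,c]$, namely $e^{-\theta x} \le \frac{c-x}{2c}e^{\theta c} + \frac{c+x}{2c}e^{-\theta c}$, and taking conditional expectation, I obtain
\[
\mathbb{E}[e^{-\theta D_N}\mid\mathcal{F}_{N-1}] \le \frac{c-m}{2c}e^{\theta c} + \frac{c+m}{2c}e^{-\theta c}.
\]
The right-hand side is decreasing in $m$ (its $m$-derivative equals $\frac{1}{2c}(e^{-\theta c}-e^{\theta c}) < 0$), so because $m \ge 0$ it is at most its value at $m=0$, which is $\cosh(\theta c) \le e^{\theta^2 c^2/2}$. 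Here the submartingale property works in our favour: a nonnegative conditional drift only lowers the lower-tail moment generating function, which is precisely why the one-sided bound survives the weakening from martingale to submartingale. Iterating the tower step $N$ times yields $\mathbb{E}[e^{-\theta(X_N-X_0)}] \le e^{N\theta^2 c^2/2}$.

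Finally I would combine and optimize. Substituting into the Chernoff bound gives $\mathbb{P}[X_N - X_0 \le -\epsilon] \le \exp(-\theta\epsilon + N\theta^2 c^2/2)$ for every $\theta > 0$; minimizing the exponent over $\theta$ at $\theta = \epsilon/(Nc^2)$ produces exactly $-\epsilon^2/(2Nc^2)$, which is the claimed bound. The only genuinely delicate step is the conditional Hoeffding estimate: the comparison $\cosh(\theta c) \le e^{\theta^2 c^2/2}$ follows from matching Taylor coefficients via $(2k)! \ge 2^k k!$, and the monotonicity-in-$m$ observation is the single ingredient that lets the classical martingale argument run verbatim in the submartingale setting. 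Everything else is routine bookkeeping with the tower property and the optimization over $\theta$.
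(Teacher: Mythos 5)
Your proof is correct, but there is nothing in the paper to compare it against: the paper states this theorem as an imported classical result, citing Azuma (1967), and gives no proof of it. Your argument is the standard exponential-moment (Chernoff) proof, and all the steps check out: the Markov bound on $e^{-\theta(X_N-X_0)}$, the tower-property peeling, the chord bound $e^{-\theta x} \le \frac{c-x}{2c}e^{\theta c} + \frac{c+x}{2c}e^{-\theta c}$ on $[-c,c]$, the estimate $\cosh(\theta c)\le e^{\theta^2c^2/2}$, and the optimization at $\theta = \epsilon/(Nc^2)$ giving exactly the exponent $-\epsilon^2/(2Nc^2)$. Two points are worth highlighting as genuinely right rather than routine. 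First, you correctly spotted that the paper's statement contains a typo (the event should read $X_N - X_0 \le -\epsilon$, not $X_i - X_0 \le -\epsilon$; the bound depends on $N$). Second, and more substantively, you isolated the one place where the submartingale (rather than martingale) hypothesis enters: the conditional Hoeffding bound is monotone decreasing in the conditional drift $m = \mathbb{E}[D_N\mid\mathcal{F}_{N-1}]$, so $m \ge 0$ can only improve the lower-tail estimate. This is precisely why the one-sided inequality survives for submartingales, and it is the form the paper actually needs in Lemma~\ref{lemma:concentration}, where the result is applied to the luck processes. Your proof could serve as a self-contained substitute for the citation.
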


We conclude this section by introducing and proving a lemma on submartingales, which follows from the Azuma-Hoeffding inequality and which will be the main technical ingredient for studying mean-payoffs guaranteed by mixed strategies constructed in our proofs.

\begin{lemma}\label{lemma:concentration}
Let $(\Omega,\mathcal{F},\mathbb{P})$ be a probability space and $(\mathcal{F}_i)_{i=0}^{\infty}$ a filtration. Suppose that $(X_i)_{i=0}^{\infty}$ is a submartingale w.r.t.~$(\mathcal{F}_i)_{i=0}^{\infty}$, and suppose that there exists $c>0$ such that $|X_{i+1}-X_i|\leq c$ almost-surely for each $i\in\mathbb{N}_0$. Furthermore, suppose that $X_0\geq K$ for some $K\in\mathbb{R}$. Then
\begin{equation*}
    \mathbb{P}\Big[\liminf_{N\rightarrow\infty} \frac{X_N}{N}\geq 0 \Big] = 1.
\end{equation*}
\end{lemma}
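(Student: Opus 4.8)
The plan is to apply the Azuma--Hoeffding inequality to the submartingale $(X_i)$ together with a Borel--Cantelli argument to convert the per-$N$ deviation bound into an almost-sure statement about the limit inferior. The core intuition is that Azuma--Hoeffding tells us $X_N$ is, with overwhelming probability, not much smaller than $X_0 \geq K$; concretely, $X_N/N$ cannot have a negative limit inferior because that would require $X_N$ to dip below $-\delta N$ infinitely often, an event whose probability decays geometrically in $N$ for each fixed slope $\delta > 0$.

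First I would fix an arbitrary $\delta > 0$ and, for each $N \in \mathbb{N}$, consider the event $A_N^\delta = \{X_N - X_0 \leq -\delta N\}$. Applying the Azuma--Hoeffding inequality with $\epsilon = \delta N$ gives
\begin{equation*}
\mathbb{P}[A_N^\delta] \leq \mathrm{e}^{\frac{-(\delta N)^2}{2Nc^2}} = \mathrm{e}^{\frac{-\delta^2 N}{2c^2}}.
\end{equation*}
Since $\sum_{N=1}^{\infty} \mathrm{e}^{-\delta^2 N/(2c^2)} < \infty$ (a convergent geometric series, as the ratio $\mathrm{e}^{-\delta^2/(2c^2)} < 1$), the Borel--Cantelli lemma yields $\mathbb{P}[\limsup_N A_N^\delta] = 0$. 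In other words, almost surely only finitely many of the events $A_N^\delta$ occur, so almost surely there exists $N_0$ with $X_N - X_0 > -\delta N$, equivalently $X_N/N > X_0/N - \delta$, for all $N \geq N_0$. Taking $\liminf_{N \to \infty}$ and using $X_0/N \to 0$ gives $\liminf_N X_N/N \geq -\delta$ almost surely.

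To finish I would remove the dependence on $\delta$ by intersecting over a countable sequence. Let $B_\delta$ be the almost-sure event on which $\liminf_N X_N/N \geq -\delta$ holds; take $\delta$ ranging over $\{1/k : k \in \mathbb{N}\}$ and set $B = \bigcap_{k=1}^{\infty} B_{1/k}$. A countable intersection of probability-one events has probability one, so $\mathbb{P}[B] = 1$, and on $B$ we have $\liminf_N X_N/N \geq -1/k$ for every $k$, hence $\liminf_N X_N/N \geq 0$. This establishes the claim. I expect no serious obstacle here: the two technical points to state carefully are that Azuma--Hoeffding applies uniformly (the bounded-increment constant $c$ is fixed, independent of $N$) so the exponent scales linearly in $N$ and the series converges, and that the role of the hypothesis $X_0 \geq K$ is precisely to make the term $X_0/N$ vanish in the limit rather than contribute a negative drift — note that the displayed inequality is phrased in terms of $X_N - X_0$, so the constant $K$ itself does not explicitly enter the final bound but guarantees $X_0$ is finite and well-behaved.
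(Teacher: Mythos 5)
Your proof is correct and follows essentially the same route as the paper's: both hinge on the Azuma--Hoeffding inequality giving a geometrically summable bound on the probability that $X_N - X_0$ drops below $-\delta N$, followed by a tail-sum/union argument over a countable family of slopes $\delta$. The only cosmetic differences are that you invoke Borel--Cantelli by name where the paper re-derives that tail-sum step inline, and that you bring in the hypothesis $X_0 \geq K$ at the end (to kill the $X_0/N$ term) rather than folding $K$ into the deviation event as the paper does.
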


\begin{proof}
Let $A_0=\{\liminf_{N\rightarrow\infty}X_N / N\geq 0 \}$ be the event whose probability we want to show is $0$. For each $\delta\in\mathbb{Q}_{\geq 0}$, let $A_{-\delta}=\{\liminf_{N\rightarrow\infty}X_N / N < -\delta \}$. Then $A_0 = \Omega\backslash (\cup_{\delta\in\mathbb{Q}_{\geq 0}}A_{-\delta})$, thus it suffices to prove that $\mathbb{P}[\cup_{\delta\in\mathbb{Q}_{\geq 0}}A_{-\delta}]=0$. By the union bound, we have $\mathbb{P}[\cup_{\delta\in\mathbb{Q}_{\geq 0}}A_{-\delta}]\leq \sum_{i=0}^{\infty}\mathbb{P}[A_{-\delta}]$, so it also suffices to prove that $\mathbb{P}[A_{-\delta}]=0$ for each $\delta\in\mathbb{Q}_{\geq 0}$.

Fix $\delta\in\mathbb{Q}_{\geq 0}$. Note that $\liminf_{N\rightarrow\infty}X_N / N < -\delta$ if for each $M\in\mathbb{N}$ there exists $N\geq M$ such that $X_N / N < -\delta$. For fixed $M\in\mathbb{N}$, denote this event by $A^M_{\delta}$. Then $A_{-\delta}=\cap_{M\in\mathbb{N}}A^M_{-\delta}$, and so $\mathbb{P}[A_{-\delta}]\leq \mathbb{P}[A^M_{-\delta}]$ for each $M$. Hence, if we prove that $\lim_{M\rightarrow\infty}\mathbb{P}[A^M_{-\delta}] = 0$, it follows that $\mathbb{P}[A_{-\delta}]=0$ as wanted.

Fixing $M\in\mathbb{N}$ and rewriting the definition of the event $A^M_{-\delta}$, we see that
\begin{equation*}
    A^M_{-\delta} = \cup_{N=M}^{\infty}\{X_N/N < -\delta\} = \cup_{N=M}^{\infty}\{X_N - X_0 < -K-N\cdot\delta\}.
\end{equation*}
By the union bound and by letting $M$ be sufficiently large so that $M>|K|/\delta$, we have
\begin{equation}\label{eq:martingalebound}
\begin{split}
    \mathbb{P}[A^M_{-\delta}] &\leq \sum_{N=M}^{\infty} \mathbb{P}[X_N - X_0 < -K-N\cdot\delta] \leq \sum_{N=M}^{\infty} \mathrm{e}^{\frac{-(-K-N\cdot\delta)^2}{2Nc^2}} \\
    &\leq \sum_{N=M}^{\infty} \mathrm{e}^{\frac{-(-2N\cdot\delta)^2}{2Nc^2}} = \sum_{N=M}^{\infty} \mathrm{e}^{\frac{-2N\cdot\delta^2}{c^2}},
\end{split}
\end{equation}
where for the second inequality we used Azuma-Hoeffding, and for the third inequality that $M>|K|/\delta$. As the series $\sum_{N=0}^{\infty} \mathrm{e}^{-2N\cdot\delta^2/c^2}$ converges since it is a geometric series, we have that the sum on the RHS of eq.~\eqref{eq:martingalebound} tends to $0$ as $M\rightarrow\infty$. Thus we conclude that $\lim_{M\rightarrow\infty}\mathbb{P}[A^M_{-\delta}] = 0$, which finishes the proof.

\end{proof}

\noindent{\bf Proof of Claim~\ref{cl:LB-luck}:}
Recall the definition of $L_i(\pi)$ for each infinite play $\pi$ and $i\in\mathbb{N}_0$. Consider $(L_i)_{i=0}^{\infty}$ as a stochastic process over the probability space $(\Omega_\G,\mathcal{F}_\G,dist(f,g))$ defined by $\G$ and mixed strategies $f$ and $g$ of \Max and \Min, respectively.

Observe that each $L_i$ is $\mathcal{R}_i$-measurable, where recall $\mathcal{R}_i$ is the $i$-th sigma-algebra of the canonical filtration on $(\Omega_{\G},\mathcal{F}_{\G},dist(f,g))$. This is because both the budget $B$ and bids $x$ and $y$ depend only on the first $i$ steps of the game. Moreover, for each $i\in\mathbb{N}$ we easily see that $|\Delta L_i(\pi)|\leq 3c\cdot S_{\max}$ for every $\pi$, thus by triangle inequality $|L_i(\pi)|\leq |\log_{1+\alpha} r|+3ic\cdot S_{\max}$ and so each $L_i$ is integrable.

\noindent{\bf Claim}: $(L_i)_{i=0}^{\infty}$ is a submartingale w.r.t.~the canonical filtration.

The measurability and integrability conditions were checked above. It remains to show that the conditional expectations property holds, i.e.~$\mathbb{E}[L_{i+1}\mid\mathcal{R}_i]\geq L_i$ for each $i\in\mathbb{N}_0$. We first describe what this conditional expectation looks like. Let $\pi\in\Omega_\G$ and let $\pi^i$ be its prefix of length $i$. Then
\[
    \mathbb{E}[L_{i+1}\mid\mathcal{R}_i](\pi) = L_{i-1}(\pi) + \mathbb{E}_{x\sim \Unif[0,\beta], y\sim g(\pi^i)}\Big[ c(s+2S_{\max}\frac{y-x}{\alpha B})\cdot 1_{x>y} + (-s+2S_{\max}\frac{y-x}{\alpha B})\cdot 1_{x\leq y} \Big],
\]
where $\beta=\alpha B\frac{s}{S_{\max}}$ and $g(\pi^i)$ is the distribution over the bids of $\Min$ defined by the mixed strategy $g$ and a finite history $\pi^i$. The fact that this is indeed the right expression for conditional expectation follows from the definition of canonical filtration. Formally showing this is technical but straightforward, so we omit it.

To prove that $(L_i)_{i=0}^{\infty}$ is a submartingale, it thus suffices to prove that
\begin{equation*}
    \mathbb{E}_{x\sim \Unif[0,\beta], y\sim g(\pi^i)}\Big[ c(s+2S_{\max}\frac{y-x}{\alpha B})\cdot 1_{x>y} + (-s+2S_{\max}\frac{y-x}{\alpha B})\cdot 1_{x\leq y} \Big] \geq 0
\end{equation*}
for each $\pi\in\Omega_\G$. By Fubini's theorem (which can be applied since the integrand is a bounded function), we may rewrite this expectation as
\[
\begin{array}{lll}
&\mathbb{E}_{x\sim \Unif[0,\beta], y\sim g(\pi^i)}\Big[ c(s+2S_{\max}\frac{y-x}{\alpha B})\cdot 1_{x>y} + (-s+2S_{\max}\frac{y-x}{\alpha B})\cdot 1_{x\leq y} \Big] \\
&= \mathbb{E}_{y\sim g(\pi^i)}\Big[\mathbb{E}_{x\sim \Unif[0,\beta]}\Big[c(s+2S_{\max}\frac{y-x}{\alpha B})\cdot 1_{x>y} + (-s+2S_{\max}\frac{y-x}{\alpha B})\cdot 1_{x\leq y}\Big]\Big].
\end{array}
\]
Hence to prove non-negativity, it suffices to show that for any $y\in[0,\beta]$ the inner expectation is non-negative. Fix $y\in[0,\beta]$. We obtain:
\[
\begin{array}{lll}
\mathbb{E}_{x\sim \Unif[0,\beta]}&\Big[c(s+2S_{\max}\frac{y-x}{\alpha B})\cdot 1_{x>y} + (-s+2S_{\max}\frac{y-x}{\alpha B})\cdot 1_{x\leq y}\Big] \\
& = 
\frac{1}{\beta} \Big(\int_0^{y} \! (- s + 2S_{\max}\frac{y-x}{\alpha B}) \, \mathrm{d}x +
\int_y^{\beta} \! c(s + 2S_{\max}\frac{y-x}{\alpha B}) \, \mathrm{d}x\Big)\\ & = 
\frac{1}{\beta} \Big( y(- s + 2S_{\max}\frac{y}{\alpha B})-\frac{y^2S_{\max}}{\alpha B} +
c(\beta-y)(s + 2S_{\max}\frac{y}{\alpha B}) - \frac{c\beta^2S_{\max}}{\alpha B} + \frac{cy^2S_{\max}}{\alpha B} \Big).
\end{array}
\]
Since $\beta = \alpha B \frac{s}{S_{\max}}$, we may substitute $\frac{s}{S_{\max}}=\frac{\beta}{\alpha B}$ above to get
\[
\begin{array}{lll}
\mathbb{E}_{x\sim \Unif[0,\beta]}&\Big[c(s+2S_{\max}\frac{y-x}{\alpha B})\cdot 1_{x>y} + (-s+2S_{\max}\frac{y-x}{\alpha B})\cdot 1_{x\leq y}\Big] \\
& = 
\frac{S_{\max}}{\beta}\Big( y(-\frac{\beta}{\alpha B} + \frac{2y}{\alpha B}) - \frac{y^2}{\alpha B} + c(\beta-y)(\frac{\beta}{\alpha B} + \frac{2y}{\alpha B}) - \frac{c\beta^2}{\alpha B} + \frac{cy^2}{\alpha B} \Big)\\ & =
\frac{S_{\max}}{\beta}\Big( \frac{y(y-\beta)}{\alpha B} + \frac{c\beta y}{\alpha B} - \frac{cy^2}{\alpha B} \Big)\\ & = 
\frac{S_{\max}}{\beta}\Big( \frac{y(y-\beta)}{\alpha B} + \frac{cy(\beta-y)}{\alpha B} \Big)\\ & = 
\frac{(c-1)y(\beta-y)}{\beta\alpha B} \geq 0,
\end{array}
\]
where the last inequality follows since $c>1$ and $y\in [0,\beta]$.\hfill (of claim) $\triangleleft$

Thus, $(L_i)_{i=0}^{\infty}$ is a submartingale w.r.t.~the canonical filtration. Moreover, as we observed above it has differences bounded by $3c\cdot S_{\max}$ and also $L_0 = \log_{1+\alpha} r$ which is thus bounded below by a real constant. Hence we may apply Lemma~\ref{lemma:concentration} to conclude that $\mathbb{P}[\liminf_{n\rightarrow\infty}\frac{L_n}{n}\geq 0]=1$, which is precisely Claim~\ref{cl:LB-luck}, hence we are done.\hfill\qed

\section{Mean-Payoff All-Pay Poorman Games}
\label{sec:AP-poor}

This section is devoted to the proof of Thm.~\ref{thm:AP-poor}. We first revisit first-price poorman bidding and describe a significantly simpler proof. We unify the proofs for all-pay poorman bidding by introducing and studying a variant of bidding games called {\em asymmetric bidding games}. We first show the connection between values in asymmetric bidding games and all-pay poorman games. Then, using similar (though more involved) techniques as in the Richman setting, we show that in asymmetric bidding games the sure and almost-sure values do not depend on the initial ratios.

\subsection{Warm up; revisiting mean-payoff first-price poorman games.}
The value of mean-payoff first-price poorman games was first identified in \cite{AHI18}.

\begin{theorem}[\cite{AHI18}]
\label{prop:FP-poorman}
Let $\G$ be a strongly-connected mean-payoff all-pay poorman bidding game.
For every initial ratio $r \in (0,1)$ of \Max,
we have $\sMP_\FPPoor(\G, r) = \MP\big(\RT(\G, r)\big)$.
\end{theorem}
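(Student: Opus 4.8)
The plan is to reprove the \cite{AHI18} value through the new budget-based machinery, mirroring the Richman development---the warm-up Proposition~\ref{prop:Richman-lolli} and the mixed-strategy Lemma~\ref{lem:MP-Rich}---but now with an invariant on the \emph{budget ratio} rather than on \Max's absolute budget. By the standard weight-negation symmetry (pass to the game $\G^-$ obtained by negating all weights and play there an optimal \Max strategy), it suffices to prove only the lower bound: with an initial ratio exceeding $r$, for every $\epsilon>0$ \Max has a deterministic budget-based strategy guaranteeing a payoff of at least $\MP(\RT(\G,r))-\epsilon$. So fix $\epsilon>0$. Using continuity of $p\mapsto\MP(\RT(\G,p))$, I would choose $p<r$ with $\MP(\RT(\G,p))\geq\MP(\RT(\G,r))-\epsilon$, and set $c=\tfrac{1-p}{p}$ so that $p=\tfrac{1}{1+c}$ and $\alpha=\lambda(c)\in(0,1)$, whence $(1-\alpha)=(1+\alpha)^{-c}$ by Lemma~\ref{lem:shift-function}. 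From fixed optimal positional strategies in $\RT(\G,p)$ I extract the successors $v^+,v^-$, the strengths $\St_p$, and $S_{\max}$ as in Section~\ref{sec:strength}.

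Next I would define \Max's strategy: at a vertex $v$ of strength $s=\St_p(v)$ he moves to $v^+$ on winning and bids an amount $b$ scaled by $s/S_{\max}$ and calibrated to the current budgets $B$ (his) and $C$ (\Min's) through the shift function. This calibration is the heart of the argument and where poorman departs sharply from Richman. Under poorman bidding only the winner pays and the payment is burned, so a \Max win shrinks $B$ and leaves $C$ fixed, while a \Max loss shrinks $C$ and leaves $B$ fixed. Hence the quantity to control is the log-ratio $\Phi=\log_{1+\alpha}(B/C)$, and $b$ must be chosen so that a win multiplies $B$ by $(1+\alpha)^{-c\,s/S_{\max}}$ (so $\Phi$ drops by $c\,s/S_{\max}$, matching an investment of strength $s$), while a loss forces \Min to pay enough that $C$ is multiplied by at most $(1+\alpha)^{-s/S_{\max}}$ (so $\Phi$ rises by at least $s/S_{\max}$, matching a gain of strength $s$). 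A single legal bid meeting both demands exists precisely when \Max's ratio stays above $p$, which is exactly the slack afforded by the initial ratio exceeding $r>p$; this is the structural reason the poorman correspondence parameter is $p=r$ (through $c=\tfrac{1-p}{p}$) rather than the fixed $\tfrac12$ of Richman.

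Then I would prove by induction on the length of a finite prefix $\pi^n$ of a consistent play that these bids maintain an invariant of the form
\[
\log_{1+\alpha}\frac{B(\pi^n)}{C(\pi^n)} \;\geq\; \log_{1+\alpha}\frac{\init{B}}{\init{C}} \;-\; \frac{c\cdot\investmax(\pi^n)-\gainmax(\pi^n)}{S_{\max}},
\]
the inductive step being a Bernoulli-inequality estimate together with the identity $(1-\alpha)=(1+\alpha)^{-c}$, exactly as in the Claim inside Lemma~\ref{lem:MP-Rich}. The budgets remain positive and bounded above by the total initial budget, so the ratio stays in $(0,1)$; moreover, because a budget-proportional bid makes each forced loss cost \Min an amount comparable to \Max's current budget, \Min cannot sustain an unbounded run of losses at any fixed budget level, which keeps the ratio away from $1$. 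Consequently $\log_{1+\alpha}(B/C)$ stays bounded along the play, and the invariant forces $c\cdot\investmax(\pi^n)-\gainmax(\pi^n)$---equivalently $(1-p)\investmax(\pi^n)-p\gainmax(\pi^n)=\tfrac{1}{1+c}\big(c\,\investmax(\pi^n)-\gainmax(\pi^n)\big)$---to be bounded below by a constant independent of $n$. Therefore $\liminf_{n\to\infty}\tfrac1n\big((1-p)\investmax(\pi^n)-p\gainmax(\pi^n)\big)\geq 0$, and plugging $\nu=p$, $\mu=1-p$ into Corollary~\ref{cor:magic} gives $\payoff(\pi)\geq\MP(\RT(\G,p))\geq\MP(\RT(\G,r))-\epsilon$ for every consistent play, as required.

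I expect the main obstacle to be exactly this bid calibration under the asymmetric poorman payments, and the bookkeeping that keeps \Max's ratio inside the range where a single legal bid meets both the win-side and the loss-side demand simultaneously. In Richman the payments are symmetric, the total budget is conserved, and the initial ratio is irrelevant, so one invariant on \Max's own budget suffices; here the total budget shrinks and the two outcomes act on different budgets, so the invariant must be two-sided (on the ratio), and one must argue carefully that starting strictly above $r$ leaves enough room for the ratio to stay bounded away from $0$ and $1$ for the \emph{entire} infinite play, so that the budget-based strategy never has to make an illegal or invariant-violating bid.
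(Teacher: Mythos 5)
Your high-level plan (reduce to the lower bound by weight negation, pick $p<r$ by continuity, build a budget-based strategy whose invariant feeds Corollary~\ref{cor:magic} with $\nu=p$, $\mu=1-p$) matches the paper's, but the bid calibration at the heart of your argument has a genuine gap, and it is exactly the point you flag as ``the main obstacle'' without resolving it. Your two demands on a single bid $b$ --- a win must leave $B' \geq B(1+\alpha)^{-cs/S_{\max}}$, i.e.\ $b \leq B\bigl(1-(1+\alpha)^{-cs/S_{\max}}\bigr)$, and a loss must force $C' \leq C(1+\alpha)^{-s/S_{\max}}$, i.e.\ $b \geq C\bigl(1-(1+\alpha)^{-s/S_{\max}}\bigr)$ --- are simultaneously satisfiable only while $B/C$ stays above a threshold of roughly $\tfrac{p}{1-p}$. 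Nothing keeps $B/C$ above that threshold. Under first-price poorman payments, every \Max win strictly decreases $B/C$: he must bid at least $C\bigl(1-(1+\alpha)^{-s/S_{\max}}\bigr)$, a fixed positive fraction of $C$, while $C$ is unchanged. So if \Min simply bids $0$ forever (already in $\lolli$), \Max wins every bidding, $C$ stays constant, $B$ drops by at least a fixed amount per round, and the ratio crosses the threshold after finitely many rounds; at that point the bid interval is empty and both the strategy and the invariant are undefined. Your invariant cannot rescue this: it lower-bounds $\log_{1+\alpha}(B/C)$ by $\Phi_0 - (c\cdot\investmax - \gainmax)/S_{\max}$, which is vacuous precisely after many \Max wins (when $\investmax$ is large), and your boundedness argument addresses the wrong side (keeping the ratio away from $1$; what is needed, and false, is that it stays above $p$). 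The paper's proof is engineered around exactly this failure mode: it normalizes \Min's budget to $1$, fixes a hard floor $W=\init{B}-\epsilon$, and has \Max bid proportionally to the \emph{slack} $B-W$ rather than to $B$ or $C$, so wins shrink the slack multiplicatively and the floor is never crossed; the invariant is then on the slack, $(B(\pi)-W)^N \geq (1+\alpha)^{-\diff(\pi)}$, the loss-side demand is replaced by the exchange-rate inequality $B'=\tfrac{B}{1-y} > B+Wy$, and the lower bound on $\diff$ comes from a forced-win argument (when $\diff$ is too low the slack, hence the bid, exceeds \Min's entire budget of $1$).

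There is a second, independent gap: for $r>\tfrac12$ you have $p>\tfrac12$, hence $c=\tfrac{1-p}{p}<1$, and there is no $\alpha\in(0,1)$ with $\lambda(\alpha)=c$: the shift function satisfies $\lambda(\alpha)>1$ for every $\alpha$ (since $(1-\alpha)(1+\alpha)<1$), which is why Lemma~\ref{lem:shift-function} only covers $c\in(1,+\infty)$. So the identity $(1-\alpha)=(1+\alpha)^{-c}$ that your inductive step needs has no solution precisely in the regime where \Max holds the larger budget --- the case of principal interest. The paper avoids this because its shift-function exponent is always $\mu=1+\epsilon>1$ independently of $r$; the dependence on the initial ratio enters through the exchange-rate coefficient $W$ (equivalently, through the asymmetric-game parameter), not through the exponent of the shift identity.
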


We revisit this result and provide an alternative proof
by constructing new and significantly simpler optimal budget-based bidding strategies.
%We then base our solution to all-pay bidding on the budget-based strategy.

\begin{lemma}
\label{lem:FP-poorman}
Let $\G$ be a strongly-connected mean-payoff all-pay poorman bidding game.
For every initial ratio $r = \frac{\init{B}}{\init{B}+\init{C}} \in (0,1)$ of \Max, for every  $\epsilon > 0$,
\Max has a deterministic budget-based strategy that guarantees a payoff of $\MP(\RT(\G, \frac{\init{B}-\epsilon}{\init{B}+\init{C}}))$.
\end{lemma}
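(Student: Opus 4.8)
The plan is to mirror the Richman warm‑up (Prop.~\ref{prop:Richman-lolli}) and push everything through the SCC framework of Sec.~\ref{sec:strength}, the essential novelty being that for poorman the controlling invariant must be phrased in terms of the budget \emph{ratio} rather than the absolute budget. Set $p=\frac{\init{B}-\epsilon}{\init{B}+\init{C}}$ and write $\nu=\init{B}-\epsilon$, $\mu=\init{C}+\epsilon$, so that $p=\frac{\nu}{\nu+\mu}$. Fix optimal positional strategies in $\RT(\G,p)$ and extract $\Pot_p$, $\St_p$, the successors $v^+,v^-$, and $S_{\max}$ as in Sec.~\ref{sec:strength}. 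By Cor.~\ref{cor:magic} (with $\nu,\mu$ as above) it suffices to build a deterministic budget‑based \Max strategy for which, along \emph{every} consistent play $\eta$,
\[
\liminf_{n\to\infty}\tfrac1n\big(\mu\cdot\investmax(\eta^n)-\nu\cdot\gainmax(\eta^n)\big)\ \geq\ 0 ,
\]
since then $\payoff(\eta)\geq\MP(\RT(\G,p))$, and continuity of $p\mapsto\MP(\RT(\G,p))$ gives the statement as $\epsilon\to0$.

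I would then look for a budget‑based bid (a strength‑scaled function of the current budgets) together with the rule ``move to $v^+$ upon winning'', chosen so as to enforce a single multiplicative invariant on the ratio $\rho_n=\frac{B_n}{B_n+C_n}$. The two design constraints are dual to the two cases of the induction. On a bidding that \Max \emph{loses} at $v$, his ratio must grow by a factor at least $(1+\alpha)^{\St_p(v)/S_{\max}}$; this is exactly where the first‑price poorman mechanism helps, because the winner \Min must pay at least \Max's bid, so making the bid the appropriate strength‑scaled fraction of the total budget forces \Min's payment to move $\rho$ by at least the required amount. On a bidding that \Max \emph{wins} at $v$ (investing $\St_p(v)$), his ratio must not fall below $(1+\alpha)^{-\St_p(v)/S_{\max}}$ times its previous value; a direct Möbius computation shows this holds provided $\rho$ currently exceeds a strength‑dependent threshold. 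Here $\alpha\in(0,1)$ is fixed through the shift function of Lem.~\ref{lem:shift-function} so that the identity $(1-\alpha)=(1+\alpha)^{-\lambda(\alpha)}$ is available and $\alpha$ encodes the pair $(\nu,\mu)$.

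The heart of the argument is the invariant
\[
\rho_n\ \geq\ \rho_0\cdot(1+\alpha)^{-\frac{\mu\cdot\investmax(\eta^n)-\nu\cdot\gainmax(\eta^n)}{S_{\max}}},
\]
proved by induction on $n$ in the spirit of Claim~\ref{cl:AP-Rich-inv}: the shift‑function identity converts the multiplicative ratio‑updates into the exponent and Bernoulli's inequality absorbs the gap between the linear budget update and the exponential bound. Since $\rho_n\leq1$ for every $n$, comparing exponents yields $\mu\cdot\investmax(\eta^n)-\nu\cdot\gainmax(\eta^n)\geq S_{\max}\log_{1+\alpha}\rho_0$, a constant independent of $n$; dividing by $n$ and taking $\liminf$ gives the required non‑negativity, and Cor.~\ref{cor:magic} closes the proof.

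The main obstacle is making the two design constraints compatible, i.e.\ getting the induction to close. Unlike Richman, where a loss \emph{replenishes} \Max's budget and keeps bids affordable (so the absolute‑budget invariant of Prop.~\ref{prop:Richman-lolli} suffices), in poorman \Max only ever spends, which is precisely why one is forced onto the ratio. The delicate points are: (i) the win‑case threshold on $\rho$ must hold exactly when the invariant is binding—and it does, because a small or negative value of $\mu\investmax-\nu\gainmax$ means \Max has lost more than he has won, hence \Min (paying first‑price) has spent heavily, hence $\rho$ is large and the threshold is cleared; and (ii) biddings in which a player cannot cover the prescribed bid must be treated by truncating the bid and checking that the invariant survives such forced wins. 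The slack obtained by replacing the true ratio $r$ by the strictly smaller $p$ (equivalently, taking $\nu<\init{B}$ and $\mu>\init{C}$) is what buys the room needed to reconcile ``force \Min to deplete when she wins'' with ``keep \Max solvent''.
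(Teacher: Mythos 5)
Your plan has a genuine gap, and it sits exactly at your ``delicate point (i)''. For the ratio invariant to close inductively, a single bid $x$ at a vertex of strength $s$ (budgets $B$, $C$, ratio $\rho=\frac{B}{B+C}$) must simultaneously satisfy the loss constraint $x \geq (B+C)\,\delta_1$ with $\delta_1 = 1-(1+\alpha)^{-\nu s/S_{\max}}$ (so that \Min's first-price payment lifts the ratio by the factor $(1+\alpha)^{\nu s/S_{\max}}$; recall \Min wins ties, so the worst case is that she pays exactly $x$), and the win constraint $x \leq \frac{\rho(1-\theta)}{1-\rho\theta}(B+C)$ with $\theta = (1+\alpha)^{-\mu s/S_{\max}}$ (so that paying $x$ does not shrink the ratio by more than the factor $\theta$). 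A short computation shows these are compatible iff $\rho \geq \rho^{*} := \frac{1-q^{\nu}}{1-q^{\mu+\nu}}$, where $q=(1+\alpha)^{-s/S_{\max}}$, and that $\rho^{*} > \frac{\nu}{\nu+\mu}$ strictly. Your invariant does \emph{not} imply $\rho \geq \rho^{*}$: whenever $\mu\cdot\investmax-\nu\cdot\gainmax$ is large and positive, its lower bound on $\rho$ lies far below $\rho^{*}$. Your argument for (i) covers only the opposite regime (the quantity small or negative, forcing $\rho$ large), and the bad regime genuinely occurs: if \Min simply bids $0$ forever, \Max wins every bidding and pays at least $(B+C)\delta_1$ each time while \Min pays nothing, so $1-\rho$ grows by a factor at least $(1-\delta_1)^{-1}$ per step; after finitely many steps $\rho<\rho^{*}$ (and shortly afterwards the prescribed bid is not even legal). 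At that point no bid satisfies both constraints and the induction cannot continue. This is precisely the structural asymmetry of poorman that the Richman argument does not face: in Richman a win is compensated by the transfer keeping the total budget at $1$, whereas in poorman a win is a pure loss of relative budget, so no bid schedule proportional to the total budget can sustain a ratio invariant against arbitrarily long winning streaks.

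The paper's proof resolves this differently. It normalizes \Min's budget to $1$ (so \Max's ``budget'' $B$ is really the ratio $B/C$) and has \Max bid proportionally to the \emph{excess} over a floor $W=\init{B}-\epsilon$, namely $\frac{\alpha s}{N}(B-W)$. Wins shrink the excess geometrically but can never exhaust it, and losses increase $B$ (since $\frac{B}{1-y}\geq B+Wy$), so the floor $B>W$ holds unconditionally and the two cases of the induction never conflict; the invariant becomes $(B(\pi)-W)^{N}\geq(1+\alpha)^{-\diff(\pi)}$. The price is that $B-W$ is unbounded above (it grows when \Min spends), so the final bound cannot be extracted from boundedness as in your plan; instead the paper proves a second claim: if $\diff$ ever dropped below a fixed constant, the invariant would force $B-W$ to be so large that \Max's bid exceeds $1$, i.e.\ \Min's entire budget, guaranteeing a bidding win and pushing $\diff$ back up. This ``forced win'' argument is the second essential ingredient, with no counterpart in your proposal; repairing your ratio-based approach (e.g.\ by bidding proportionally to the gap between $\rho$ and a ratio floor) leads essentially back to the paper's construction, including that extra claim.
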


\begin{proof}
Let $B_0 \in \Real$ be \Max's initial budget.
Throughout this proof, we keep \Min's budget normalized to $C = 1$ and use $B$ to denote \Max's budget. Thus, assuming \Max bids $x$ and \Min bids $y$, when \Max wins the bidding ($b>a$), we have $B'= B-b$, and when \Min wins the bidding \Max's new budget is $B' = \frac{B}{1-y}$. 
Let $\epsilon> 0$, and let $W = \init{B} -\epsilon$.
We construct a pure \Max strategy that maintains the invariant that $B \geq W$.
The key new insight is that when \Max loses a bidding, we have 
\begin{equation}\label{equation:budget_FPP_approx}
B' = \frac{B}{1-y} > \frac{B(1-y^2)}{1-y} = B(1+y) > B + Wy.
\end{equation}
Intuitively, the property states that every cent is $W$ times more valuable to \Min than it is to \Max.
For example, if \Max's budget is $2$ and \Min's budget is $1$,
then paying $0.1$ is twice as painful for \Min as it is for \Max.
Roughly, on average, this means that \Max wins $W \sim B_0$ times more biddings than \Min,
thus he guarantees a payoff close to $\MP\big(\RT(\G, \frac{\init{B}}{\init{B}+1})\big)$.

We now proceed to define formally a budget-based bidding strategy $f$ for \Max
that guarantees a payoff of at least $\MP\big(\RT(\G, p)\big)$, where $p= (\init{B}-\epsilon)/(\init{B}+1)$.
We pick $\alpha \in (0,1)$ satisfying
$\lambda(\alpha) = 1+\epsilon$ (see Lemma~\ref{lem:shift-function}).
We find vertex strengths using $\RT(\G, p)$ as in Section~\ref{sec:strength}.
Let $N = \max(W,1) \cdot S_{\max}$.
The strategy $f$ is defined as follows.
\begin{itemize}
	\item When the token is placed on a vertex $v$ with strength $s = \St_p(v)$
	and \Max's budget is $B$, \Max bids $f(B,s) = \frac{\alpha\cdot s}{N}(B-W)$.
	\item Upon winning, \Max moves the token to $v^+$.
\end{itemize}
We first show that \Max's bidding strategy $f$ is legal, by showing that we always have $B>W$.
Indeed, initially, we have $\init{B} > W$,
and whenever \Max loses a bidding his budget increases,
and when \Max wins a bidding his updated budget is $B-f(B,s) = B - \frac{\alpha\cdot s}{N}(B-W)$,
which is still greater than $W$ since $\frac{\alpha\cdot s}{N}<1$.

Next, for any finite play $\pi$, let
$\diff(\pi) = (1+\epsilon) \cdot \investmax(\pi)- (\init{B} - \epsilon) \cdot \gainmax(\pi) -N \cdot \log_{1+\alpha}(\epsilon)$.
Recall that $\investmax(\pi)$ and $\gainmax(\pi)$
denote the sum of the strengths of the vertices of $\pi$ in which \Max wins and loses, respectively.
We prove that the budget $B(\pi)$ of \Max after the play $\pi$
satisfies the following invariant, using induction on the length of $\pi$ and Bernoulli's inequality. The proof is similar to proofs for claims on asymmetric bidding games, which can be found in the full version.

\noindent{\bf Claim:} For every finite play $\pi$ coherent with the strategy $f$ of \Max, we have
\begin{equation}\label{eq:inv_FPPoorman}
    (B(\pi)-W)^N \geq (1+\alpha)^{-\diff(\pi)}.
\end{equation}

Next, we show that the claim above implies a lower bound on $\diff$. We describe the key ideas and similar proofs can be found for asymmetric bidding games in the full version. Observe Eq.~\ref{eq:inv_FPPoorman}. Since both $N$ and $W$ are constants, when $\diff(\pi)$ shrinks, the equation implies that $B(\pi)$ must grow, and in turn \Max's bid grows since it depends on $(B(\pi) - W)$. When \Max's bid is greater than $1$, he necessarily wins the bidding since \Min's budget is fixed to $1$, causing $\diff$ to increase.

\noindent{\bf Claim:} There exists $M \in \mathbb{R}$ such that 
for every finite play $\pi$ coherent with $f$, we have
\begin{equation}\label{eq:bound_FPPoorman}
(1+\epsilon) \investmax(\pi) - (B_0-\epsilon) \gainmax(\pi) \geq M.
\end{equation}

\stam{
%The proof of the claim, found in App~\ref{app:FP-poorman_bound}, can be summarised as follows.
We describe the key ideas of the proof below. Again, similar proofs can be found for asymmetric bidding games in the full version. 
Since the left-hand side of the equation is equal to $\diff(\pi) + N \cdot \log_{1+\alpha}(\epsilon)$,
proving the claim is equivalent to proving a lower bound for $\diff(\pi)$.
To do so, we show that $\diff(\pi)$ cannot get too low,
as past some threshold Equation \eqref{eq:inv_FPPoorman}
guarantees that the budget $B(\pi)$ of \Max is so high that
his next bid according to the strategy $f$
will be above $1$ (the whole budget of \Min).
Thus \Max is guaranteed to win the next bidding, which results in $\diff(\pi)$ going back up.
}

Combining the claim above %short Claim \ref{eq:bound_FPPoorman} 
with Corollary~\ref{cor:magic}
(plugging $\nu = B_0-\epsilon$ and $\mu=1+\epsilon$),
we obtain that any infinite play coherent with the strategy $f$
has a mean-payoff greater than $\MP\big(\RT(\G, \frac{\init{B}-\epsilon}{B_0+1})\big)$.
\end{proof}

\subsection{Asymmetric bidding games}
In this section we study the properties of asymmetric bidding games defined as follows.

\begin{definition}
{\bf (Asymmetric bidding games).}
For $W > 0$, a $W$-asymmetric game is a bidding game with the following payment scheme. Suppose Player~$1$ and~$2$'s bids are respectively $x$ and $y$. Then, \PO pays $x$ and \PT's pays $y\cdot W$ (hence the name ``asymmetric''). The budgets are updated as follows. We keep \PT's budget constant at $1$. Suppose \PO's budget is $B$, then his new budget is $B' = B-x + y \cdot W$.
\end{definition}

\stam{
In order to solve poorman games, we proceed in two steps:
First, we introduce a variant of Richman games,
the \emph{asymmetric games},
and we reduce poorman games to asymmetric games.
Then, we solve asymmetric games by using similar arguments to the ones used in Section \ref{sec:AP-Rich}.

\subsection{Asymmetric Bidding Games}
Given $W > 0$, a $W$-asymmetric game is a bidding game with the following payment scheme.
The budget of \PT is constantly normalised to $1$,
while the budget of \PO fluctuates according to a variant of the Richman payment scheme:
each round, \PO loses his bid and/or gains the bid of his opponent multiplied by $W$.
Formally, if we assume that \PO's budget is $B$ and his bid is $x \in [0,B]$,
while \PT's budget is $C = 1$ and her bid is $y \in [0,1]$,
then the budgets are updated as follows:
$C' = C = 1$ and $B' = B - x + Wy$.
}

\stam{%short -- possibly keep this for the future
\begin{itemize}[noitemsep]
\item {\bf First-price:} Only the highest bid matters:
$C' = C = 1$ and
$B' = 
\left\{
\begin{array}{ll}
B + Wy & \textup{ if } x \leq y;\\
B - x & \textup{ if } x > y.\end{array}
\right.$
\item {\bf All-pay:} Both bids matter: $C' = C = 1$ and $B' = B - x + Wy$.
\end{itemize}
}

The following theorem, whose proof can be found in the following sections (Lemmas~\ref{lem:pure-poorman-useless}, \ref{lem:APPdMax}, and~\ref{lemma:MP_APP_up} for pure strategies and Lemmas~\ref{lem:APPpMax} and~\ref{lem:APPpMin} for mixed strategies), shows that asymmetric bidding games have similar properties to Richman bidding: the initial budgets do not matter and the game has values w.r.t. pure and mixed strategies.

\begin{theorem}
\label{thm:values-asym}
(Informal) Let $\norm{\G}$ be a strongly-connected mean-payoff $W$-asymmetric bidding game. Then, 
\begin{itemize}[noitemsep,topsep=0pt]
\item {\bf Pure strategies:} For $W > 1$, with any positive initial ratio \Max can guarantee a sure-payoff that is arbitrarily close to $\MP\big(\RT(\norm{\G}, 1-\frac{1}{W})\big)$, and this is optimal. 
\item {\bf Mixed strategies:} With any positive initial ratio \Max can guarantee an almost-sure payoff that is arbitrarily close to $\MP\big(\RT(\norm{\G}, 1-\frac{1}{2W})\big)$, when $W >1$, and arbitrarily close to $\MP\big(\RT(\norm{\G}, \frac{W}{2})\big)$, when $W \leq 1$. 
\end{itemize}
\end{theorem}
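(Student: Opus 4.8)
The plan is to prove Theorem~\ref{thm:values-asym} through the same four-part architecture used for all-pay Richman bidding in Section~\ref{sec:AP-Rich}, with the Richman update replaced everywhere by the asymmetric update $B' = B - x + yW$. Throughout I would fix optimal positional strategies in the appropriate random-turn game $\RT(\norm{\G}, p)$ and use them to define the strengths $\St_p(v)$ and the moves $v^+, v^-$ of Section~\ref{sec:strength}; the target parameter is $p = 1-\tfrac1W$ in the pure case and $p = 1-\tfrac{1}{2W}$ (resp.\ $p = \tfrac W2$) in the mixed case when $W>1$ (resp.\ $W\le 1$). The whole reason for passing through asymmetric games is that, because $\PT$'s budget is pinned to $1$, one gets a clean \emph{one-sided} bound that replaces the Richman fact $B\le 1$: if $\Max$'s bid ever exceeds $1$ he wins deterministically, since $\PT$ can bid at most $1$. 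A secondary structural point to establish first is that, exactly as in Richman bidding, the initial budget $\init{B}$ is irrelevant (any positive value is absorbed into an additive constant that vanishes after dividing by $n$ in Corollary~\ref{cor:magic}), so the guarantees depend only on $W$.

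For the pure lower bound when $W>1$ (Lemma~\ref{lem:APPdMax}) I would mimic the first-price poorman construction of Lemma~\ref{lem:FP-poorman}: choose $\alpha$ with $\lambda(\alpha)$ tied to $W$ and $\epsilon$ via Lemma~\ref{lem:shift-function}, bid $\Max$'s strength scaled by a factor of the form $\tfrac{\alpha s}{N}(B-W)$ with $N=\max(W,1)S_{\max}$, and maintain by induction (using Bernoulli's inequality together with $(1-\alpha)=(1+\alpha)^{-\lambda(\alpha)}$) an invariant $(B-W)^N \ge (1+\alpha)^{-\diff}$, where $\diff$ is a suitable weighted combination of $\investmax$ and $\gainmax$. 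The invariant, combined with the ``bid exceeds $1 \Rightarrow$ win'' mechanism, forces a lower bound on the weighted win--loss difference, which Corollary~\ref{cor:magic} (with $\nu,\mu$ calibrated so that $p=1-\tfrac1W$) converts into the claimed payoff. For $W\le 1$ (Lemma~\ref{lem:pure-poorman-useless}) pure strategies are useless: since a lost bidding returns only $yW\le y$ to $\Max$, $\Min$ can recycle the Richman counter-strategy of Lemma~\ref{lem:negative-Richman}, win all but constantly many biddings, drive the token onto the minimal-weight cycle, and realize $\MP\big(\RT(\norm{\G},0)\big)$.

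The technical heart, and the step I expect to be hardest, is the mixed case (Lemmas~\ref{lem:APPpMax} and~\ref{lem:APPpMin}). Here I would have $\Max$ bid $x\sim\Unif[0,\alpha B\tfrac{s}{S_{\max}}]$ as in Lemma~\ref{lem:MP-Rich}, and define a ``luck'' process $(L_i)$ whose increments split on whether $\Max$ wins ($x>y$) or loses, but now weighted to reflect that a lost bidding changes the budget by $yW$ rather than $y$. The two deliverables are (i) a deterministic invariant tying $B$ to $L$ and to a weighted $\investmax,\gainmax$ combination, proved by induction exactly in the style of Claim~\ref{cl:AP-Rich-inv}, and (ii) the submartingale property $\mathbb{E}[\Delta L_i\mid\mathcal{R}_i]\ge 0$, which should reduce as in Claim~\ref{cl:LB-luck} to a one-variable integral over $\Min$'s bid $y$ that again factors as a nonnegative multiple of $y(\beta-y)$. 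The delicate part is calibrating the increment coefficients so that (i) and (ii) hold \emph{simultaneously} under the $W$-asymmetric update, and — since $B$ is no longer capped at $1$ — invoking the one-sided ``bid exceeds $1$'' bound in place of the Richman division by $B\le 1$; the threshold $W>1$ versus $W\le 1$ is precisely where the admissible calibration changes, producing the two values $1-\tfrac{1}{2W}$ and $\tfrac W2$. Once (i) and (ii) are in place, Lemma~\ref{lemma:concentration} gives $\liminf_n L_n/n\ge 0$ almost surely, and Corollary~\ref{cor:magic} yields the almost-sure payoff; the factor-of-two gain over the pure bound is the ``half the outcomes are good'' phenomenon of Example~\ref{ex:Richman}.

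Finally, the matching upper bounds (Lemma~\ref{lemma:MP_APP_up} for pure and Lemma~\ref{lem:APPpMin} for mixed) I would obtain by a dual construction in the weight-negated game $\norm{\G}^-$, letting $\Min$ run the $\Max$-side machinery. The subtlety is that, unlike Richman, the asymmetric payment scheme is not invariant under simply swapping the two players, so the weight-negation argument that is routine in the first-price setting~\cite{AHC19,AHI18,AHZ19} must be re-derived with the roles of the fixed and fluctuating budgets accounted for; reconciling the asymmetry parameter seen from each side is where the bookkeeping requires care, and I would handle it by running the same invariant-plus-submartingale analysis from $\Min$'s perspective rather than by an abstract symmetry appeal.
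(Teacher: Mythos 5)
Your high-level plan is the paper's own architecture: strengths and potentials from Sec.~\ref{sec:strength}, a deterministic budget invariant plus a ``bid above $1$ forces a win'' bound feeding Cor.~\ref{cor:magic} in the pure case, a luck submartingale combined with Lem.~\ref{lemma:concentration} in the mixed case, and upper bounds via role-swapping rather than abstract symmetry (this is exactly Lems.~\ref{lem:APPdMax}, \ref{lemma:MP_APP_up}, \ref{lem:APPpMax}, \ref{lem:APPpMin}). However, two of your concrete design choices are not merely uncalibrated but would fail. First, your pure-case construction transplants the budget floor of Lem.~\ref{lem:FP-poorman} into the asymmetric game: bids of the form $\frac{\alpha s}{N}(B-W)$ and an invariant in $(B-W)$. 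This conflates two unrelated uses of the letter $W$: in Lem.~\ref{lem:FP-poorman} the floor $W=\init{B}-\epsilon$ sits just below the initial budget and exists solely to extract an effective gain rate from the poorman renormalization (eq.~\eqref{equation:budget_FPP_approx}), whereas in $\norm{\G}$ the parameter $W$ is the payment multiplier and bears no relation to $\init{B}$. When $\init{B}\leq W$ your bids are negative, i.e., the strategy does not exist, yet the theorem asserts the value for \emph{every} positive initial budget --- and that is precisely the regime needed downstream, since Lem.~\ref{lem:asym} hands you an asymmetric game whose initial budget $\frac{\init{B}}{\init{C}}-W$ tends to $0$ as $W$ approaches $\frac{\init{B}}{\init{C}}$. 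The floor is also unnecessary: the entire point of the asymmetric payment scheme is that the update $B'=B-x+Wy$ already yields $B'\geq B+(W-1)x$ on every loss, so the paper bids $\frac{s}{N}\alpha B$ and maintains the multiplicative invariant $B(\pi)\geq\init{B}\cdot(1+\alpha)^{-\diff(\pi)/N}$ of Claim~\ref{claim:APPdMax_invariant}, valid for all $\init{B}>0$. (Accordingly, budget-independence is an output of this invariant, not a fact one can ``establish first'' by absorbing $\init{B}$ into an additive constant.)

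Second, in the mixed case with $W\leq 1$, bidding $x\sim\Unif[0,\alpha B\frac{s}{S_{\max}}]$ cannot be rescued by any calibration of luck coefficients. \Min's budget in $\norm{\G}$ is pinned at $1$, so her bids cost her nothing; against a uniform bid on $[0,\beta]$ she can bid $y=\beta$ at every turn (legal whenever $\beta\leq 1$, which holds from any small initial budget onward). She then wins every bidding almost surely, and \Max's expected budget change per round is $W\beta-\beta/2=\beta\left(W-\tfrac12\right)$, which is strictly negative for $W<\tfrac12$: the budget decays, \Max never wins a bidding, and the payoff is $\MP\big(\RT(\norm{\G},0)\big)$ rather than $\MP\big(\RT(\norm{\G},\tfrac{W}{2})\big)$. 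The paper's strategy in Lem.~\ref{lem:APPpMin} therefore bids $0$ with probability $1-W+\epsilon$ and uniformly in the interval only with probability $W-\epsilon$; this atom at $0$ is what makes high \Min bids wasteful and is exactly where the constant $\tfrac{W}{2}$ comes from. Relatedly, your two-way (win/lose) split of the luck increment is insufficient in both regimes: under the asymmetric update the sign of \Max's budget change is decoupled from the bidding outcome, so the pathwise invariant and the submartingale property can only hold simultaneously with a three-case definition ($x\leq y$, $y<x\leq Wy$, $x>Wy$ when $W>1$; $x\leq Wy$, $Wy<x\leq y$, $x>y$ when $W\leq 1$), together with the $\tfrac1W$-scaled interval $[0,\frac{s}{WS_{\max}}\alpha B]$ in the case $W>1$.
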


%Consider a strongly-connected mean-payoff game $\G$ and a ratio $r$. In the next section we show that asymmetric games are similar to Richman bidding games in that the initial budget ratio essentially does not matter. Thus, the asymmetric game $\norm{G}$ that corresponds to $\G$ w.r.t. some $W$ has a sure or almost-sure value. 

The following lemma relates the values of a $W$-asymmetric game $\norm{\G}$ with the values of $\G$ under all-pay poorman w.r.t. a ratio $r$. Intuitively, we obtain a strategy for \Max in $\G$ by simulating his strategy in $\norm{\G}$. Technically, to simulate the strategy in $\norm{\G}$, we need \Max's budget to be at least $\delta> 0$. This is indeed a technicality since the theorem above shows that the value does not depend on the ratio in asymmetric bidding games.

%We prove that budget-based strategies can easily be adapted from asymmetric games to poorman games.

\begin{lemma}\label{lem:asym}
Consider a mean-payoff all-pay poorman game $\G$ with initial ratio $r = \frac{\init{B}}{\init{B} + \init{C}}$.
Let $W <  \frac{\init{B}}{\init{C}}$, let $\norm{G}$ denote the $W$-asymmetric bidding game
played on the same graph as $\G$. There exists $\delta > 0$ such that
%and let $\init{\norm{B}}$ denote the budget $\frac{\init{B}}{\init{C}} - W$.
\[
\sMP(\G, r) \geq \sMP(\norm{G}, \delta),
\textup{ and } \asMP(\G, r) \geq \asMP(\norm{G}, \delta).
\]
\end{lemma}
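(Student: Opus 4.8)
The plan is to realize the $W$-asymmetric game $\norm{G}$ as a normalized shadow of the all-pay poorman game $\G$, and to have \Max play $\G$ by simulating an (almost-)optimal strategy for $\norm{G}$. Write $B$ and $C$ for the budgets of \Max and \Min in $\G$, put $\beta = B/C$, and maintain a virtual asymmetric configuration in which \Min's budget is frozen at $1$ and \Max's budget is $\norm{B}$. When the asymmetric strategy prescribes a bid $\norm{x}$, \Max bids $x = \norm{x}\cdot C$ in $\G$; when \Min answers with $y$, I record the asymmetric bid $\norm{y} = y/C \in [0,1]$ and advance the virtual budget by $\norm{B}' = \norm{B} - \norm{x} + W\norm{y}$, which is exactly the $W$-asymmetric update. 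I take the virtual initial budget to be $\norm{B}_0 = \frac{\init{B}}{\init{C}} - W$, which is positive precisely because $W < \frac{\init{B}}{\init{C}}$, and let $\delta$ be \Max's corresponding initial ratio in $\norm{G}$; by Theorem~\ref{thm:values-asym} the value of $\norm{G}$ does not depend on this choice.

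Two observations make the simulation faithful. First, the outcome of every bidding coincides in the two games: \Max wins in $\G$ iff $x>y$ iff $\norm{x} > \norm{y}$, which is the winning condition in $\norm{G}$, and ties favor \Min in both; hence the token is moved identically and the realized play, and therefore its $\payoff$, is the same in $\G$ and $\norm{G}$. Second, I would carry the invariant $\beta \geq \norm{B} + W$ along the play. It holds at the start by the choice of $\norm{B}_0$, and it is the all-pay analogue of the first-price estimate in eq.~\eqref{equation:budget_FPP_approx}: using $\frac{1}{1-\norm{y}}\geq 1+\norm{y}$ for $\norm{y}\in[0,1)$, one step gives $\beta' = \frac{\beta - \norm{x}}{1-\norm{y}}\geq(\beta-\norm{x})(1+\norm{y})\geq(\norm{B} + W -\norm{x})(1+\norm{y})$, and expanding yields $(\norm{B}+W-\norm{x})(1+\norm{y}) = \norm{B}'+W+(\norm{B}-\norm{x})\norm{y}\geq\norm{B}'+W$, the surplus $(\norm{B}-\norm{x})\norm{y}$ being nonnegative since $\norm{x}\leq\norm{B}$. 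The invariant forces $\beta > W > 0$, so \Max never runs out of budget, and $\norm{x}\leq\norm{B}<\beta$ guarantees that the translated bid $x=\norm{x}\cdot C\leq B$ is legal.

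These two facts yield both halves of the lemma. For pure strategies, fix an $\epsilon$-optimal deterministic \Max strategy in $\norm{G}$ from budget $\norm{B}_0$; its simulation in $\G$, pitted against any deterministic \Min strategy $g$, produces the same play as the asymmetric strategy pitted against the \Min strategy $\norm{g}$ induced by $g$ through the step-by-step correspondence of histories (legal in $\norm{G}$ because $\norm{y} = y/C\leq 1$), so payoffs transfer and $\sMP(\G,r)\geq\sMP(\norm{G},\delta)$. For mixed strategies the identical simulation couples $dist(f,g)$ on $\G$ with $dist(\norm{f},\norm{g})$ on $\norm{G}$: the realized bids of $f$ are the deterministic rescaling $x=\norm{x}\cdot C$ of the asymmetric bids, so the two play distributions are in measure-preserving bijection and the almost-sure guarantee passes through unchanged, giving $\asMP(\G,r)\geq\asMP(\norm{G},\delta)$.

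The main obstacle is the degenerate step $\norm{y} = 1$, i.e.\ \Min bidding her entire budget $y=C$, where the normalization $\beta' = (\beta-\norm{x})/(1-\norm{y})$ breaks down. I would treat it separately: after such a move \Min is bankrupt, so \Max wins every subsequent bidding with an arbitrarily small positive bid and secures $\MP(\RT(\G,1))$, which dominates any asymmetric value by monotonicity of $\MP(\RT(\G,\cdot))$ in its probability parameter; thus this case only helps \Max. The remaining care is purely measure-theoretic: checking that $\norm{g}$ is a bona fide deterministic strategy of $\norm{G}$ and that the coupling of the two play distributions is genuinely measure-preserving, both of which are routine given the lockstep correspondence between the two games.
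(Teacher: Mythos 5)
Your proposal is correct and takes essentially the same route as the paper's own proof: the identical bid-scaling simulation ($x = \norm{x}\cdot C$ for \Max, $\norm{y} = y/C$ for \Min), the identical invariant $\frac{B}{C} - W \geq \norm{B}$ seeded with the same virtual budget $\init{\norm{B}} = \frac{\init{B}}{\init{C}} - W$ and the same $\delta$, and the same step-by-step transfer of sure and almost-sure payoff guarantees. Your explicit patch for the degenerate bid $y = C$ (after which \Min is bankrupt and \Max can secure the maximal cycle mean with vanishing bids) is a small bonus: the paper's proof silently glosses over this case, where the rescaling by $C' = 0$ breaks down.
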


\begin{proof}
Suppose the initial budgets in $\G$ are $\init{B}$ and $\init{C}$ for \Min and \Max, respectively. Let $W < \frac{\init{B}}{\init{C}}$ and $\norm{\G}$ be the $W$-asymmetric bidding game that corresponds to $\G$. Let $\init{\norm{B}} = \frac{\init{B}}{\init{C}} - W$ and $\delta = \frac{\init{\norm{B}}}{\init{\norm{B}} + 1}$. We construct a strategy for \Max in $\G$ that simulates his strategy in $\norm{\G}$ when his initial budget in the latter is $\delta$. We use $C$ and $B$ to respectively denote \Min's and \Max's budgets in $\G$ and $\norm{B}$ to denote \Max's budget in $\norm{\G}$. \Max bids to maintain the following invariant:
\[\frac{B}{C} - W \geq \norm{B}.\] 

The definition of $\init{\norm{B}}$ implies that the invariant holds initially. Assuming the invariant holds, we show how \Max maintains it. Suppose \Max's bid in $\norm{\G}$ is $\norm{x} \in [0,\norm{B}]$. Then, \Max bids $\norm{x} \cdot C$ in $\G$. We claim that the bid is legal, i.e., that $\norm{x} \cdot C \leq B$. Indeed, plugging in $\norm{x} \leq \norm{B}$ in the invariant, and multiplying both sides by $C$, we obtain $\norm{x} \cdot C \leq B-C\cdot W \leq B$. 

Suppose \Min bids $y \leq C$ in $\G$. We simulate his strategy in $\norm{\G}$ using the bid $\norm{y} = \frac{y}{C}$. Recall that \Min's budget in $\norm{\G}$ is $1$. Since $y \leq C$, we have $\norm{y} \leq 1$, thus the bid is legal. Note that $\norm{y}$ accurately simulates $y$ since $x > y$ if and only if $\tilde{x} > \tilde{y}$. That is, \Max wins the bidding in $\G$ if and only if he  wins in $\norm{\G}$. Thus, when \Min wins the bidding in $\G$, we can continue the simulation of $\norm{\G}$ by using \Min's move in $\G$.

To conclude, we show that the invariant is preserved. We first claim that $\frac{B - x}{C} \geq W$. Indeed, \Max's bid in $\norm{\G}$ is $\norm{x} = \frac{x}{C}$ and re-arranging the invariant, we obtain $\frac{B}{C} \geq W- \norm{B}$. Combining the two we have:
    \[
    \frac{B - x}{C}
    \geq \frac{B}{C} - \frac{x}{C}
     = \frac{B}{C} - \norm{x}
    \geq W + \norm{B} - \norm{x}
    \geq W.
    \]
Recall that under all-pay poorman bidding, the budgets are updated to $B'= B-x$ and $C'= C-y$.  Combining with the above, we obtain
    \[
    \frac{B'}{C'}
    = \frac{B - x}{C - y}
    = \frac{B - x}{C} + \frac{\frac{B - x}{C}y}{C - y}
    \geq \frac{B - x}{C} + \frac{Wy}{C - y} \geq \]\[
    \geq \frac{B - x + Wy}{C}.
    \]

Recall that in the $W$-asymmetric bidding game \Max's budget is updated to $\norm{B}'= \norm{B}-\norm{x}+W\norm{y}$. 
We conclude by combining with the above:
    \begin{equation*}
    \frac{B'}{C'} - W
    \geq \frac{B - x + Wy}{C} - W
    \geq \norm{B} - \norm{x} + W \norm{y}
    = \norm{B}'.
    \end{equation*}
\stam{%old version
    Let $\init{\norm{B}} = \frac{\init{B}}{\init{C}} - W$ and  
    $\delta = \frac{\init{\norm{B}}}{\init{\norm{B}} + 1}$.
    This is a direct consequence of the fact that
    \Max can use the poorman game $\G$ with initial budgets $\init{B}$ and $\init{C}$
    to completely simulate the $W$-asymmetric game $\norm{G}$ with initial budget $\init{\norm{B}}$
    by preserving the following invariant:
    \[\frac{B}{C} - W \geq \norm{B},\]
    where $B$ and $C$ are the budgets of \Max and \Min in $\G$, and $\norm{B}$ is the budget of \Max in $\norm{\G}$.
    
    Note that, by definition of $\init{\norm{B}}$, the invariant holds initially,
    hence \Max can initiate the simulation.
    Let us now show how \Max can push the simulation further as long as the invariant is preserved:
    \Max simulates the action of bidding $\norm{x} \in [0,\norm{B}]$ in the game $\norm{G}$
    by bidding $\norm{x} \cdot C$ in the game $\G$.
    Note that this bid is valid, since, by the invariant,
    $\norm{x} \cdot C \leq \norm{B} \cdot C \leq B$.
    Then, for every bid $y \in [0,C]$ of \Min in $\G$,
    we can continue the simulation by supposing that \Min bids $\tilde{y} = \frac{y}{C} \in [0,1]$ in $\norm{G}$.
    Note that bidding $x$ against $y$ in $\G$ accurately represents
    bidding $\norm{x}$ against $\norm{y}$ in $\norm{\G}$:
    $x > y$ if and only if $\tilde{x} > \tilde{y}$,
    hence \Max wins the round in $\G$ if and only if he also wins in $\norm{\G}$,
    and can pursue the simulation by choosing in $\G$ the same successor vertex as in $\norm{\G}$.
    To conclude, we show that the invariant is preserved:
    by definition of the budget update in a poorman game, $B$ and $C$ are updated as
    $B'= B-x$ and $C'= C-y$, and by definition of the budget update in a $W$-asymmetric game,
    $\norm{B}$ is updated as $\norm{B}'= \norm{B}-\norm{x}+W\norm{y}$.
    Since the invariant was holding previously,
    and $\norm{x} \leq \norm{B}(\norm{\pi})$,
    we get
    \[
    \frac{B - x}{C}
    \geq \frac{B}{C} - \frac{x}{C}
     = \frac{B}{C} - \norm{x}
    \geq W + \norm{B} - \norm{x}
    \geq W.
    \]
    As a consequence, we obtain
    \[
    \frac{B'}{C'}
    = \frac{B - x}{C - y}
    = \frac{B - x}{C} + \frac{\frac{B - x}{C}y}{C - y}
    \geq \frac{B - x}{C} + \frac{Wy}{C - y}
    \geq \frac{B - x + Wy}{C}.
    \]
    Therefore,
    \begin{equation*}
    \frac{B'}{C'} - W
    \geq \frac{B - x + Wy}{C} - W
    \geq \norm{B} - \norm{x} + W \norm{y}
    = \norm{B}'.
    \end{equation*}
    }%of stam - old version
\end{proof}

%%%%%%%%%%%%%%%%%%%%%%%%%%%%%%%%%%%%%%%%%%%%
\subsection{Mean-payoff all-pay poorman games under pure strategies}\label{subsec:APPdMax}
%We now prove the part of Theorem \ref{thm:AP-poor} concerning deterministic strategies:
In this section, we show that, for a strongly-connected mean-payoff all-pay poorman bidding game $\G$
with initial budgets $\init{B}$ for \Max and $\init{C}$ for \Min
(thus the ratio is $r = \frac{\init{B}}{\init{B}+\init{C}}$), we have
\[
\sMP(\G, r) =
\left\{
\begin{array}{ll}
     \MP\big(\RT(\G, 1-\frac{\init{C}}{\init{B}})\big) & \textup{ if } \init{B}>\init{C};\\
     \MP\big(\RT(\G, 0)\big)  & \textup{ if } \init{B}\leq\init{C}.
\end{array}
\right.
\]

%\subsubsection*{Pure strategies are useless for $r \leq 0.5$}
We start with the second part and show that deterministic strategies are useless when the initial budget of \Max is not larger than the initial budget of \Min.
%deterministic strategies are useless once again, as \Min can ensure she wins every bidding.
%For this particular result, we do not need to use asymmetric games,
%and we prove it directly for poorman games.

\begin{lemma}
\label{lem:pure-poorman-useless}
Consider a strongly-connected mean-payoff all-pay poorman game $\G$
an initial budget $\init{B}$ of \Max and an initial budget $\init{C}$ of \Min such that $\init{B} \leq \init{C}$.
Then, \Min can counter every deterministic strategy of \Max
with a strategy that wins all biddings.
\end{lemma}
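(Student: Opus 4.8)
The plan is to exhibit a single explicit \Min strategy that copies \Max's bid at every turn, and to show it wins every bidding. The crucial observation is that \Max plays deterministically: at any finite history the bid $b$ that \Max is about to submit is a fixed number, which \Min can compute before choosing her own bid (the simultaneity of bidding is immaterial against a deterministic opponent, exactly as in the Richman analogue, Lem.~\ref{lem:negative-Richman}). \Min's strategy is simply to bid this same $b$. Since ties are broken in favour of \PT\ (i.e.\ \Min), this matching bid wins the bidding, and since under all-pay poorman bidding \emph{both} players forfeit their bids, the budgets update to $B' = B - b$ and $C' = C - b$ whenever \Min copies \Max.

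The core of the argument is then a one-line invariant maintained by induction on the number of biddings: at every point of the play we have $C - B \geq 0$. This holds initially by the hypothesis $\init{B} \leq \init{C}$. Assume it holds before a bidding and let $b \leq B$ be \Max's deterministic bid. Then $b \leq B \leq C$, so \Min's matching bid $b$ is legal; \Min wins the bidding by the tie-breaking convention; and afterwards $C' - B' = (C-b) - (B-b) = C - B \geq 0$, so the invariant is preserved. Consequently \Min can match \Max's bid at \emph{every} bidding and wins each one, which is precisely the claim.

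I expect essentially no obstacle here; if anything, the only point worth stating carefully is the legality of \Min's bid, which is exactly what the invariant $B \leq C$ guarantees. It is instructive to contrast this with the Richman case (Lem.~\ref{lem:negative-Richman}): there the winning bid is transferred additively to the loser, so matching does not preserve the budget gap and \Min is forced to concede finitely many biddings; under poorman bidding both bids are simply lost, so copying preserves the gap exactly and \Min wins \emph{all} biddings rather than all but a constant number. Finally, as noted in the paragraph preceding the lemma, the ability to win every bidding together with strong connectivity lets \Min drive the token onto a minimum-weight cycle and loop it forever, which is what upgrades this lemma to the equality $\sMP(\G, r) = \MP\big(\RT(\G, 0)\big)$ in the $\init{B} \leq \init{C}$ case of Thm.~\ref{thm:AP-poor}.
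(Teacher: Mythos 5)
Your proof is correct and is essentially identical to the paper's: \Min copies \Max's deterministic bid, wins by the tie-breaking convention, and the invariant $B \leq C$ is preserved since under all-pay poorman bidding both budgets decrease by the same amount $b$. The paper's proof is exactly this copying argument with the same invariant, so there is nothing to add.
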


\begin{proof}
\Min maintains the invariant that her budget exceeds \Max's budget while winning all biddings.
The assumptions of the lemma imply that the invariant holds initially. 
Suppose \Max's budget is $B$, \Min's budget is $C \geq B$, and \Max bids $b$. Then, \Min bids $b$ as well, wins the bidding (since she wins ties), and the new budgets are $C-b \geq B-b$ thus the invariant is maintained.
%Since $\init{B} \leq \init{C}$,
%starting from the first round
%\Min can always match the bid of \Max,
%winning the round (since she wins ties)
%while keeping a greater budget (since in an all-pay poorman game both bids are payed to the bank).
\end{proof}

% In order to get the desired result when the initial budget of \Max is higher than the initial budget of \Min,
% we prove the following lemma:

% \begin{lemma}
% \label{lem:APPdMax}
% In a strongly-connected mean-payoff all-pay poorman game $\G$ where the initial budget $\init{B}$ of \Max
% is greater than the initial budget $\init{C}$ of \Min,
% for every $\epsilon > 0$, \Max has a deterministic budget-based strategy that guarantees an expected value of
% $\MP(\RT(\G, 1-(1+\epsilon)\frac{\init{C}}{\init{B}}))$.
% \end{lemma}

%\subsubsection*{Optimal pure strategies for \Max}

%We proceed to the case where $\init{B} > \init{C}$, and show that $\sMP(\G,r) = \MP(\RT(\G, (1-\frac{C_0}{B_0})))$.
%First, we prove that $\sMP(\G,r) \geq \MP(\RT(\G, (1-\frac{C_0}{B_0})))$.
We continue to the case in which \Max's budget is greater than \Min's budget. Consider a strongly-connected all-pay poorman game $\G$ and initial budget $\init{B}$ and $\init{C}$ such that $\init{B} > \init{C}$. Lem.~\ref{lem:APPdMax} below shows that for every $W>1$ and every positive initial budget,
\Max has a strategy in the asymmetric game $\norm{\G}$ that ensures a payoff
which is arbitrarily close to $\MP\big(\RT(\G, 1-\frac{1}{W})\big)$.
By letting $W\rightarrow\init{B}/\init{C}$ from below and using the connection between asymmetric bidding games and all-pay poorman bidding games in Lem.~\ref{lem:asym}, we show that \Max can guarantee a payoff in $\G$ that is arbitrarily close to $\MP\big(\RT(\G, 1-\frac{\init{C}}{\init{B}})\big)$.

%In Lemma~\ref{lemma:MP_APP_up} we then construct an optimal strategy for \Min, thus showing that this bound is tight and proving the claim of Theorem~\ref{thm:AP-poor} concerning deterministic strategies.

%We proceed to the case where $\init{B} > \init{C}$.  The following lemma formalizes the intuition that the initial ratio does not matter in asymmetric games when $W>1$. Consider a mean-payoff all-pay poorman game $\G$. It shows that for every $W > 1$, the sure-value of the asymmetric game $\norm{\G}$ essentially equals $\MP\big(\RT(\G, 1-\frac{1}{W})\big)$. Letting $W\rightarrow\init{B}/\init{C}$ from below and using Lem.~\ref{lem:asym}, we obtain that the sure value of $\G$ under all-pay poorman bidding with $r = \frac{\init{B}}{\init{B} + \init{C}}$ is $\MP\big(\RT(\G, 1-\frac{\init{C}}{\init{B}})\big)$.

\stam{
To get the desired result when $\init{B} > \init{C}$, 
we rely on Lemma \ref{lem:asym}:
Since, as proved in \cite{Cha12,Sol03}, $\MP\big(\RT(\G, p)\big)$ is continuous in $p$,
it is sufficient to prove that for every $W > 1$,
if we now consider the $W$-asymmetric game $\norm{\G}$
played on the same graph as $\G$, then
for every initial ratio $\norm{r}$ we have
\[
\sMP(\norm{G}, \norm{r}) \geq \MP\big(\RT(\G, 1-\frac{1}{W})\big).
\]
This result is an immediate corollary of the following lemma.
\stam}

\begin{lemma}
\label{lem:APPdMax}
Consider a strongly-connected mean-payoff game $\G$. For $W > 1$, let $\norm{\G}$ be the $W$-asymmetric game obtained from $\G$.
For every $\epsilon > 0$, \Max has a pure budget-based strategy ensuring the payoff
$\MP(\RT(\G, 1-\frac{1+\epsilon}{W + \epsilon}))$.
\end{lemma}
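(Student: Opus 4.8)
The plan is to mirror the budget-based construction used for first-price poorman bidding in Lemma~\ref{lem:FP-poorman}, adapted to the all-pay dynamics of the $W$-asymmetric game. Set $p = 1 - \frac{1+\epsilon}{W+\epsilon}$ and pick $\alpha \in (0,1)$ with $\lambda(\alpha) = 1+\epsilon$ (Lem.~\ref{lem:shift-function}); fix optimal positional strategies in $\RT(\G, p)$ and use them to define, as in Sec.~\ref{sec:strength}, the neighbours $v^+, v^-$ and strengths $s = \St_p(v)$, and put $N = W\cdot S_{\max}$. \Max's strategy is then: at a vertex $v$ with strength $s$ and budget $B$, bid $x = \frac{\alpha s}{N} B$, and move to $v^+$ upon winning. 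This is clearly budget-based, and legal since $\frac{\alpha s}{N} < 1$ keeps $B$ positive.

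The first step is to reduce \Min's behaviour to two extreme cases, exploiting that in all-pay bidding \Max's updated budget $B' = B - x + yW$ is monotone in \Min's bid $y$. If \Min loses ($y < x$) then $B' \ge B - x$ and the token moves to $v^+$, so $\investmax$ increases by $s$; if \Min wins ($y \ge x$) then $B' \ge B + (W-1)x$ and $\gainmax$ increases by $s$. Setting $\diff(\pi) = (1+\epsilon)\investmax(\pi) - (W-1)\gainmax(\pi)$, the core of the proof is the invariant $B(\pi) \ge \init{B}\,(1+\alpha)^{-\diff(\pi)/N}$, proved by induction on $|\pi|$. In the gain case one uses $1 + \alpha\frac{(W-1)s}{N} \ge (1+\alpha)^{(W-1)s/N}$ (convexity/Bernoulli, valid since $\frac{(W-1)s}{N}\in[0,1]$ as $W>1$), and in the investment case one uses $1 - \frac{\alpha s}{N} \ge (1+\alpha)^{-(1+\epsilon)s/N}$, which follows from Bernoulli together with the shift-function identity $1-\alpha = (1+\alpha)^{-(1+\epsilon)}$ (Lem.~\ref{lem:shift-function}); the exponents match exactly by the choice $\lambda(\alpha)=1+\epsilon$.

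Next I would show that the invariant forces $\diff$ to be bounded below by a constant $M$. The mechanism is self-correcting: if $\diff(\pi)$ ever drops below a threshold, the invariant makes $B(\pi)$ so large that at any vertex of positive strength $s \ge s^+_{\min} := \min\{\St_p(v) : \St_p(v) > 0\}$ the prescribed bid $\frac{\alpha s}{N}B$ exceeds $1$, whereas \Min's budget is only $1$; hence \Max surely wins, which increases $\diff$. Vertices of zero strength leave $\diff$ unchanged, so $\diff$ cannot decrease past this threshold by more than one step's worth $(W-1)S_{\max}$; formally I would take a shortest prefix violating the bound and derive a contradiction at its last (losing) step. Finally, plugging $\mu = 1+\epsilon$ and $\nu = W-1$ into Corollary~\ref{cor:magic} --- for which $p = \frac{\nu}{\nu+\mu} = 1-\frac{1+\epsilon}{W+\epsilon}$ --- and using $\mu\investmax(\pi^n) - \nu\gainmax(\pi^n) = \diff(\pi^n) \ge M$ so that $\liminf_n \diff(\pi^n)/n \ge 0$, yields $\payoff(\pi) \ge \MP\big(\RT(\G, 1-\frac{1+\epsilon}{W+\epsilon})\big)$ for every coherent play, as required.

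I expect the main obstacle to be the lower bound on $\diff$: unlike Richman bidding, where the two budgets sum to a constant and bound the exponent directly, here \Max's budget is unbounded above, so the argument must instead run through the ``bid exceeds $1$'' saturation mechanism, and some care is needed to handle zero-strength vertices and to quantify the threshold (and hence $M$) explicitly from the invariant. The inductive step for the invariant itself should be routine once the two Bernoulli estimates and the shift-function identity are in place.
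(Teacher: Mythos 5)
Your proposal is correct and follows essentially the same route as the paper's proof: the same budget-based bid $\frac{\alpha s}{N}B$ with strengths taken from $\RT(\G,p)$, the same invariant $B(\pi) \geq \init{B}\,(1+\alpha)^{-\diff(\pi)/N}$ established by induction via Bernoulli's inequality and the shift-function identity $(1-\alpha)=(1+\alpha)^{-(1+\epsilon)}$, the same saturation argument (a sufficiently large budget forces a bid above \Min's unit budget, so $\diff$ is bounded below via a shortest-violating-prefix contradiction, with zero-strength vertices handled separately), and the same conclusion by plugging $\mu = 1+\epsilon$, $\nu = W-1$ into Corollary~\ref{cor:magic}. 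The only deviation is your normalization $N = W\cdot S_{\max}$ in place of the paper's $N=\max(S_{\max},(W-1)S_{\max})$, which is harmless since both choices keep the Bernoulli exponents $\frac{(W-1)s}{N}$ and $\frac{s}{N}$ in $[0,1]$.
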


\begin{proof}
Let $\epsilon > 0$.
We set $p = 1-\frac{1+\epsilon}{W+ \epsilon}$,
and find the strengths of the vertices of $\norm{\G}$ using $\RT(\G, p)$.
We define \Max's strategy $f$ in $\norm{G}$ as follows.
Let $\alpha \in (0,1)$ such that $\lambda(\alpha) = 1+\epsilon$ (see Lemma~\ref{lem:shift-function}),
and let $N = \max(S_{\max},(W-1) S_{\max})$.
When the token is on a vertex $v$ with strength $s = \St_{p}(v)$ and \Max's budget is $B$,
\begin{itemize}[noitemsep]
	\item
	\Max bids $\frac{s}{N}\alpha \cdot B$;
	\item Upon winning, \Max moves the token to $v^+$.
\end{itemize}
To prove the lemma, we show that every infinite play $\pi$ coherent with $f$ satisfies
$\payoff(\pi)
\geq
\MP\big(\RT(\G, p)\big)$.
Suppose \Max's budget is $B$, he bids $x$, and \Min bids $y$. Let $B'$ be \Max's updated budget. Then, 

%First, by definition of the budget update in an asymmetric game, we get that
%for every finite play $\pi$ coherent with $f$,
%if in the next bidding \Max bids $x \in \mathbb{R}$ and \Min bids $y \in \mathbb{R}$,
%the updated play $\pi'$ satisfies

\begin{equation*}
B'
=
B-x + Wy
\geq
\left\{
\begin{array}{ll}
B + (W-1)\cdot x & \textup{ if } x \leq y;\\
B-x & \textup{ if } x > y.
\end{array}
\right.
\end{equation*}
\noindent
Where the top bound is obtained since $y \geq x$ and the bottom one since $y \geq 0$.

Intuitively, this means that \Max gains $W-1$ times more for a loss than what he pays for a win. 
As a consequence, \Max wins at least $W-1$ times
whenever \Min wins once.
%or the budget $B$ of \Max increases.
Formally, let $\mu = 1+ \epsilon$ and $\nu = W-1$, and for every finite play $\pi$, let $B(\pi)$ denote \Max's budget following $\pi$ and set
$
\diff(\pi) = \mu \cdot \investmax(\pi)- \nu \cdot \gainmax(\pi)
$, which intuitively keeps track of the difference between the number of biddings \Max loses and wins. That is, as $\diff(\pi)$ drops, \Max wins less biddings. The following claim establishes an invariant between \Max's budget and $\diff(\pi)$. Note that no luck is required since we are dealing with pure strategies.

\begin{restatable}{claimth}{APPdMaxInvariant}
\label{claim:APPdMax_invariant}
    For every finite play $\pi$ coherent with $f$, $B(\pi) \geq \init{B} \cdot (1+\alpha)^{-\frac{\diff(\pi)}{N}}$.
\end{restatable}
\noindent
We use this result to prove a lower bound for $\diff$.
Claim \ref{claim:APPdMax_invariant} implies that when $\diff(\pi)$ drops,
\Max's budget increases and will eventually be so high that his bid will be greater than $1$.
Recall that \Min's budget is set to $1$. 
Thus, \Max wins the bidding causing $\diff$ to go back up.
Therefore:
\begin{restatable}{claimth}{APPdMaxBound}
\label{claim:APPdMax_bound}
    There exists $M \in \mathbb{R}$ such that
    $\diff(\pi) \geq  M$
    for every finite play $\pi$ coherent with $f$.
\end{restatable}
\noindent
Since $\frac{\nu}{\mu + \nu} = 1 - \frac{\mu}{\mu + \nu} = p$, we can now conclude through the use of Corollary \ref{cor:magic}.
For every play $\pi$ coherent with $f$, if for all $n \in \mathbb{N}$ we denote by $\pi^n$
the prefix of $\pi$ of size $n$, we have
\[
\payoff(\pi)
\geq
\MP(\RT(\G, p)) +
\frac{\mu + \nu}{\mu\nu} \cdot \liminf_{n\rightarrow \infty}\frac{H(\pi^n)}{n}
\geq
\MP(\RT(\G, p)).
\qedhere
\]
\end{proof}

We now prove the claims that appear in the proof above.

\paragraph{Proof of Claim \ref{claim:APPdMax_invariant}}
Let $\pi$ be a finite play coherent with the strategy $f$.
We show by induction over the length of $\pi$ that
$B(\pi) \geq \init{B} \cdot (1+\alpha)^{-\frac{\diff(\pi)}{N}}$.
At the start of the game, the equation holds since $\init{B}$ is the initial budget of \Max,
and $H$ is $0$ initially.
For the induction step,
assume that the equation holds for some finite prefix $\rho$ of $\pi$
that ends in a vertex $v$ of strength $s = \St_p(v)$,
and consider the next round.
Let $x$ denote the bid of \Max,
and $y$ denote the bid of \Min.
We show that the equation still holds for the updated play $\rho'$
by differentiating the case where \Max loses the bidding and the one where he wins.
\begin{enumerate}
    \item
        Assume that \Max loses the bidding.
        We start with the definition of the budget update and
        we underapproximate $y$ with $x$, and use the fact that $\nu = W - 1$.
        Then, we apply the fact that, according to the strategy $f$,
        the bid $x$ of \Max is $\alpha \frac{s}{N} \cdot B(\pi)$,
        and we factorise by $B(\rho)$:
        \begin{align}
        B(\rho') = B(\rho) - x + Wy
        &\geq B(\rho) + \nu x
        = B(\rho) +  \frac{\nu s}{N} \alpha \cdot B(\rho)
        = B(\rho) \cdot \Big(1+  \frac{\nu s}{N} \alpha\Big).\nonumber
        \end{align}
        Since $\alpha > -1$ and
        $\frac{\nu s}{N} = \min(\frac{\nu s}{S_{max}},\frac{s}{S_{max}}) \in [0,1]$,
        we can apply Bernoulli's inequality.
        We then use the induction hypothesis,
        and we conclude by using the fact that after one bidding happening at a vertex of strength $s$,
        $\diff$ decreases by at most $\nu s$, i.e., $\diff(\rho') \geq \diff(\rho) - \nu s$.
        \begin{equation}
        B(\rho')
        \geq
        B(\rho) \cdot  (1+\alpha)^{\frac{\nu s}{N}}
        \geq
        \init{B} \cdot (1+\alpha)^{-\frac{\diff(\rho) - \nu s}{N}}
        \geq
        \init{B} \cdot (1+\alpha)^{-\frac{\diff(\rho')}{N}}. \nonumber
        \end{equation}
    \item
        Assume that \Max wins the bidding.
        Once more, we start with the definition of the budget update.
        This time we underapproximate $y$ with $0$,
        we use the fact that the bid $x$ of \Max is $\alpha \frac{s}{N} \cdot B(\rho)$,
        and we factorise by $B(\rho)$:
        \begin{align}
        B(\rho') = B(\rho) - x + W y
        &\geq B(\rho) - x
        = B(\rho) - \frac{s}{N} \alpha B(\rho)
        = B(\rho) \cdot \Big(1 - \frac{s}{N} \alpha\Big). \nonumber
        \end{align}
        Since $-\alpha > -1$ and
        $\frac{s}{N} = \min(\frac{s}{S_{max}},\frac{s}{\nu S_{max}}) \in [0,1]$,
        we can apply Bernoulli's inequality.
        Moreover, by Lemma \ref{lem:shift-function}, since $\lambda(\alpha) = \mu$ by definition,
        we get $(1 - \alpha)^{\frac{s}{N}} = (1 + \alpha)^{-\frac{\mu s}{N}}$.
        We then apply the induction hypothesis, and, finally, we use the fact that, since \Max won the bidding by supposition,
        and \Max chooses the successor $v^+$ whenever he wins,
        we have $\diff(\rho') = \diff(\rho) + \mu s$.
        \begin{equation}
        B(\rho')
        \geq
        B(\rho) \cdot (1-\alpha)^{\frac{s}{N}}
        =
        B(\rho) \cdot (1+\alpha)^{-\frac{\mu s}{N}}
        \geq
        \init{B} \cdot (1+\alpha)^{-\frac{\diff(\rho) + \mu s}{N}}
        =
        \init{B} \cdot (1+\alpha)^{-\frac{\diff(\rho')}{N}}. \nonumber
        \end{equation}
\end{enumerate}
\hfill$\triangleleft$

\paragraph{Proof of Claim \ref{claim:APPdMax_bound}}
Let $M =  - N \cdot \log_{1+\alpha}(\frac{N}{\alpha S_{\min}}) - \nu S_{\max}$.
We show that every finite play $\pi$ coherent with $f$ satisfies $\diff(\pi) \geq  M$.
Let us assume, towards building a contradiction,
that there exists a finite play $\pi$ coherent with $f$
such that the value of $\diff$ drops under $M$ along $\pi$.
Let $\rho'$ denote the smallest prefix of $\pi$ satisfying $\diff(\rho') < M$.
Note that $\rho'$ cannot be the empty play since the value of $\diff$ is initially $0$.
Let $\rho$ be the play obtained by deleting the last step of $\rho'$.
Since in one step the value of $\diff$ decreases by at most $\nu S_{\max}$,
we get that
$\diff(\rho) \leq \diff(\rho') + \nu S_{\max} < - N \cdot \log_{1+\alpha}(\frac{N}{\alpha S_{\min}})$.
By applying Claim \ref{claim:APPdMax_invariant}, we obtain
\begin{equation}\label{eq:APPdMax_bound}
B(\rho) \geq \init{B} \cdot (1+\alpha)^{-\frac{\diff(\rho)}{N}}
> \frac{N}{\alpha S_{\min}}.
\end{equation}
Let $s$ denote the strength of the vertex reached by $\rho$.
If $s = 0$, then $\diff(\rho) = \diff(\rho')$,
which contradicts the fact that $\rho'$ is the smallest prefix of $\pi$ satisfying $\diff(\rho') < M$.
Otherwise, we have $s \geq S_{\min}$, hence,
combining the definition of the strategy $f$ with Equation \eqref{eq:APPdMax_bound}
yields that the bid $x$ of \Max
in the step going from $\rho$ to $\rho'$ satisfies
\[
x = \alpha \frac{s}{N} \cdot B(\rho)
> \frac{\alpha s}{N} \cdot  \frac{N}{\alpha S_{\min}}
\geq 1.
\]
In other words, \Max bids more than $1$, which is the whole budget of \Min.
As a consequence, \Max is guaranteed to win the bidding going from $\rho$ to $\rho'$,
hence $\diff(\rho) < \diff(\rho')$,
which, once again, contradicts the fact that $\rho'$
is the smallest prefix of $\pi$ satisfying $\diff(\rho') < M$.
\hfill$\triangleleft$

We conclude by showing a matching lower bound, namely we show that $\sMP(\G,r) \leq \MP\big(\RT(\G, 1-\frac{C_0}{B_0})\big)$.
Recall that our definition of payoff favors \Min, thus by proving the claim for \Max, we prove a stronger claim. 
Let $\G$ be a strongly-connected mean-payoff all-pay poorman game. Lem.~\ref{lemma:MP_APP_up} below shows that for all $W < 1$, for every initial budgets and every pure strategy of \Min in the $W$-asymmetric bidding game $\norm{G}$,
\Max has a strategy ensuring a payoff arbitrarily close to $\MP\big(\RT(\G, W)\big)$.
Recall that Lem.~\ref{lem:asym} shows that each strategy of \Max in the $W$-asymmetric game $\norm{G}$
can be transformed into a strategy of \Max in $\G$ with the same payoff when the initial budgets $\init{B}$ and $\init{C}$ satisfy $W < \frac{\init{B}}{\init{C}}$. Therefore, Lem.~\ref{lemma:MP_APP_up} implies that
if $\init{B} < \init{C}$ and we fix a pure strategy of \Min,
\Max can ensure a payoff arbitrarily close to $\MP\big(\RT(\G, \frac{\init{B}}{\init{C}})\big)$.
By swapping the roles of \Min and \Max, we obtain as a corollary that $\sMP(\G,r) \leq \MP\big(\RT(\G, 1-\frac{C_0}{B_0})\big)$, as required.

We conclude by showing a matching lower bound, namely we show that $\sMP(\G,r) \leq \MP\big(\RT(\G, 1-\frac{C_0}{B_0})\big)$.
Recall that our definition of payoff favors \Min, thus by proving the claim for \Max, we prove a stronger claim. 
Let $\G$ be a strongly-connected mean-payoff all-pay poorman game. Lem.~\ref{lemma:MP_APP_up} below shows that for all $W < 1$, for every initial budgets and every pure strategy of \Min in the $W$-asymmetric bidding game $\norm{G}$,
\Max has a strategy ensuring a payoff arbitrarily close to $\MP\big(\RT(\G, W)\big)$.
Recall that Lem.~\ref{lem:asym} shows that each strategy of \Max in the $W$-asymmetric game $\norm{G}$
can be transformed into a strategy of \Max in $\G$ with the same payoff when the initial budgets $\init{B}$ and $\init{C}$ satisfy $W < \frac{\init{B}}{\init{C}}$. Therefore, Lem.~\ref{lemma:MP_APP_up} implies that
if $\init{B} < \init{C}$ and we fix a pure strategy of \Min,
\Max can ensure a payoff arbitrarily close to $\MP\big(\RT(\G, \frac{\init{B}}{\init{C}})\big)$.
By swapping the roles of \Min and \Max, we obtain as a corollary that $\sMP(\G,r) \leq \MP\big(\RT(\G, 1-\frac{C_0}{B_0})\big)$, as required.

\begin{lemma}\label{lemma:MP_APP_up}
Let $W \leq 1$. Consider a strongly-connected mean-payoff $W$-asymmetric bidding game $\norm{\G}$.
For every strategy $g$ of \Min, for every $\epsilon > 0$,
\Max has a pure budget-based strategy ensuring the payoff
$\MP\big(\RT(\G, (1-\epsilon)W)\big)$ against $g$.
\end{lemma}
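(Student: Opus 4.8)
The plan is to follow the template of Lemma~\ref{lem:APPdMax} together with its two claims (Claim~\ref{claim:APPdMax_invariant} and Claim~\ref{claim:APPdMax_bound}), but to use in an essential way that $g$ is a \emph{fixed pure} strategy. I set $p = (1-\epsilon)W$ and compute the potentials $\Pot_p$ and strengths $\St_p$ from $\RT(\G,p)$ as in Sec.~\ref{sec:strength}. I pick $\alpha\in(0,1)$ with $\lambda(\alpha)=1+\epsilon$ (Lem.~\ref{lem:shift-function}), set $\nu = W$ and $\mu = \frac{1}{1-\epsilon}-W$ (so that $\frac{\nu}{\mu+\nu}=p$, while $\mu \ge (1+\epsilon)(1-W)$ and $\nu \le W$), and let $N = S_{\max}$. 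Since $g$ is pure, at each step Max can compute Min's forthcoming bid $y$ from the history \emph{before} he bids. Max uses the budget-based threshold $\theta = \alpha\frac{s}{N}B$ at a vertex of strength $s = \St_p(v)$ with budget $B$: if $y < \theta$ he bids $y+\delta$ and moves to $v^+$; otherwise he bids $0$ and concedes the bidding. The tie-margins $\delta$ are taken from a summable ``spare change'' reserve, exactly as in Prop.~\ref{prop:Richman-lolli}.

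The point of bidding $y+\delta$ rather than the full threshold $\theta$ is the crux of the $W\le 1$ case. A win then costs Max only the net amount $(1-W)y$ (he pays $y+\delta$ but recovers $Wy$), whereas a concession nets him $+Wy \ge W\theta$. Shading the bid to just above $y$ discounts the cost of a win by the factor $(1-W)$, and this discount is precisely what allows a ``weak'' Max to sustainably win a $p=(1-\epsilon)W$ fraction of the strength-weighted biddings. The naive uniform bid $\theta$ would cost up to $\theta$ on a win (when $y$ is tiny) and does not suffice --- it already fails against a $g$ that ties Max's anticipated bid --- which is exactly why the hypothesis fixes a pure $g$.

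I would then prove the analogue of Claim~\ref{claim:APPdMax_invariant}: for every finite play $\pi$ coherent with Max's strategy $f$, $B(\pi)\ge \init{B}\cdot(1+\alpha)^{-\diff(\pi)/N}$, where $\diff = \mu\cdot\investmax - \nu\cdot\gainmax$. The induction splits on the two outcomes. On a concession ($y\ge\theta$) the budget grows by at least $1+W\alpha\frac{s}{N} \ge (1+\alpha)^{Ws/N} \ge (1+\alpha)^{\nu s/N}$ (Bernoulli and $\nu\le W$) while $\diff$ drops by $\nu s$; on a win ($y<\theta$) the budget shrinks by at most $1-(1-W)\alpha\frac{s}{N} \ge (1-\alpha)^{(1-W)s/N} = (1+\alpha)^{-(1+\epsilon)(1-W)s/N} \ge (1+\alpha)^{-\mu s/N}$ (Bernoulli, Lem.~\ref{lem:shift-function}, and $\mu\ge(1+\epsilon)(1-W)$) while $\diff$ grows by $\mu s$; in both cases the invariant is preserved. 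From here the analogue of Claim~\ref{claim:APPdMax_bound} follows verbatim: if $\diff$ ever dropped below a fixed constant $M$, the invariant would force $B$ so large that $\theta=\alpha\frac{s}{N}B>1\ge y$ at any strength-positive vertex, so Max would be bound to win there (the anticipated $y\le 1<\theta$ is affordable) and push $\diff$ back up.

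Finally, since $\frac{\nu}{\mu+\nu}=p$, Corollary~\ref{cor:magic} applied with these $\mu,\nu$ yields, for every play $\pi$ coherent with $f$, $\payoff(\pi)\ge \MP(\RT(\G,p)) + \frac{\mu+\nu}{\mu\nu}\liminf_{n}\frac{\diff(\pi^n)}{n}\ge \MP(\RT(\G,(1-\epsilon)W))$, because $\diff\ge M$ is bounded below. I expect the main obstacle to be conceptual rather than computational: recognizing that Max must bid reactively, shading to $y+\delta$ using the fixed pure $g$, in order to obtain the $(1-W)$ discount that makes the weak-player regime work, and then choosing $\mu,\nu,\alpha$ so that \emph{both} steps of the invariant close while the target probability $p=(1-\epsilon)W$ is realized. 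The treatment of the spare change and of strength-zero vertices is routine, following Prop.~\ref{prop:Richman-lolli} and Claim~\ref{claim:APPdMax_bound}.
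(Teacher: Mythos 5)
Your proposal is correct and follows essentially the same route as the paper's own proof of Lemma~\ref{lemma:MP_APP_up}: since $g$ is pure, \Max anticipates \Min's bid $y$, either concedes by bidding $0$ or narrowly overbids when $y$ falls below a budget-proportional threshold $\alpha\frac{s}{N}B$, then proves the invariant $B(\pi)\geq \init{B}\cdot(1+\alpha)^{-\diff(\pi)/N}$ with $\diff=\mu\cdot\investmax-\nu\cdot\gainmax$, derives a uniform lower bound on $\diff$ from the forced-win argument, and concludes via Corollary~\ref{cor:magic} with $\frac{\nu}{\mu+\nu}=(1-\epsilon)W$. The only deviations are cosmetic parameter choices --- the paper overbids multiplicatively by $(1+\epsilon W)y$ with $\lambda(\alpha)=\frac{1}{1-\epsilon}$ and $N=(1+\epsilon W)S_{\max}$, whereas you overbid additively using a summable spare-change margin with $\lambda(\alpha)=1+\epsilon$, $N=S_{\max}$, and the verified slack $\mu\geq(1+\epsilon)(1-W)$ --- and your additive margin in fact treats the tie-breaking corner case $y=0$ at a positive-strength vertex more carefully than the paper's multiplicative bid does.
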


\begin{proof}
Let $\epsilon > 0$, and let $g$ be a pure \Min strategy in $\norm{\G}$. We define \Max's strategy $f$ in $\norm{\G}$ as follows.
We set $p = (1-\epsilon)W$, and find the strengths of the vertices of $\norm{\G}$ using $\RT(\G, p)$. \Max fixes a threshold $t(s, B)$ depending on strengths and his budget. Suppose the token is placed on a vertex with strength $s$ and \Max's budget is $B$ and \Min's bid  according to $g$ is $y$.
If $y > t(s, B)$, \Max judges that the budget required to win the bidding is not worth it,
and stays out by bidding $0$.
If $y \leq t(s, B)$, \Max bids slightly above $y$,
wins the bidding, and moves to $v^+$.
Formally, let $N =(1+\epsilon W)S_{\max}$,
and let $\alpha \in (0,1)$
satisfying $\lambda(\alpha) =\frac{1}{1-\epsilon}$
(see Lemma~\ref{lem:shift-function}).
\Max acts as follows:
\begin{itemize}[noitemsep]
    \item
    \Max computes the bid $y$ of \Min according to the strategy $g$;
    \begin{enumerate}
	\item
	If $y > \frac{s}{N} \alpha \cdot B$, then \Max bids $0$;
    \item 
	If $y \leq \frac{s}{N} \alpha \cdot B$, then \Max bids $y + \epsilon W y$.
	\end{enumerate}
	\item Upon winning, \Max moves the token to $v^+$.
\end{itemize}
We show that the infinite play $\pi$ induced by $f$ and $g$ satisfies
$\payoff(\pi)
\geq
\MP(\RT(\G, p))$.
To start with, suppose that the budget of \Max is $B$, he bids $x$, and \Min bids $y$.
By definition of the strategy $f$, 
$x$ is either equal to $0$ or $(1+\epsilon W) \cdot y$,
hence we can express the updated budget $B'$ of \Max as follows:
\begin{equation*}
B'
=
B-x + Wy
=
\left\{
\begin{array}{ll}
B + W \cdot y & \textup{ if } x \leq y;\\
B - (1-W+\epsilon W) \cdot y & \textup{ if } x > y.
\end{array}
\right.
\end{equation*}

Therefore, \Max gains $\frac{W}{1-W+\epsilon W}$ more for a loss than what he pays for a win.
As a consequence, we can show that the win/loss ratio of \Max is close to $\frac{W}{1-W+\epsilon W}$. 
Formally, let $\nu = W$, and let
\begin{align*}
\mu &= \lambda(\alpha) \cdot (1-W+\epsilon W) = \frac{1}{1-\epsilon}-W.
\end{align*}
For every finite play $\pi$, let $B(\pi)$ be the budget of \Max following $\pi$,
and let 
$
\diff(\pi) = \mu \cdot \investmax(\pi)- \nu \cdot \gainmax(\pi)
$,
which intuitively keeps track of the difference between the number of biddings \Max loses and wins.
We establish an invariant between \Max's budget and $\diff(\pi)$.

\begin{restatable}{claimth}{APPdMinInvariant}
\label{claim:APPdMin_invariant}
    For every finite play $\pi$ coherent with $f$ and $g$,
    $B(\pi) \geq \init{B} \cdot (1+\alpha)^{-\frac{\diff(\pi)}{N}}$.
\end{restatable}
\noindent
We can now prove a lower bound on $\diff(\pi)$:
whenever $\diff(\pi)$ gets too low,
Claim \ref{claim:APPdMin_invariant} implies that the budget of \Max is so high
that he will outbid \Min in the next bid,
causing $\diff(\pi)$ to get back up.
Formally,
\begin{restatable}{claimth}{APPdMinBound}
\label{claim:APPdMin_bound}
    There exists $M \in \mathbb{R}$ such that
    $\diff(\pi) \geq  M$
    for every finite play $\pi$ coherent with $f$ and $g$.
\end{restatable}
\noindent
We can conclude by using Corollary \ref{cor:magic}
as $\frac{\nu}{\mu + \nu} = (1-\epsilon)W = p$:
Let $\pi$ be the infinite play coherent with $f$ and $g$.
If for all $n \in \mathbb{N}$ we denote by $\pi^n$
the prefix of $\pi$ of size $n$, we have
\[
\payoff(\pi)
\geq
\MP(\RT(\G, p)) +
\frac{\mu + \nu}{\mu\nu} \cdot \liminf_{n\rightarrow \infty}\frac{H(\pi^n)}{n}
\geq
\MP(\RT(\G, p)).
\qedhere
\]
\end{proof}

We now prove the claims that appear in the previous proof.

\paragraph{Proof of Claim \ref{claim:APPdMin_invariant}}
We show by induction over the length of a finite play $\pi$ coherent with the strategies $f$ and $g$
that
$B(\pi) \geq \init{B} \cdot (1+\alpha)^{-\frac{\diff(\pi)}{N}}$.
At the start of the game, the equation holds as $\init{B}$ is the initial budget of \Max,
and $H$ is $0$.
We now assume that the equation holds for some finite prefix $\rho$ of $\pi$
that ends in a vertex $v$ of strength $s = \St_p(v)$,
and we consider the next round.
Let $x$ be the bid of \Max
and $y$ be the bid of \Min.
We show that the equation still holds for the updated play $\rho'$
by differentiating the case where \Max loses the bidding and the one where he wins.
\begin{enumerate}
    \item
        Assume that \Max loses the bidding.
        We start with the definition of the budget update, and
        we apply the fact that, according to the strategy $f$,
        $x = 0$ and $y \geq \frac{s}{N} \alpha \cdot B(\rho)$.
        Then, we use the fact that $\nu = W$,
        and we factorise by $B(\rho)$:
        \begin{align}
        B(\rho') = B(\rho) - x + Wy
        &\geq B(\rho) + W \frac{s}{N}\alpha \cdot B(\rho)
        = B(\rho) \cdot \Big(1+  \frac{\nu s}{N} \alpha\Big).\nonumber
        \end{align}
        Note that $\alpha > -1$, and $\frac{\nu s}{N} \in [0,1]$ since $\nu = W<1$ and $N > S_{\max} \geq s$. 
        Therefore, we can apply Bernoulli's inequality.
        We then use the induction hypothesis,
        and we conclude by using the fact that after one bidding happening at a vertex of strength $s$,
        $\diff$ decreases by at most $\nu s$, i.e., $\diff(\rho') \geq \diff(\rho) - \nu s$.
        \begin{equation}
        B(\rho')
        \geq
        B(\rho) \cdot  (1+\alpha)^{\frac{\nu s}{N}}
        \geq
        \init{B} \cdot (1+\alpha)^{-\frac{\diff(\rho) - \nu s}{N}}
        \geq
        \init{B} \cdot (1+\alpha)^{-\frac{\diff(\rho')}{N}}. \nonumber
        \end{equation}
    \item
        Assume that \Max wins the bidding.
        Once again, we start with the definition of the budget update.
        By definition of $f$ we have $y < \frac{s}{N} \alpha \cdot B(\rho)$
        and $x = (1 + \epsilon W) \cdot y$.
        Then, we factorise by $B(\rho)$:
        \begin{align}
        B(\rho') = B(\rho) - x + W y
        &> B(\rho) - (1-W + \epsilon W) \frac{s}{N} \alpha \cdot B(\rho)
        = B(\rho) \cdot \Big(1 - \frac{(1-W + \epsilon W)s}{N} \alpha\Big). \nonumber
        \end{align}
        Note that $-\alpha > -1$, and
        $
        \frac{(1-W + \epsilon W)s}{N} \in [0,1]
        $
        since $N = (1+\epsilon W) S_{\max}$.
        Therefore, we can apply Bernoulli's inequality.
        Moreover, since $\mu = \lambda(\alpha)(1-W + \epsilon W)$,
        by Lemma \ref{lem:shift-function}
        we get that $(1 - \alpha)^{\frac{(1-W + \epsilon W)s}{N}} = (1 + \alpha)^{-\frac{\mu s}{N}}$.
        We then apply the induction hypothesis, and, finally, we use the fact that, since \Max won the bidding by supposition,
        and \Max chooses the successor $v^+$ whenever he wins,
        we have $\diff(\rho') = \diff(\rho) + \mu s$.
        \begin{equation}
        B(\rho')
        \geq
        B(\rho) \cdot (1 - \alpha)^{\frac{(1-W + \epsilon W)s}{N}}
        =
        B(\rho) \cdot (1 + \alpha)^{-\frac{\mu s}{N}}
        \geq
        \init{B} \cdot (1+\alpha)^{-\frac{\diff(\rho) + \mu s}{N}}
        =
        \init{B} \cdot (1+\alpha)^{-\frac{\diff(\rho')}{N}}. \nonumber
        \end{equation}
\end{enumerate}
\hfill$\triangleleft$

\paragraph{Proof of Claim \ref{claim:APPdMin_bound}}
Let $M =  - N \cdot \log_{1+\alpha}(\frac{N}{\alpha S_{\min}}) - \nu S_{\max}$.
We show that every finite play $\pi$ coherent with $f$ satisfies $\diff(\pi) \geq  M$.
Let us assume, towards building a contradiction,
that there exists a finite play $\pi$ coherent with $f$
such that the value of $\diff$ drops under $M$ along $\pi$.
Let $\rho'$ denote the smallest prefix of $\pi$ satisfying $\diff(\rho') < M$.
Note that $\rho'$ cannot be the empty play since the value of $\diff$ is initially $0$.
Let $\rho$ be the play obtained by deleting the last step of $\rho'$.
Since in one step the value of $\diff$ decreases by at most $\nu S_{\max}$,
we get that
$\diff(\rho) \leq \diff(\rho') + \nu S_{\max} < - N \cdot \log_{1+\alpha}(\frac{N}{\alpha S_{\min}})$.
By applying Claim \ref{claim:APPdMin_invariant}, we obtain
\begin{equation}\label{eq:APPdMin_bound}
B(\rho) \geq \init{B} \cdot (1+\alpha)^{-\frac{\diff(\rho)}{N}}
> \frac{N}{\alpha S_{\min}}.
\end{equation}
Let $s$ denote the strength of the vertex reached by $\rho$.
If $s = 0$, then $\diff(\rho) = \diff(\rho')$,
which contradicts the fact that $\rho'$ is the smallest prefix of $\pi$ satisfying $\diff(\rho') < M$.
Otherwise, we have $s \geq S_{\min}$.
Therefore, we can combine Equation \eqref{eq:APPdMin_bound}
with the fact that the bid $y$ of \Min
in the step going from $\rho$ to $\rho'$ satisfies $y \leq 1$
to get the following:
\[
y \leq 1 \leq \frac{s}{N}\alpha \cdot \frac{N}{\alpha S_{min}} < \frac{s}{N}\alpha \cdot B(\rho).
\]
Therefore, by definition of the strategy $f$,
\Max outbids \Min in the last round of $\rho'$,
hence $\diff(\rho) < \diff(\rho')$,
which, once again, contradicts the fact that $\rho'$
is the smallest prefix of $\pi$ satisfying $\diff(\rho') < M$.
\hfill$\triangleleft$

\subsection{Mean-payoff all-pay poorman games under mixed strategies}
\label{subsec:APPpMax}

%We now prove the part of Theorem~\ref{thm:AP-poor} concerning mixed strategies:
Let $\G$ be a strongly-connected mean-payoff all-pay poorman bidding game,
let $\init{B}$ be the initial budget of \Max and $\init{C}$ be the initial budget of \Min
(thus the initial ratio is $r = \frac{\init{B}}{\init{B}+\init{C}}$).
In this section we show that
\begin{equation}
\asMP(\G, r) =
\left\{
\begin{array}{ll}
     \MP\big(\RT(\G, 1- \frac{\init{C}}{2\init{B}})\big) & \textup{ if } \init{B}>\init{C};\\
     \MP\big(\RT(\G, \frac{\init{B}}{2\init{C}})\big)  & \textup{ if } \init{B}\leq\init{C}.
\end{array}
\right.
\end{equation}\label{eq:asmppoorman}

As in the previous section, the proof proceeds by studying $W$-asymmetric bidding games, identifying their almost-sure values depending on $W$, and using Lem.~\ref{lem:asym} that connects asymmetric and all-pay poorman bidding games to obtain the almost-sure values for the latter. 
Specifically, we show that for the $W$-asymmetric bidding game $\norm{G}$ and for every initial ratio $\norm{r}$, 
\[
\asMP(\norm{G}, \norm{r})  =
\left\{
\begin{array}{ll}
     \MP\big(\RT(\G, 1-\frac{1}{2W})\big) & \textup{ if } W > 1;\\
     \MP\big(\RT(\G, \frac{W}{2})\big)  & \textup{ if } W \leq 1.
\end{array}
\right.
\]

\stam{
We show both inequalities by using asymmetric games:
since $\MP\big(\RT(\G, p)\big)$ is continuous in $p$ (see \cite{Cha12,Sol03}),
by Lemma \ref{lem:asym}
it is sufficient to prove that for every $W > 0$,
if we now consider the $W$-asymmetric game $\norm{\G}$
played on the same graph as $\G$, then
for every initial ratio $\norm{r}$,
\[
\asMP(\norm{G}, \norm{r})  \geq
\left\{
\begin{array}{ll}
     \MP\big(\RT(\G, 1-\frac{1}{2W})\big) & \textup{ if } W > 1;\\
     \MP\big(\RT(\G, \frac{W}{2})\big)  & \textup{ if } W \leq 1.
\end{array}
\right.
\]
We state and demonstrate Lemma \ref{lem:APPpMax} to prove the case $W > 1$,
and then Lemma \ref{lem:APPpMin} to prove the case $W \leq 1$.

In order to get equality in eq.~\eqref{eq:asmppoorman} and thus deduce Theorem~\ref{thm:AP-poor}, we again use the symmetry argument for strategies of \Max and \Min, similarly as in Section~\ref{sec:mixedAPRich}. To obtain a strategy for \Min, we consider the game $\G^-$ obtained from $\G$ by negating the weight of each vertex. Then an optimal strategy for \Min in $\G$ with initial budget $r$ corresponds to an optimal strategy of $\Max$ in $\G^-$ with initial budget $1-r$, and these two strategies ensure payoffs that differ only in the sign. Hence, as the ratios $1- \frac{\init{C}}{2\init{B}}$ for $B_0>C_0$ and $\frac{\init{B}}{2\init{C}}$ for $B_0\leq C_0$ satisfy this symmetry, our construction of the strategy for \Max also induces a strategy for \Min. It follows that we have equalities in eq.~\eqref{eq:asmppoorman}, which proves the claim of Theorem~\ref{thm:AP-poor} for mixed strategies.
}%of stam

The following lemma addresses the case where $W>1$. The lemma shows that \Max can guarantee a payoff that is arbitrarily close to $\MP\big(\RT(\G, \frac{2W - 1}{2W})\big)$. To obtain equality, we again use the fact that $\MP\big(\RT(\G, p)\big)$ is continuous in $p$ (see \cite{Cha12,Sol03}) and the advantage of \Min in the definition of payoff.

\begin{lemma}
\label{lem:APPpMax}
Let $W > 1$, and consider a strongly-connected mean-payoff $W$-asymmetric bidding game $\norm{\G}$.
For every initial budget $\init{B} > 0$, for all $\epsilon > 0$, \Max has a mixed budget-based strategy that guarantees an almost-sure payoff of $\MP\big(\RT(\norm{\G}, \frac{2W - 1}{2W + \epsilon})\big)$.
\end{lemma}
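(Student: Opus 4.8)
The plan is to transport the mixed-strategy argument for all-pay Richman bidding (Lemma~\ref{lem:MP-Rich}) to the asymmetric budget rule $\Delta B = Wy-x$, which here holds regardless of who wins the bidding. Fix $\epsilon>0$ and set $p=\frac{2W-1}{2W+\epsilon}$; using $\RT(\G,p)$ I fix optimal positional strategies and the resulting successors $v^+,v^-$, potentials $\Pot_p$, and strengths $\St_p$ as in Sec.~\ref{sec:strength}. I put $\mu=1+\epsilon$ and $\nu=2W-1$, so that $\frac{\nu}{\mu+\nu}=p$, pick $\alpha\in(0,1)$ with $\lambda(\alpha)=1+\epsilon$ via Lemma~\ref{lem:shift-function}, and fix a normalizer $N$ of the form $(\text{const})\cdot S_{\max}$ with $N\geq W S_{\max}$ (so the Bernoulli estimates below are legal). \Max's budget-based strategy $f$ is: at a vertex $v$ of strength $s=\St_p(v)$ with budget $B$, bid $x\sim\Unif[0,\beta]$ with $\beta=\alpha\frac{s}{N}B$, and move to $v^+$ upon winning. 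This bid is legal and keeps $B>0$ (on a loss the budget only grows, and on a win $B'\geq B-\beta>0$).

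Next I would fix an arbitrary pure \Min strategy $g$ and, exactly as in Example~\ref{ex:Richman} and Claim~\ref{cl:LB-luck}, introduce a \emph{luck} process $(L_n)$ on the space $dist(f,g)$ with $L_0=0$ and increments $\Delta L_n$ that are affine in the budget change $Wy-x$ with a jump at $x=y$: on a loss the increment is shifted by $+\nu s$, on a win it is scaled by $\lambda(\alpha)$ and shifted by $-\mu s$, with coefficients forced by the invariant below. Writing $\diff=\mu\cdot\investmax-\nu\cdot\gainmax$, the central claim is
\begin{equation*}
B(\pi^n)\;\geq\;\init{B}\cdot(1+\alpha)^{\frac{L(\pi^n)-\diff(\pi^n)}{N}},
\end{equation*}
proved by induction on $n$. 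The loss branch uses Bernoulli on $(1+\alpha\cdot t)$ with $t=\frac{Wy-x}{\alpha B}\in[0,1]$ (this is where $N\geq WS_{\max}$ and, crucially, $y\leq\beta$ are needed), while the win branch underapproximates the gain $Wy\geq0$ exactly as in Claim~\ref{claim:APPdMax_invariant}, applies Bernoulli to a $(1-\alpha\cdot\,\cdot\,)$ factor, and converts it via $1-\alpha=(1+\alpha)^{-\lambda(\alpha)}$ from Lemma~\ref{lem:shift-function}; these are the two cases of Claim~\ref{cl:AP-Rich-inv} with $Wy-x$ in place of $y-x$. Before all this I would reduce to the case $y\leq\beta$ at every step: a \Min bid above \Max's maximal bid $\beta$ still loses to \Max while handing \Max strictly more budget, so \Max may set the surplus aside as spare change and play as if \Min bid exactly $\beta$; this reduction is what makes both the invariant and the submartingale property hold against every \Min response.

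The conclusion then replaces the budget ceiling $B\leq1$ of the Richman proof by the observation that in a $W$-asymmetric game the budget grows at most linearly: since $\Delta B=Wy-x\leq W$ (as $y\leq1$, $x\geq0$), we have $B(\pi^n)\leq\init{B}+Wn$, so the invariant gives $L(\pi^n)-\diff(\pi^n)\leq N\log_{1+\alpha}(1+Wn/\init{B})=O(\log n)$ and hence $\diff(\pi^n)\geq L(\pi^n)-O(\log n)$ for every play. Following Claim~\ref{cl:LB-luck}, I would then show that $(L_n)$ is a submartingale with respect to the canonical filtration with increments bounded by a constant: measurability and integrability are immediate, and $\mathbb{E}[\Delta L_n\mid\mathcal{R}_n]\geq0$ reduces, by Fubini and integrating the uniform bid $x$ over $[0,\beta]$ against any fixed $y\in[0,\beta]$, to a nonnegative expression of the form $\frac{(\text{const})\,y(\beta-y)}{\cdots}$. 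Lemma~\ref{lemma:concentration} yields $\liminf_n L_n/n\geq0$ almost surely, whence $\liminf_n\diff(\pi^n)/n\geq0$ almost surely, and Corollary~\ref{cor:magic} (with these $\mu,\nu$ and $p=\frac{\nu}{\mu+\nu}$) gives $\payoff(\pi)\geq\MP(\RT(\G,p))=\MP(\RT(\norm{\G},\frac{2W-1}{2W+\epsilon}))$ almost surely, as required (the random-turn game depends only on the graph and weights, so $\RT(\G,p)=\RT(\norm{\G},p)$).

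I expect the main obstacle to be the simultaneous design of the luck increments. They must be chosen so that their branch-dependent affine form reproduces the multiplicative budget invariant under Bernoulli and the shift-function identity --- which, because $Wy-x$ may have either sign on a winning bid, forces the same under-approximation of the gain $Wy$ used in the pure case (Claim~\ref{claim:APPdMax_invariant}) --- while those very same increments must integrate to a nonnegative conditional expectation against an adversarial $y\in[0,\beta]$, which is exactly what pins down $\mu=1+\epsilon$ and $\nu=2W-1$ (and thus the value $\frac{2W-1}{2W}$ in the limit). Verifying this submartingale inequality, together with the spare-change reduction to $y\leq\beta$, is the delicate part; the invariant induction, the linear-growth bound, and the martingale concentration are then routine given the machinery already set up in Sections~\ref{sec:strength} and~\ref{sec:martingale}.
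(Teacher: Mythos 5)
Your high-level architecture does match the paper's: strengths and potentials from $\RT(\G,p)$ with $p=\frac{2W-1}{2W+\epsilon}$, the parameters $\mu=1+\epsilon$, $\nu=2W-1$, a multiplicative invariant $B\geq\init{B}(1+\alpha)^{(L-\diff)/(\mathrm{const}\cdot S_{\max})}$, a submartingale-plus-Azuma argument via Lemma~\ref{lemma:concentration}, and Corollary~\ref{cor:magic} to convert $\liminf_n \diff(\pi^n)/n\geq 0$ into the payoff bound. Moreover, one genuinely different piece of your plan is sound and is in fact a simplification: the paper's strategy includes a deterministic forced-win bid ($\frac{s}{2WS_{\max}}\alpha B>W$ whenever $B>\frac{2W^2S_{\max}}{\alpha s}$) solely to obtain the constant bound of Claim~\ref{claim:APPpMax_bound}, whereas your observation that $B(\pi^n)\leq\init{B}+Wn$ (since $y\leq 1$) turns the invariant directly into $L-\diff=O(\log n)$, which suffices after dividing by $n$. (Minor slip: a \Min bid above $\beta$ \emph{wins} the bidding; the wlog reduction is still the paper's, since such a bid only hands \Max more budget.)

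The genuine gap is in the design of the luck increments, and it is fatal as stated. You propose a two-branch luck (loss/win) and, because $Wy-x$ can be negative on a win, you underapproximate the gain $Wy$ by $0$ on the win branch ``as in Claim~\ref{claim:APPdMax_invariant}''. That underapproximation is harmless in the pure-strategy proof, but here it destroys the submartingale property. In your normalization (bid interval $[0,\beta]$, $\beta=\alpha\frac{s}{N}B$, exponent denominator $N$), the invariant forces the loss-branch luck to be at most $s\frac{Wy-x}{\beta}-\nu s$ and the underapproximated win-branch luck to be at most $\mu s\bigl(1-\frac{x}{\beta}\bigr)$. Taking these maximal choices and letting \Min bid $y=\beta$ (she then wins every bidding, so the win branch never fires), we get
\begin{equation*}
\mathbb{E}_{x\sim\Unif[0,\beta]}[\Delta L]\;=\;\frac{1}{\beta}\int_0^{\beta}\Bigl(s\tfrac{W\beta-x}{\beta}-\nu s\Bigr)\,\mathrm{d}x\;=\;s\bigl(W-\tfrac12\bigr)-(2W-1)s\;=\;s\bigl(\tfrac12-W\bigr)\;<\;0,
\end{equation*}
since $W>1$; in the paper's factor-two normalization the same computation gives $\mathbb{E}_x[\Delta L]=-\frac{sy(\beta-y)}{\beta^2}\bigl((2W-1)+\mu\bigr)\leq 0$ for every $y$. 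So your luck is a supermartingale with strictly negative drift, Lemma~\ref{lemma:concentration} cannot be applied, and the conclusion collapses. The failure is structural, not a matter of tuning constants: on the win branch, validity of the invariant for $y<x\leq Wy$ (budget \emph{increases}) caps the slope of an affine luck at the loss-branch value, while validity for $x>Wy$ (budget decreases) requires the slope scaled by $\lambda(\alpha)=\mu$; no single affine function can satisfy both with the maximal constant term $\mu s$, and lowering the constant term is exactly your underapproximation, which is what kills the drift. This is precisely why the paper's proof (Lemma~\ref{lem:APPpMax}, Claims~\ref{claim:APPpMax_invariant} and~\ref{claim:APPpMax_luck}) uses a \emph{three}-case luck: for $y<x\leq Wy$ it keeps the true budget change $Wy-x$ with the unscaled coefficient $2WS_{\max}$, and only for $x>Wy$ switches to the $\mu$-scaled coefficient $2W\mu S_{\max}$. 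The positive contribution of $Wy$ in the middle case produces the nonnegative terms $T$ and $U$ in the paper's expectation computation; the drift there is only of order $\epsilon$, so the order-one loss incurred by discarding $Wy$ swamps it.
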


% \begin{corollary}
% \label{cor:APPpMax}
% In a strongly-connected mean-payoff all-pay poorman game $\G$ where the initial budget $\init{B}$ of \Max
% is greater than the initial budget $Y$ of \Min,
% for every $\epsilon > 0$, \Max has a probabilistic budget-based strategy that guarantees an expected value of
% $\MP(\RT(\G, 1-(1+\epsilon)\frac{\init{C}}{2\init{B}}))$.
% \end{corollary}

\begin{proof}
Let $\init{B} > 0$ be the initial budget of \Max, and let $\epsilon > 0$.
We set $p = \frac{2W-1}{2W + \epsilon}$,
and find the strengths of the vertices of $\G$ using $\RT(\G, p)$.
The proof is mostly identical to the proof of Lem.~\ref{lem:MP-Rich},
hence we will focus on the differences.
The first change comes in the definition of the mixed strategy $f$ of \Max:
instead of always bidding uniformly at random in an interval,
if \Max's budget is high enough he deterministically bids higher than \Min's budget and forces a bidding win.
Since the budget of \Min is always $1$ in an asymmetric game,
this guarantees that \Max wins the next round,
which will be crucial in the proof of Claim \ref{claim:APPpMax_bound}.
% Such an argument was not needed to prove Lemma \ref{lem:MP-Rich}: in a Richman game:
% the budget of \Max is bounded by the sum of the initial budgets.

Formally, we define the mixed strategy $f$ as follows.
Let $\alpha \in (0,1)$ such that $\lambda(\alpha) = 1+\epsilon$ (see Lemma~\ref{lem:shift-function}).
When the token is on a vertex $v$ with strength $s = \St_{p}(v)$ and \Max's budget is $B$:
\begin{itemize}[noitemsep]
    \item
    If $s > 0$ and $B > \frac{2 W^2 S_{\max}}{\alpha s}$, then
    \Max deterministically bids $\frac{s}{2 W S_{\max}} \alpha B$, which is greater than $W$;
	\item
	Otherwise, \Max bids uniformly at random in the interval $[0,\frac{s}{W S_{\max}}\alpha B]$;
	\item
	Upon winning, \Max moves the token to $v^+$.
\end{itemize}

We now fix a strategy $g$ of \Min. As in previous sections, we assume wlog that 
%Since \Min has the tie-breaking advantage and does not profit from bidding higher than \Max,
%we assume that, according to $g$,
\Min never bids more than the maximal possible bid of \Max according to $f$.
Let $dist(f,g)$ be the probability distribution defined by $f$ and $g$.
To conclude the proof,
we show that
$\mathbb{P}_{\eta \sim dist(f,g)}[\payoff(\eta) \geq \MP\big(\RT(\G, p)\big)] = 1$.

Similar to the Richman setting, we show an invariant
between \Max's budget, his wins, his losses, and his luck.
Observe that contrary to all-pay Richman bidding, here, it is possible for \Max's budget to increase also when he wins a bidding. Indeed, recall that assuming \Max's budget is $B$, he bids $x$, and \Min bids $y$, then the budget update is $B' = B-x+Wy$. For example, \Max's budget increases when $x=1+\epsilon > 1=y$ and $W=2$. % Then, since $W>1$, even when $x > y$, we can have $B' > B$.
This difference leads to a more complicated definition of \Max's ``luck'' that distinguishes between three cases: (1) \Max loses and his budget increases, or he wins and (2a) his budget increases or (2b) his budget decreases. 

To prove the invariant we need several definitions. 
Let $\mu = 1 + \epsilon$, $\nu = 2W -1$,
and $\diff(\pi) = \mu \cdot \investmax(\pi)- \nu \cdot \gainmax(\pi)$.
We define inductively the luck $L$ over a finite play coherent with $f$ and $g$ as follows.
Initially the luck is $0$. Assuming \Max's budget is $B$, his bid is $x$, \Min's bid is $y$, and the luck is $L$, then the updated luck is $L' = L + \Delta L(x,y)$, where 
\[
\Delta L(x,y) = \left\{
\begin{array}{ll}
2 W S_{\max}\frac{W y - x}{\alpha B} - \nu s  & \textup{if $x \leq y$;}\\
2 W S_{\max}\frac{W y - x}{\alpha B} + \mu s & \textup{if $y < x \leq W y$;}\\
2 W \mu S_{\max}\frac{W y - x}{\alpha B}  + \mu s & \textup{if $x > W y$.}
\end{array}
\right.
\]

We get that for every finite play $\pi$ coherent with $f$ and $g$,
if \Max lost many biddings along $\pi$ ($\diff(\pi)$ is low),
then either he gained a lot of budget in exchange ($B(\pi)$ is high),
or he was particularly unlucky ($L(\pi)$ is low):
\begin{restatable}{claimth}{APPpMaxInvariant}
\label{claim:APPpMax_invariant}
    For every finite play $\pi$ coherent with $f$ and $g$,
    $B(\pi) \geq \init{B} \cdot (1+\alpha)^{\frac{L(\pi)-\diff(\pi)}{2W S_{\max}}}$.
\end{restatable}
\noindent
Claim \ref{claim:APPpMax_invariant} implies that if $L(\pi)-\diff(\pi)$ is very high,
then the budget of \Max is high enough to enable the first option of his strategy $f$,
which guarantees him a win in the next round, 
hence causes $L(\pi)-\diff(\pi)$ to decrease.
As a consequence, we get an upper bound for $L(\pi)-\diff(\pi)$:
\begin{restatable}{claimth}{APPpMaxBound}
\label{claim:APPpMax_bound}
    There exists $M \in \mathbb{R}$ such that
    $L(\pi)-\diff(\pi) \leq  M$
    for all finite play $\pi$ coherent with $f$ and $g$.
\end{restatable}
\noindent
Therefore, given an infinite play $\eta$ coherent with $f$ and $g$,
if for every $n \in \mathbb{N}$ we denote by $\eta^n$ the prefix of $\eta$ of size $n$,
we get that
$\liminf_{n\rightarrow\infty}\frac{H(\eta^n)}{n} \geq \liminf_{n\rightarrow\infty}\frac{L(\eta^n)}{n}$.
Moreover, we can prove that
\begin{restatable}{claimth}{APPpMaxLuck}
\label{claim:APPpMax_luck}
$\mathbb{P}_{\pi\sim dist(f,g)}[\liminf_{n\rightarrow\infty}\frac{L(\eta^n)}{n}\geq 0]=1$.
\end{restatable}
\noindent
Since $\frac{\nu}{\mu + \nu} = p$,
this allows us to conclude through the use of Corollary \ref{lem:magic}:
\begin{align*}
\payoff(\eta)
&\geq
\MP(\RT(\G, \frac{\nu}{\mu + \nu})) +
\frac{\mu + \nu}{\mu\nu} \cdot \liminf_{n\rightarrow \infty}\frac{H(\eta^n)}{n}
=
\MP(\RT(\G, p)) +
\frac{\mu + \nu}{\mu\nu} \cdot \liminf_{n\rightarrow \infty}\frac{L(\eta^n)}{n}\\
&\stackrel{a.s.}{\geq}
\MP(\RT(\G, p)).
\qedhere
\end{align*}
\end{proof}

We now present the proofs of the claims.

\paragraph{Proof of Claim \ref{claim:APPpMax_invariant}}
Let $\pi$ be a finite play coherent with $f$ and $g$.
We prove by induction over the length of $\pi$ that 
$B(\pi) \geq \init{B} \cdot (1+\alpha)^{\frac{L(\pi)-\diff(\pi)}{2W S_{\max}}}$.
At the start of the game, the equation holds since both $L$ and $\diff$ are $0$,
and $\init{B}$ is the initial budget of \Max.
For the induction step, suppose that the equation holds for some prefix $\rho$ of $\pi$
that ends in a vertex $v$ of strength $s = \St_{p}(v)$,
and consider the next round.
Let $x$ denote the bid of \Max,
and let $y$ denote the bid of \Min.
We show that the equation still holds for the updated play $\rho'$.
We start by using the definition of the budget update of a $W$-asymmetric game,
we multiply and divide the second part by $\alpha B(\rho)$,
and we factorise $B(\rho)$.
    \begin{align*}
    B(\rho') = B(\rho) - x + Wy
    = B(\rho) + \frac{Wy-x}{\alpha B(\rho)}\alpha B(\rho)
    = B(\rho) \cdot \Big(1+\frac{Wy-x}{\alpha B(\rho)}\alpha\Big).
    \end{align*}
To conclude, we differentiate three cases, the same as in the definition of $\Delta L(x,y)$.
\begin{enumerate}
    \item 
    Suppose that $x \leq y$.
    Then $\frac{\Delta L(x,y) + \nu s}{2 W S_{\max}} = \frac{Wy-x}{\alpha B(\rho)} \in [0,1]$,
    hence, as $\alpha \geq -1$,
    we can use Bernoulli's inequality.
    We then apply the induction hypothesis, and we conclude by using the fact
    that after one bidding happening at a vertex of strength $s$,
    $\diff$ decreases by at most $\nu s$, i.e., $\diff(\rho') \geq \diff(\rho) - \nu s$:
    \[
    B(\rho') \geq
    B(\rho) \cdot (1 + \alpha)^{\frac{\Delta L(x,y) + \nu s}{2W S_{\max}}}
    \geq  \init{B} \cdot (1 + \alpha)^{\frac{L(\rho) - \diff(\rho) + \Delta L(x,y) + \nu s}{2W S_{\max}}}
    \geq  \init{B} \cdot (1 + \alpha)^{\frac{L(\rho') - \diff(\rho')}{2W S_{\max}}}.
    \]
    \item 
    Suppose that $y < x \leq Wy$.
    Then  $\frac{\Delta L(x,y) - \mu s}{2 W S_{\max}} = \frac{Wy-x}{\alpha B(\rho)} \in [0,1]$,
    hence, as $\alpha \geq -1$,
    we can use Bernoulli's inequality.
    We follow by using the induction hypothesis and,
    finally, we use the fact that \Max chooses the successor $v^+$ whenever he wins,
    hence $\diff(\rho') = \diff(\rho) + \mu s$:
    \[
    B(\rho') \geq
    B(\rho) \cdot (1 + \alpha)^{\frac{\Delta L(x,y) - \mu s}{2W S_{\max}}}
    \geq  \init{B} \cdot (1 + \alpha)^{\frac{L(\rho) - \diff(\rho) + \Delta L(x,y) - \mu s}{2W S_{\max}}}
    \geq  \init{B} \cdot (1 + \alpha)^{\frac{L(\rho') - \diff(\rho')}{2W S_{\max}}}.
    \]
    \item
    Suppose that $x > Wy$.
    Then $\frac{\mu s-\Delta L(x,y)}{2 \mu W S_{\max}} = \frac{x-Wy}{\alpha B(\rho)} \in [0,1]$,
    hence, since $-\alpha \geq -1$,
    we can use Bernoulli's inequality.
    We then use Lemma \ref{lem:shift-function} combined with the fact that $\lambda(\alpha) = \mu$
    to get $(1-\alpha) = (1 + \alpha)^{-\mu}$.
    Finally, we apply the induction hypothesis, and we conclude by once again using the fact that, as \Max wins,
    $\diff(\rho') = \diff(\rho) + \mu s$:
    \begin{align*}
    B(\rho') &\geq
    B(\rho) \cdot (1 - \alpha)^{\frac{\mu s - \Delta L(x,y)}{2W \mu S_{\max}}}
    =B(\rho) \cdot (1 + \alpha)^{\frac{\Delta L(x,y)-\mu s}{2W S_{\max}}}
    \geq  \init{B} \cdot (1 + \alpha)^{\frac{L(\rho) - \diff(\rho) + \Delta L(x,y) - \mu s}{2W S_{\max}}}\\
    &\geq  \init{B} \cdot (1 + \alpha)^{\frac{L(\rho') - \diff(\rho')}{2W S_{\max}}}.
    \end{align*}
\end{enumerate}
\hfill$\triangleleft$

\paragraph{Proof of Claim \ref{claim:APPpMax_bound}}
Let $M =  2W S_{\max} \cdot \log_{1+\alpha}(\frac{2 W^2 S_{\max}}{\alpha S_{\min}}) + (2W^2 + 1)\mu S_{\max} + \nu S_{\max}$.
We prove that for every play $\pi$ coherent with $f$ and $g$,
we get $L(\pi) - \diff(\pi) \leq  M$.
Assume, towards building a contradiction,
that for some play $\pi$ coherent with $f$ and $g$
the value of $L-\diff$ exceeds $M$ along $\pi$.
Let $\rho'$ be the smallest prefix of $\pi$ such that $L(\rho') - \diff(\rho') > M$.
Note that $\rho'$ cannot be the empty play as the value of both $L$ and $\diff$ are initially $0$.
Let $\rho$ be the play obtained by deleting the last step of $\rho'$.
We prove that $L(\rho) - \diff(\rho)>L(\rho') - \diff(\rho')$,
which contradicts the fact that $\rho'$
is the smallest prefix of $\pi$ satisfying $L(\rho') - \diff(\rho') > M$.

Note that, in a single step, the value of $L$ cannot increase by more than $(2W^2 + 1)\mu S_{\max}$,
and the value of $\diff$ decreases by at most $\nu S_{\max}$.
Therefore, the value of $L - \diff$ increases by at most $(2W^2 + 1)\mu S_{\max} + \nu S_{\max}$ in a single step,
and we get that
\[
L(\rho) - \diff(\rho) \geq L(\rho') - \diff(\rho') - (2W + 1)\mu S_{\max} - \nu S_{\max} >
2W S_{\max} \cdot \log_{1+\alpha}(\frac{2 W^2 S_{\max}}{\alpha S_{\min}}).
\]
By applying Claim \ref{claim:APPpMax_invariant}, we obtain
\begin{equation*}
B(\rho) \geq \init{B} \cdot (1+\alpha)^{\frac{L(\rho) - \diff(\rho)}{W S_{\max}}}
> \frac{2W^2S_{\max}}{\alpha S_{\min}}.
\end{equation*}
Therefore, by definition of the strategy $f$,
in the round going from $\rho$ to $\rho'$,
\Max deterministically bids $x = \frac{s}{2 W S_{\max}} \alpha B(\rho) > W$.
Since the bid $y$ of \Min is at most $1$ (remember that in an asymmetric game the budget of \Min is always $1$),
this has two consequences:
First, \Max wins the auction, hence $\diff(\rho') = \diff(\rho) + \mu s$.
Second, by definition of the luck update,
$L(\rho') = L(\rho) + 2 \mu S_{\max}\frac{Wy-x}{\alpha B(\rho)} + \mu s < L(\rho) + \mu s$.
Therefore, we get
$L(\rho') - \diff(\rho') <  L(\rho) - \diff(\rho)$.
This contradicts the fact that $\rho'$
is the smallest prefix of $\pi$ satisfying $L(\rho') - \diff(\rho') > M$.
\hfill$\triangleleft$

\paragraph{Proof of Claim \ref{claim:APPpMax_luck}}
We show that, given an infinite play $\eta$ coherent with $f$ and $g$,
if for every $n \in \mathbb{N}$ we denote by $\eta^n$ the prefix of $\eta$ of size $n$,
then $\mathbb{P}_{\pi\sim dist(f,g)}[\liminf_{n\rightarrow\infty}\frac{L(\eta^n)}{n}\geq 0]=1$.
We use the same arguments as in the proof of  Claim~\ref{cl:LB-luck}:
We consider the luck as a stochastic process over the probability space
$(\Omega_\G,\mathcal{F}_\G,dist(f,g))$ defined by $\G$ and the mixed strategies $f$ and $g$.
By Lemma \ref{lemma:concentration}, the claim is proved if we can show that
that the luck is a submartingale (with respect to the canonical filtration)
whose maximum increase/decrease in a single step is bounded by a constant $c$.

Before going further, let us recall the definition of the luck update:
Let us consider a finite play $\pi$ coherent with $f$ and $g$,
and let $\beta = \frac{s}{W S_{\max}}\alpha B(\pi)$ denote the maximal possible
bid of \Max and \Min after $\pi$.
If \Max bids $x \in [0,\beta]$
and \Min bids $y \in [0,\beta]$,
then the luck $L(\pi')$ of the updated run $\pi'$ is defined as
$L(\pi) + \Delta L(x,y)$,
where
\[
\Delta L(x,y) = \left\{
\begin{array}{ll}
2 W S_{\max}\frac{W y - x}{\alpha B(\pi)} - \nu s
=2 s\frac{W y - x}{\beta} - \nu s  & \textup{if $x \leq y$;}\\
2 W S_{\max}\frac{W y - x}{\alpha B(\pi)} + \mu s
= 2 s\frac{W y - x}{\beta} + \mu s& \textup{if $y < x \leq W y$;}\\
2 W \mu S_{\max}\frac{W y - x}{\alpha B(\pi)} + \mu s
=2 \mu s\frac{W y - x}{\beta} + \mu s& \textup{if $x > W y$.}
\end{array}
\right.
\]
We immediately get that $|\Delta L (x,y)|< 2S_{\max}(W^2 + 1)(\mu+\nu)$,
hence the growth of the luck is bounded.
Therefore, to conclude, we just need to prove that the luck is a submartingale.
By Fubini's theorem, it is sufficient to show that for every possible value of the bid $y \in [0,\beta]$ of \Min,
then the expected value $\mathbb{E}_{x}[\Delta L(x,y)]$ of $\Delta L(x,y)$ when $x$ ranges in $[0,\beta]$
according to the strategy $f$ is greater than or equal to $0$.

If $B(\pi) > \frac{2W^2S_{\max}}{\alpha s}$,
computing the expected value of $\Delta L(x,y)$ is easy:
by definition of the strategy $f$,
\Max deterministically bids $x = \frac{s}{2WS_{\max}}\alpha B(\pi)$.
Note that this bid is greater than $1$, hence greater than $y$
since the budget of \Min in an asymmetric game is always $1$.
Therefore, we get
\[
\mathbb{E}_{x}(\Delta L(x,y)) = 2 W \mu S_{\max}\frac{W y - x}{\alpha B(\pi)} + \mu s
\geq -2 W \mu S_{\max}\frac{x}{\alpha B(\pi)} + \mu s
= 0.
\]

If $B(\pi) \leq \frac{2W^2S_{\max}}{\alpha s}$, the proof is more technical.
We need to show that
\[
\mathbb{E}_{x}(\Delta L(x,y)) = \frac{1}{\beta}\cdot \int_0^{\beta} \! \Delta(x,y) \, \mathrm{d}x \geq 0
\]
We differentiate the case where $W y > \beta$ and $W y \leq \beta$.
In both cases, we decompose the integral into parts,
and prove that their sum is greater than or equal to $0$.
Remember that, by definition,
$\mu-1 = \epsilon > 0$
and $\nu = 2W-1$.

\noindent
If $W y > \beta$, then the integral $\int_0^{\beta} \! \Delta(x,y) \, \mathrm{d}x$ is equal to 
\begin{align*}
    &\int_0^{y} \! 2 s\frac{W y - x}{\beta} - \nu s \, \mathrm{d}x +
    \int_y^{\beta} \! 2 s\frac{W y - x}{\beta} + \mu s \, \mathrm{d}x\\
    &=
    \underbrace{\int_0^{\beta} \! 2 s\frac{Wy-x}{\beta} \, \mathrm{d}x -
    \int_0^{y} \! \nu s \, \mathrm{d}x +
    \int_y^{\beta} \! s \, \mathrm{d}x}_{S} + 
     \underbrace{\int_y^{\beta} \! \epsilon s \, \mathrm{d}x}_T.
\end{align*}
\noindent
If $W y \leq \beta$, then the integral $\int_0^{\beta} \! \Delta(x,y) \, \mathrm{d}x$ is equal to 
\begin{align*}
    &\int_0^{y} \! 2 s\frac{W y - x}{\beta} - \nu s \, \mathrm{d}x +
    \int_y^{W y} \! 2 s\frac{W y - x}{\beta} + \mu s \, \mathrm{d}x + 
    \int_{W y}^{\beta} \! 2 s\frac{W y - x}{\beta} \mu + \mu s \, \mathrm{d}x\\
    &=
    \underbrace{\int_0^{\beta} \! 2S_{\max}\frac{Wy-x}{\beta} \, \mathrm{d}x -
    \int_0^{y} \! \nu s \, \mathrm{d}x +
    \int_y^{\beta} \! s \, \mathrm{d}x}_{S} + 
    \underbrace{\int_y^{\beta} \! \epsilon s \, \mathrm{d}x}_T +
    \underbrace{\int_{Wy}^{\beta} \! 2 s\frac{Wy-x}{\beta} \epsilon s \, \mathrm{d}x}_U.
\end{align*}
We compute the values of $S$, $T$, and $U$:
\begin{align*}
    S &= \int_0^{\beta} \! 2  s\frac{Wy-x}{\beta} \, \mathrm{d}x -
    \int_0^{y} \! \nu s \, \mathrm{d}x +
    \int_y^{\beta} \! s \, \mathrm{d}x
    = 
    (2Wy-\beta-\nu  y + \beta - y)s
    =
    0;\\
    T &= \int_y^{\beta} \! \epsilon s \, \mathrm{d}x
     = (\beta - y)\epsilon s;\\
    U &= \int_{W y}^{\beta} \! 2  s\frac{Wy-x}{\beta}\epsilon \, \mathrm{d}x
     = (2Wy-\beta
     - \frac{W^2}{\beta}y^2) \epsilon s.
\end{align*}
Therefore we always have $S + T \geq 0$.
Moreover, if $Wy \leq \beta$, then
\begin{align*}
    S + T +U &= (\beta - y)\epsilon s + (2Wy-\beta - \frac{W^2}{\beta}y^2) \epsilon s
    = \Big(2W-1 - \frac{W^2}{\beta}y\Big)\epsilon s y\\
    &> \Big(W - \frac{W^2}{\beta}y\Big)\epsilon s y
    = \frac{\beta - Wy}{\beta}W\epsilon s y
    \geq 0.
\end{align*}
As a consequence, in both cases, $\mathbb{E}[\Delta L(x,y)] \geq 0$.
\hfill$\triangleleft$

\medskip
The following lemma complements Lem.~\ref{lem:APPpMax} and shows optimal mixed strategies when \Max's ratio is less than $0.5$.

\begin{lemma}
\label{lem:APPpMin}
Let $W \in (0,1]$, and consider a strongly-connected mean-payoff $W$-asymmetric bidding game $\norm{\G}$.
For every initial budget $\init{B} > 0$, for all $\epsilon > 0$, \Max has a mixed budget-based strategy that guarantees an almost-sure payoff of $\MP\big(\RT(\norm{\G}, \frac{W - \epsilon}{2})\big)$.
\end{lemma}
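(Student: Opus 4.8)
The plan is to mirror the proof of Lemma~\ref{lem:APPpMax}, which handled the complementary regime $W>1$, adapting every step to the weak-\Max\ case $W\le 1$. As there, it suffices to build a mixed budget-based strategy $f$ for \Max\ in the $W$-asymmetric game $\norm{\G}$ that almost-surely guarantees a payoff of $\MP\big(\RT(\G,p)\big)$ with $p=\frac{W-\epsilon}{2}$; the all-pay poorman value then follows from Lemma~\ref{lem:asym} together with the continuity of $p\mapsto\MP(\RT(\G,p))$ as $\epsilon\to 0$. First I would fix $p=\frac{W-\epsilon}{2}$, read off the strengths $\St_p$ and successors $v^+,v^-$ from an optimal positional pair in $\RT(\G,p)$, choose $\alpha\in(0,1)$ via Lemma~\ref{lem:shift-function}, and set constants $\mu,\nu$ with $\frac{\nu}{\mu+\nu}=p$, so that Corollary~\ref{cor:magic} will turn a lower bound on $\diff=\mu\cdot\investmax-\nu\cdot\gainmax$ into the desired payoff. \Max's strategy bids uniformly at random in an interval $[0,\beta]$ with $\beta$ proportional to $\St_p(v)\cdot\alpha B$, together with a deterministic override: when his budget is large enough that a designated bid exceeds $1$ (the whole budget of \Min), he bids it and forces a win.

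The bookkeeping quantities are $\diff$ and a \emph{luck} process $L$, defined inductively through a per-step increment $\Delta L$ split according to the sign of the budget update $Wy-x$. The crucial structural difference from Lemma~\ref{lem:APPpMax} is that, since $W\le 1$, the case ``\Max\ wins but his budget grows'' ($y<x\le Wy$) is now empty, while a new case ``\Max\ loses and his budget shrinks'' ($Wy<x\le y$) appears, because under all-pay bidding \Max\ can overpay on a losing bid. Thus the three cases are $x\le Wy$ (loss, budget up), $Wy<x\le y$ (loss, budget down), and $x>y$ (win, budget down), and in each I would take $\Delta L$ to be the largest value compatible with the exponential invariant
\begin{equation*}
B(\pi^n)\ \ge\ \init{B}\cdot(1+\alpha)^{\frac{L(\pi^n)-\diff(\pi^n)}{\kappa\cdot S_{\max}}},
\end{equation*}
applying Bernoulli's inequality in the $(1+\alpha)$ direction for the budget-up case and in the $(1-\alpha)$ direction, followed by the conversion $1-\alpha=(1+\alpha)^{-\lambda(\alpha)}$ of Lemma~\ref{lem:shift-function}, for the two budget-down cases. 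The invariant is a routine induction, and it yields a bound $L-\diff\le M$: once $L-\diff$ is large the invariant forces $B$ so high that the override triggers a win, which raises $\diff$ and pushes $L-\diff$ back down. With Corollary~\ref{cor:magic} this reduces the lemma to showing $\liminf_n L_n/n\ge 0$ almost surely.

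That last step is where the real work lies, and it is the main obstacle. As in Claim~\ref{cl:LB-luck} I would regard $(L_n)$ as a stochastic process on $(\Omega_{\norm{\G}},\mathcal{F}_{\norm{\G}},dist(f,g))$, check that it is adapted to the canonical filtration with bounded increments, and invoke Lemma~\ref{lemma:concentration}; everything then reduces to the submartingale inequality $\mathbb{E}_{x\sim\Unif[0,\beta]}[\Delta L(x,y)]\ge 0$ for every \Min\ bid $y$ (and, separately, in the override region, where the override bid is tuned so that its budget penalty exactly cancels the $\mu s$ credited for the forced win, exactly as in Lemma~\ref{lem:APPpMax}). This verification is markedly more delicate than for $W>1$: the budget-down losing region contributes a large negative term to $\mathbb{E}[\Delta L]$, and a naive choice of $\beta$ makes the expectation negative precisely when \Min\ bids near the top of \Max's range — which corresponds to \Min\ profitably \emph{draining} \Max's budget rather than contesting. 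The heart of the argument is therefore to calibrate $\beta$ (and, if necessary, a ``stay-out'' threshold at which \Max\ bids $0$) so that \Max\ never overpays in expectation, i.e. so that the expected gain $Wy$ on a loss dominates his expected payment, while still forcing enough contests to win at the target rate; granting this, the integral of $\mathbb{E}_x[\Delta L]$ splits over the three regions above and, after substituting $\beta$, collapses to a manifestly non-negative expression, in direct analogy with the computation that closes the proof of Claim~\ref{claim:APPpMax_luck}. Finally, equality in the value and the matching \Min\ strategy follow from the weight-negation symmetry argument and continuity of $\MP(\RT(\cdot,p))$, exactly as at the end of Section~\ref{sec:mixedAPRich}.
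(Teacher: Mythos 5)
Your overall architecture matches the paper's proof: the same three-case split of the luck increment ($x\le Wy$ loss with budget gain, $Wy<x\le y$ loss with budget loss, $x>y$ win), the same exponential invariant tying $B$ to $L-\diff$, the override that forces $L-\diff\le M$, and the reduction via Lemma~\ref{lemma:concentration} and Corollary~\ref{cor:magic} to the pointwise submartingale inequality $\mathbb{E}_x[\Delta L(x,y)]\ge 0$. You also correctly diagnose the one place where $W\le 1$ genuinely differs from Lemma~\ref{lem:APPpMax}: with a plain uniform bid on $[0,\beta]$, \Min can bid at the top of \Max's range, win every bidding, \emph{and} drain \Max's budget in expectation (expected gain $Wy$ against expected payment $\beta/2$, which is a net loss once $W<1/2$), so the submartingale inequality fails.

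The gap is that you defer the resolution of exactly this obstacle ("calibrate $\beta$ and, if necessary, a stay-out threshold\dots granting this, the integral collapses"), and that deferred step is the actual new content of the lemma. Recalibrating $\beta$ cannot help: in the low-budget regime $\beta\le 1$, \Min can always bid the top of whatever interval \Max randomizes over, so the drain argument is scale-invariant. A "stay-out threshold" in the sense of Lemma~\ref{lemma:MP_APP_up} is also unavailable, since there \Max conditions on \Min's bid $y$, which requires \Min's strategy to be pure and fixed; in the mixed setting \Max cannot observe $y$ before bidding. The paper's fix is a point mass at zero in \Max's bid distribution: in the non-override regime \Max bids $0$ with probability exactly $1-W+\epsilon$ and bids uniformly on $[0,\frac{s}{S_{\max}}\alpha B]$ with probability $\nu=W-\epsilon$, paired with $\mu=2-W+\epsilon$ and luck increments scaled by $2S_{\max}$ in the first region but $2(1+\epsilon)S_{\max}$ in the other two. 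The atom is what makes high \Min bids unprofitable (her bid $y\approx\beta$ now yields \Max an expected budget change of about $Wy-\nu\beta/2=\beta(W+\epsilon)/2>0$), and its mass is pinned down by the cancellation in the paper's proof of Claim~\ref{claim:APPpMin_luck}: writing $\mathbb{E}_x[\Delta L(x,y)]=(1-\nu)\Delta L(0,y)+\frac{\nu}{\beta}\int_0^\beta\Delta L(x,y)\,\mathrm{d}x$ and splitting the integral as $S+T+U$, the atom's contribution cancels $\frac{\nu}{\beta}S$ exactly, leaving $T+U\ge 0$. Without specifying this atom and the matching constants, the final computation you appeal to cannot be carried out, so the proposal is incomplete at its critical step.
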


\begin{proof}
Let $B_0>0$ be \Max's initial budget, and let $\epsilon > 0$.
We set $p =  \frac{W - \epsilon}{2}$,
and find the strengths of the vertices of $\G$ using $\RT(\G, p)$.
The proof is nearly identical to the proof of Lem.~\ref{lem:APPpMax}. The dual complications from that lemma stem from the fact that when \Max loses a bidding, he might lose budget instead of gaining budget.
In order to state the two main differences between the proofs,
let us recall the budget update in a $W$-asymmetric game:
Following a play $\pi$,
if \Max bids $x \in \mathbb{R}$ and \Min bids $y \in \mathbb{R}$,
then the updated budget of \Max is
\[
B(\pi') = B(\pi) - x + Wy.
\]
%Since $W \leq 1$, it is now possible for \Max to lose the bidding, and lose some budget on top of that, which could not happen in the previous cases.
For example, when $W = \frac{1}{2}$ and $x=y=1$, thus \Min wins, \Max budget decreases by $\frac{1}{2}$.
This has two consequences:
First, in order to reduce the risk of this unlucky event,
\Max cannot bid uniformly at random in an interval, and needs to bid $0$ with a higher probability.
Second, in the definition of the luck update, we have to consider the case where \Max loses the bid and his budget decreases
instead of the case where he wins the bid and his budget increases.

We now define the mixed strategy $f$ of \Max.
Let $\alpha \in (0,1)$ such that $\lambda(\alpha) = 1 + \epsilon$ (see Lemma~\ref{lem:shift-function}).
When the token is on a vertex $v$ with strength $s = \St_{p}(v)$
and \Max's budget is $B$:
\begin{enumerate}[noitemsep]
    \item
    If $s > 0$ and $B > \frac{2 S_{\max}}{\alpha s}$, then
    \Max deterministically bids $\frac{s}{2 S_{\max}} \alpha B$, which is greater than $1$;
	\item
	Otherwise, \begin{itemize}[nolistsep]
	    \item With probability $1 - W + \epsilon$, \Max bids $0$;
	    \item With probability $W - \epsilon$,
	    \Max picks his bid uniformly at random in $[0,\frac{s}{S_{\max}} \alpha B]$.
	\end{itemize}
	\item
	Upon winning, \Max moves the token to $v^+$.
\end{enumerate}
We now fix a strategy $g$ of \Min.
Since \Min has the tie-breaking advantage and does not profit from bidding higher than \Max,
we assume that, according to $g$,
\Min never bids more in a round than the maximal possible bid of \Max according to $f$.
We consider the probability distribution $dist(f,g)$ defined by $f$ and $g$,
and, to conclude the proof,
we show that
$\mathbb{P}_{\eta \sim dist(f,g)}[\payoff(\eta) \geq \MP\big(\RT(\G, p)\big)] = 1$.

As in the other proofs, we show a relation between the budget of \Max,
his wins, his losses, and his luck.
To this end, we need some formal definitions.
We set $\mu = 2-W + \epsilon$, $\nu = W-\epsilon$,
and
$\diff(\pi) = \mu \cdot \investmax(\pi)- \nu \cdot \gainmax(\pi)$.
We define inductively the luck $L$ over a finite play coherent with $f$ and $g$ as follows.
Initially the luck is $0$.
Then, following a play $\pi$,
if in the next round \Max bids $x \in [0, \frac{s}{WS_{\max}} \alpha B(\pi)]$ according to $f$
and \Min bids $y \in [0, \frac{s}{WS_{\max}} \alpha B(\pi)]$ according to $g$,
the luck $L(\pi')$ of the updated run is defined as $L(\pi) + \Delta L(x,y)$,
where 
\[
\Delta L(x,y) = \left\{
\begin{array}{ll}
2S_{\max}\frac{W y - x}{\alpha B(\pi)} - \nu s  & \textup{if $x \leq W y$;}\\
2(1+\epsilon)S_{\max}\frac{W y - x}{\alpha B(\pi)} - \nu s  & \textup{if $W y < x \leq y$;}\\
2(1+\epsilon)S_{\max}\frac{W y - x}{\alpha B(\pi)} + \mu s & \textup{if $x > y$.}
\end{array}
\right.
\]
The rest of the proof is nearly identical to the one of Lemma \ref{lem:APPpMax}.
We prove a relation between $B$, $\diff$ and $L$:
\begin{restatable}{claimth}{APPpMinInvariant}
\label{claim:APPpMin_invariant}
    For all finite play $\pi$ coherent with $f$ and $g$,
    $B(\pi) \geq \init{B} (1+\alpha)^{\frac{L(\pi)-\diff(\pi)}{2S_{\max}}}$.
\end{restatable}
\noindent
This implies that if $L(\pi)-\diff(\pi)$ gets too high,
then the budget of \Max is high enough to enable the first option of his strategy $f$,
which guarantees him a win in the next round, 
hence causes $L(\pi)-\diff(\pi)$ to decrease.
As a consequence, we get an upper bound for $L(\pi)-\diff(\pi)$:
\begin{restatable}{claimth}{APPpMinBound}
\label{claim:APPpMin_bound}
    There exists $M \in \mathbb{R}$ such that
    $L(\pi)-\diff(\pi) \leq  M$
    for all finite play $\pi$ coherent with $f$ and $g$.
\end{restatable}
\noindent
Therefore, given an infinite play $\eta$ coherent with $f$ and $g$,
if for every $n \in \mathbb{N}$ we denote by $\eta^n$ the prefix of $\eta$ of size $n$,
we get that
$\liminf_{n\rightarrow\infty}\frac{H(\eta^n)}{n} \geq \liminf_{n\rightarrow\infty}\frac{L(\eta^n)}{n}$.
Moreover, we can prove that
\begin{restatable}{claimth}{APPpMinLuck}
\label{claim:APPpMin_luck}
$\mathbb{P}_{\pi\sim dist(f,g)}[\liminf_{n\rightarrow\infty}\frac{L(\eta^n)}{n}\geq 0]=1$.
\end{restatable}
\noindent
Since $\frac{\nu}{\mu + \nu} = p$,
we can conclude by using Corollary \ref{lem:magic}:
\begin{align*}
\payoff(\eta)
&\geq
\MP(\RT(\G, \frac{\nu}{\mu + \nu})) +
\frac{\mu + \nu}{\mu\nu} \cdot \liminf_{n\rightarrow \infty}\frac{H(\eta^n)}{n}
=
\MP(\RT(\G, p)) +
\frac{\mu + \nu}{\mu\nu} \cdot \liminf_{n\rightarrow \infty}\frac{L(\eta^n)}{n}\\
&\stackrel{a.s.}{\geq}
\MP(\RT(\G, p)).
\qedhere
\end{align*}
\end{proof}

We conclude this section by proving the Claims that appear in Lemma \ref{lem:APPpMin}.

\paragraph{Proof of Claim \ref{claim:APPpMin_invariant}}
This proof is nearly identical to the proof of Claim \ref{claim:APPpMax_invariant},
the main changes are caused by the fact that the luck update is defined slightly differently.
We fix a finite play $\pi$ coherent with $f$ and $g$,
and we show by induction over the length of $\pi$ that 
$B(\pi) \geq \init{B} (1+\alpha)^{\frac{L(\pi)-\diff(\pi)}{2S_{\max}}}$.
At the start of the game, 
the initial budget of \Max is $\init{B}$,
and both $L$ and $\diff$ are $0$,
hence the equation holds.
We now suppose that the claim holds for a prefix $\rho$ of $\pi$
that ends in a vertex $v$ of strength $s = \St_{p}(v)$,
and we consider the next round.
Let $x$ be the bid of \Max
and $y$ be the bid of \Min.
We show that the equation still holds for the updated play $\rho'$.
We start by using the definition of the budget update of a $W$-asymmetric game,
we multiply and divide the second part by $\alpha B(\rho)$,
and we factorise $B(\rho)$.
    \begin{align*}
    B(\rho') = B(\rho) - x + Wy
    = B(\rho) + \frac{Wy-x}{\alpha B(\rho)}\alpha B(\rho)
    = B(\rho) \cdot \Big(1+\frac{Wy-x}{\alpha B(\rho)}\alpha\Big).
    \end{align*}
To conclude, we differentiate three cases, the same as in the definition of $\Delta L(x,y)$.
\hfill$\triangleleft$

\begin{enumerate}
    \item 
    Suppose that $x \leq Wy$.
    Then $\frac{\Delta L(x,y) + \nu s}{2 S_{\max}} = \frac{Wy-x}{\alpha B(\rho)} \in [0,1]$,
    hence, as $\alpha \geq -1$,
    we can use Bernoulli's inequality.
    We then apply the induction hypothesis, and we conclude by using the fact
    that after one bidding happening at a vertex of strength $s$,
    $\diff$ decreases by at most $\nu s$, i.e., $\diff(\rho') \geq \diff(\rho) - \nu s$:
    \[
    B(\rho') \geq
    B(\rho) \cdot (1 + \alpha)^{\frac{\Delta L(x,y) + \nu s}{2 S_{\max}}}
    \geq  \init{B} \cdot (1 + \alpha)^{\frac{L(\rho) - \diff(\rho) + \Delta L(x,y) + \nu s}{2S_{\max}}}
    \geq  \init{B} \cdot (1 + \alpha)^{\frac{L(\rho') - \diff(\rho')}{2S_{\max}}}.
    \]
    \item 
    Suppose that $Wy < x \leq y$.
    Then  $-\frac{\Delta L(x,y) + \nu s}{2 (1+\epsilon) S_{\max}} = \frac{x-Wy}{\alpha B(\rho)} \in [0,1]$,
    hence, as $\alpha \geq -1$,
    we can use Bernoulli's inequality.
    We then use Lemma \ref{lem:shift-function} combined with the fact that $\lambda(\alpha) = 1+\epsilon$
    to get $(1-\alpha) = (1 + \alpha)^{-(1+\epsilon)}$.
    We follow by using the induction hypothesis, and the fact that $\diff(\rho') \geq \diff(\rho) - \nu s$:
    \begin{align*}
    B(\rho') &\geq
    B(\rho) \cdot (1 - \alpha)^{-\frac{\Delta L(x,y) + \nu s}{2 (1+\epsilon) S_{\max}}}
    = B(\rho) \cdot (1 + \alpha)^{\frac{\Delta L(x,y) + \nu s}{2 S_{\max}}}
    \geq  \init{B} \cdot (1 + \alpha)^{\frac{L(\rho) - \diff(\rho) + \Delta L(x,y) + \nu s}{2S_{\max}}}\\
    &\geq  \init{B} \cdot (1 + \alpha)^{\frac{L(\rho') - \diff(\rho')}{2S_{\max}}}.
    \end{align*}
    \item
    Suppose that $x > y$.
    Then $\frac{\mu s-\Delta L(x,y)}{2 (1 + \epsilon) S_{\max}} = \frac{x-Wy}{\alpha B(\rho)} \in [0,1]$,
    hence, since $-\alpha \geq -1$,
    we can use Bernoulli's inequality.
    We then apply $(1-\alpha) = (1 + \alpha)^{-(1+\epsilon)}$,
    the induction hypothesis, and we conclude by using the fact that, as \Max wins,
    $\diff(\rho') = \diff(\rho) + \mu s$:
    \begin{align*}
    B(\rho') &\geq
    B(\rho) \cdot (1 - \alpha)^{\frac{\mu s-\Delta L(x,y)}{2 (1 + \epsilon) S_{\max}}}
    =B(\rho) \cdot (1 + \alpha)^{\frac{\Delta L(x,y)-\mu s}{2S_{\max}}}
    \geq  \init{B} \cdot (1 + \alpha)^{\frac{L(\rho) - \diff(\rho) + \Delta L(x,y) - \mu s}{2S_{\max}}}\\
    &\geq  \init{B} \cdot (1 + \alpha)^{\frac{L(\rho') - \diff(\rho')}{2 S_{\max}}}.
    \end{align*}
\end{enumerate}
\hfill$\triangleleft$

\paragraph{Proof of Claim \ref{claim:APPpMin_bound}}
We set $M =  2 S_{\max} \cdot \log_{1+\alpha}(\frac{2 S_{\max}}{\alpha S_{\min}}) + (2W + 1)\mu S_{\max} + \nu S_{\max}$.
The proof is nearly identical to the proof of Claim \ref{claim:APPpMax_bound}.
We prove that $L(\pi) - \diff(\pi) \leq  M$ for every play $\pi$ coherent with $f$ and $g$.
Assume, towards building a contradiction,
that for some play $\pi$ coherent with $f$ and $g$
the value of $L-\diff$ exceeds $M$ along $\pi$.
Let $\rho'$ be the smallest prefix of $\pi$ such that $L(\rho') - \diff(\rho') > M$.
Note that $\rho'$ cannot be the empty play as the value of both $L$ and $\diff$ are initially $0$.
Let $\rho$ be the play obtained by deleting the last step of $\rho'$.
We prove that $L(\rho) - \diff(\rho)>L(\rho') - \diff(\rho')$,
which contradicts the fact that $\rho'$
is the smallest prefix of $\pi$ satisfying $L(\rho') - \diff(\rho') > M$.

Note that, in a single step, the value of $L$ cannot increase by more than $(2W + 1)\mu S_{\max}$,
and the value of $\diff$ decreases by at most $\nu S_{\max}$.
Therefore, the value of $L - \diff$ increases by at most $(2W + 1)\mu S_{\max} + \nu S_{\max}$ in a single step,
and we get that
\[
L(\rho) - \diff(\rho) \geq L(\rho') - \diff(\rho') - (2W + 1)\mu S_{\max} - \nu S_{\max} >
2S_{\max} \cdot \log_{1+\alpha}(\frac{2 S_{\max}}{\alpha S_{\min}}).
\]
By applying Claim \ref{claim:APPpMin_invariant}, we obtain
\begin{equation*}\label{eq:APPpMin_bound}
B(\rho) \geq \init{B} \cdot (1+\alpha)^{\frac{L(\rho) - \diff(\rho)}{2S_{\max}}}
> \frac{2S_{\max}}{\alpha S_{\min}}.
\end{equation*}
Therefore, by definition of the strategy $f$,
in the round going from $\rho$ to $\rho'$,
\Max deterministically bids $x = \frac{s}{2 S_{\max}} \alpha B(\rho)$.
Since the bid $y$ of \Min is at most $1$ (remember that in an asymmetric game the budget of \Min is always $1$),
this has two consequences:
First, \Max wins the auction, hence $\diff(\rho') = \diff(\rho) + \mu s$.
Second, by definition of the luck update,
$L(\rho') < L(\rho) + \mu s$.
Therefore, we get
$L(\rho') - \diff(\rho') <  L(\rho) - \diff(\rho)$.
This contradicts the fact that $\rho'$
is the smallest prefix of $\pi$ satisfying $L(\rho') - \diff(\rho') > M$.
\hfill$\triangleleft$

\paragraph{Proof of Claim \ref{claim:APPpMin_luck}}
Let us begin by recalling the definition of the luck update:
Let us consider a finite play $\pi$ coherent with $f$ and $g$,
and let $\beta = \frac{s}{S_{\max}}\alpha B(\pi)$ denote the maximal possible
bid of \Max and \Min after $\pi$.
If \Max bids $x \in [0,\beta]$
and \Min bids $y \in [0,\beta]$,
then the luck $L(\pi')$ of the updated run $\pi'$ is defined as
$L(\pi) + \Delta L(x,y)$,
where
\[
\Delta L(x,y) = \left\{
\begin{array}{ll}
2S_{\max}\frac{W y - x}{\alpha B(\pi)} - \nu s  
=2s\frac{W y - x}{\beta} - \nu s  & \textup{if $x \leq W y$;}\\
2(1+\epsilon)S_{\max}\frac{W y - x}{\alpha B(\pi)} - \nu s  
=2(1+\epsilon)s\frac{W y - x}{\beta} - \nu s  & \textup{if $W y < x \leq y$;}\\
2(1+\epsilon)S_{\max}\frac{W y - x}{\alpha B(\pi)} + \mu s 
=2(1+\epsilon)s\frac{W y - x}{\beta} + \mu s & \textup{if $x > y$.}
\end{array}
\right.
\]
As we explain in the proofs of Claim~\ref{cl:LB-luck} and Claim \ref{claim:APPpMax_luck},
in order to prove the claim,
it is sufficient to prove that $|\Delta(x,y)|$ is uniformly bounded,
and that for every possible value of the bid $y \in [0,\beta]$ of \Min,
the expected value $\mathbb{E}_{x}[\Delta L(x,y)]$ of $\Delta L(x,y)$ when $x$ ranges in $[0,\beta]$
according to the strategy $f$ is greater than or equal to $0$.
Note that we immediately get the bound $|\Delta(x,y)| \leq (2(1+\epsilon) + \mu + \nu)S_{\max}$.
To conclude, we compute the value of $\mathbb{E}_{x}[\Delta L(x,y)]$ by differentiating two cases.

If $B(\pi) > \frac{2S_{\max}}{\alpha s}$,
computing the expected value of $\Delta L(x,y)$ is easy:
by definition of the strategy $f$,
\Max deterministically bids $x = \frac{s}{2S_{\max}}\alpha B(\pi)$.
Note that this bid is greater than $1$, hence greater than $y$
since the budget of \Min in an asymmetric game is always $1$.
Therefore, we get
\[
\mathbb{E}_{x}[\Delta L(x,y)] = 2 (1+\epsilon) S_{\max}\frac{W y - x}{\alpha B(\pi)} + \mu s
\geq - (1+\epsilon)s + \mu s
= (1-W)s \geq 0.
\]

If $B(\pi) \leq \frac{2S_{\max}}{\alpha s}$,
then with probability $1-\nu$ $\Max$ bids $0$,
and with probability $\nu$ \Max bids uniformly at random in the interval $[0,\beta]$,
hence
\[
\mathbb{E}_{x}[\Delta L(x,y)]
= (1-\nu) \cdot \Delta(0,y) + \frac{\nu}{\beta} \cdot \int_0^{\beta} \! \Delta(x,y) \, \mathrm{d}x.
\]
In order to prove that this is greater than or equal to $0$,
we decompose the integral $\int_0^{\beta} \! \Delta(x,y) \, \mathrm{d}x$ as a sum $S + T + U$,
and then we conclude by showing that $(1-\nu) \cdot \Delta(0,y) + \frac{\nu}{\beta}S = 0$
and $T+U \geq 0$.

The integral $\int_0^{\beta} \! \Delta(x,y) \, \mathrm{d}x$ is equal to 
\begin{align*}
    &\int_0^{W y} \! 2s\frac{W y - x}{\beta} - \nu s \, \mathrm{d}x +
    \int_{W y}^{y} \! 2s(1+\epsilon)\frac{W y - x}{\beta} - \nu s \, \mathrm{d}x + 
    \int_{y}^{\beta} \! 2s(1+\epsilon)\frac{W y - x}{\beta}  + \mu s \, \mathrm{d}x\\
    &=
    \underbrace{\int_0^{\beta} \! 2s\frac{W y - x}{\beta} \, \mathrm{d}x -
    \int_0^{y} \! W s \, \mathrm{d}x +
    \int_y^{\beta} \! (\mu-\epsilon) s \, \mathrm{d}x}_{S} + 
    \underbrace{\int_{W y}^{\beta} \! \frac{W y - x}{\beta}2 \epsilon s \, \mathrm{d}x}_T +
    \underbrace{\int_{W y}^{\beta} \! \epsilon s \, \mathrm{d}x}_U.
\end{align*}
We compute the values of $S$, $T$,
and $U$ (remember that $\mu = 2 - W+ \epsilon$):
\begin{align*}
    S &= \int_0^{\beta} \! \frac{W y - x}{\beta}2s \, \mathrm{d}x -
    \int_0^{y} \! W s \, \mathrm{d}x +
    \int_y^{\beta} \! (\mu-\epsilon) s \, \mathrm{d}x
    = (2 W y -\beta - W y + (\beta-y)(2-W))s\\
    &= (1-W)(\beta - 2y)s;\\
    T &= \int_{W y}^{\beta} \! \frac{W y - x}{\beta}2\epsilon s \, \mathrm{d}x
     = -\frac{(\beta-W y)^2}{\beta}\epsilon s;\\
    U &= \int_{0}^{\beta} \! \epsilon s \, \mathrm{d}x
     = \beta \epsilon s.
\end{align*}
Therefore, we get
\begin{align*}
&(1-\nu) \cdot \Delta(0,y) + \frac{\nu}{\beta}S = 
(1-\nu)(\frac{2 y}{\beta} - 1)\nu s+
\frac{\nu}{\beta}(1-\nu)(\beta - 2y)s
=0;\\
&T + U=\frac{2 \beta - Wy}{\beta}\epsilon W s y
\geq 0.\qedhere
\end{align*}
\hfill$\triangleleft$

% \begin{corollary}
% \label{cor:APPpMin}
% In a strongly-connected mean-payoff all-pay poorman game $\G$ where the initial budget $X$ of \Max is smaller than or equal to the initial budget $Y$ of \Min,
% for every $\epsilon > 0$, \Max has a probabilistic budget-based strategy that guarantees an expected value of
% $\MP(\RT(\G,\frac{\init{B}}{2\init{C}}-\epsilon))$.
% \end{corollary}

\section{Parity All-Pay Bidding Games}\label{sec:parityproof}
\label{sec:qual-full}
In this section, we prove Thm.~\ref{thm:parity} based on the solution to mean-payoff bidding games. Let $\P$ be a strongly-connected parity game in which the maximal parity index is $d \in \Nat$. We construct a mean-payoff game $\G$ by setting the weight of a vertex $v$ to be $1$ if the parity of $v$ is $d$, and otherwise $w(v) = 0$. The key property of this weight function is that any path $\eta$ with $\payoff(\eta)>0$ must visit a vertex with index $d$ infinitely many times, and thus satisfies the parity objective. 

\begin{lemma}\label{lemma:rtparity}
Let $\G$ be a strongly-connected mean-payoff game with non-negative weights and at least one strictly positive weight. Then, for every $p \in (0,1)$, we have $\MP\big(\RT(\G, p)\big)>0$.
\end{lemma}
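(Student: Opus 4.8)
The plan is to exploit that the weights are non-negative, so $\payoff \ge 0$ along every play and hence $\MP(\RT(\G,p)) \ge 0$ trivially; the work is to upgrade this to a strict inequality. Fix a vertex $u$ with $w(u) > 0$, which exists by hypothesis, and let $D = \max_{v} \mathrm{dist}(v,u)$ be the largest length of a shortest directed path to $u$; $D$ is finite since $\G$ is strongly connected. Since $\MP(\RT(\G,p)) = \max_{\tau}\min_{\sigma}\mathbb{E}[\payoff]$, it suffices to exhibit a single $\Max$ strategy securing a strictly positive expected payoff against every $\Min$ strategy. I would take $\tau$ to be the positional strategy that, at every vertex $v \neq u$, moves the token to a fixed neighbour lying on a shortest path to $u$ (moving arbitrarily at $u$).

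The key combinatorial observation I would establish first is: whenever $D$ consecutive turns are all won by $\Max$ (Nature's coin comes up ``$\Max$'' $D$ times in a row), the token visits $u$ during that window, irrespective of the history and of $\Min$. Indeed, at the start of such a window the token is at distance at most $D$ from $u$, and each $\Max$-move decreases this distance by one, so $u$ is reached within at most $D$ moves. Because the coin tosses in $\RT(\G,p)$ are i.i.d.\ $\mathrm{Bernoulli}(p)$ and independent of the play so far, the probability of an all-$\Max$ window is exactly $p^D$ conditioned on any history. Partitioning time into consecutive blocks of length $D$ then shows that, against any $\Min$ strategy, each block contains a visit to $u$ with conditional probability at least $p^D$.

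From this uniform bound I would derive that $u$ is hit from every vertex in finite expected time (the hitting time is stochastically dominated by $D$ times a geometric variable with success probability $p^D$), uniformly over $\Min$. Fixing $\tau$, $\Min$'s optimal response may be taken positional (mean-payoff MDP theory, which underlies the existence of optimal positional strategies cited in Def.~\ref{def:RT}), so the play becomes a finite Markov chain; the uniform hitting bound forces $u$ to be positive recurrent and to lie in the unique recurrent class, whence its stationary frequency $\pi(u)$ is strictly positive. Since all other weights are non-negative, $\payoff = \sum_v \pi(v) w(v) \ge \pi(u)\,w(u) > 0$ almost surely, and taking the minimum over the finitely many positional $\Min$ responses preserves positivity. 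Hence $\MP(\RT(\G,p)) > 0$.

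The subtle point, which I expect to be the main obstacle, is that the block estimate directly controls $\liminf_n \tfrac1n\mathbb{E}[\energy(\eta^n)]$, whereas $\payoff$ is defined through $\mathbb{E}[\liminf_n \tfrac1n \energy(\eta^n)]$, and Fatou's lemma points the wrong way. The cleanest fix, which I would adopt, is to pass to the finite Markov chain induced by $\tau$ and $\Min$'s optimal positional response and invoke the ergodic theorem there. Alternatively, keeping $\Min$ arbitrary, one can apply a martingale strong law of large numbers to the block indicators $Z_j$ (set $Z_j = 1$ if block $j$ visits $u$ and $Z_j = 0$ otherwise): from $\mathbb{E}[Z_j \mid \mathcal{F}_{j-1}] \ge p^D$ and bounded increments one gets $\liminf_m \tfrac1m\sum_{j \le m} Z_j \ge p^D$ almost surely, and thus a strictly positive visit frequency to $u$ that yields $\payoff \ge w(u)\,p^D/D > 0$ almost surely against every $\Min$ strategy.
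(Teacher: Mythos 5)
Your proposal is correct, and its combinatorial core coincides with the paper's own proof: the paper likewise fixes a vertex $v_0$ with $w(v_0)>0$, lets \Max play the positional strategy that follows shortest paths towards $v_0$, and uses the observation that $n-1$ consecutive \Max coin-toss wins (probability $p^{n-1}$, uniformly over histories and over \Min's behaviour) force a visit to $v_0$ within that window, followed by a partition of time into blocks. The genuine difference lies in the final step. The paper bounds $\liminf_{m\to\infty}\frac{1}{m}\mathbb{E}[\energy(\pi^m)]\geq \frac{w(v_0)\,p^{n-1}}{n-1}>0$ and concludes; as you observe, this controls the Ces\`aro limit of expectations, whereas $\MP\big(\RT(\G,p)\big)$ is defined via the expected payoff $\mathbb{E}\big[\liminf_n \frac{1}{n}\energy(\pi^n)\big]$, and Fatou's lemma gives $\mathbb{E}[\liminf_n X_n]\leq\liminf_n\mathbb{E}[X_n]$, i.e.\ the wrong direction. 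So the paper's writeup (which even announces a ``with probability $1$'' guarantee) leaves exactly the subtlety you identify, and either of your repairs closes it: passing to the finite Markov chain induced by the two positional strategies and invoking the ergodic theorem, or---more self-contained, and matching the paper's own toolkit, cf.\ the Azuma--Hoeffding-based Lemma~\ref{lemma:concentration}---applying a martingale strong-law argument to the block indicators $Z_j$ with $\mathbb{E}[Z_j\mid\mathcal{F}_{j-1}]\geq p^D$, which yields $\liminf_m\frac{1}{m}\sum_{j\leq m}Z_j\geq p^D$ almost surely and hence $\payoff\geq w(u)\,p^D/D>0$ almost surely against every \Min strategy. In short: same strategy and same block decomposition, but your treatment of the $\liminf$-versus-expectation issue is more careful and complete than the argument printed in the paper.
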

\begin{proof}
Let $v_0\in V$ be a vertex whose weight is positive. Since $\G$ is strongly-connected, every vertex $v\in V$ admits a shortest path to $v_0$. Fix one such path for each $v$ and let $v'$ be the successor of $v$ along this path (for $v=v_0$, let $v'$ be any of its neighbors). Define the strategy $\sigma$ for \Max via $\sigma(v)=v'$, so \Max moves the token along the edge $\zug{v,v'}$ upon winning the coin toss. We show that this strategy guarantees a positive mean-payoff with probability $1$.

\medskip Let $|V|=n$. The length of a shortest path from any vertex in $\G$ to $v_0$ is at most $n-1$. Thus if \Max follows the strategy $\sigma$ and wins $n-1$ consecutive coin tosses, the token will reach $v_0$ at least once in those $n-1$ turns. As the coin tosses are pairwise independent, the probability of  \Max winning $n-1$ times in a row is $p^{n-1}$. We will use this observation to show that $\sigma$ ensures positive mean-payoff.

\medskip For an infinite game play $\pi$, let $\pi^m$ be its finite prefix of length $m$. Moreover, let $v_i(\pi)$ denote the $i$-th vertex along $\pi$. If we write $m=k\cdot(n-1)+r$ with $0\leq r<n-1$, the expected energy of $\pi^m$ under $\sigma$ and any fixed strategy of the opponent is
\begin{equation*}
\begin{split}
&\mathbb{E}[\pi^m] \geq \mathbb{E}^{\sigma}[\pi^{k\cdot (n-1)}] \geq w(v_0)\cdot \mathbb{E}[\#\{1\leq j\leq k\cdot(n-1)\mid v_j(\pi)=v_0\}]\\
&= w(v_0)\cdot\sum_{i=0}^{k-1}\mathbb{E}[\#\{1\leq j\leq n-1\mid v_{i\cdot (n-1)+j}(\pi)=v_0\}] \\
&\geq w(v_0)\cdot\sum_{i=0}^{k-1}\mathbb{E}[\mathbb{I}(v_{i\cdot (n-1)+j}(\pi)=v_0\text{ for some }1\leq j\leq n-1)] \\
&= w(v_0)\cdot\sum_{i=0}^{k-1}\mathbb{P}[v_{i\cdot (n-1)+j}(\pi)=v_0\text{ for some }1\leq j\leq n-1] \\
&\geq w(v_0)\cdot\sum_{i=0}^{k-1} b^{n-1} = w(v_0)\cdot k\cdot p^{n-1},
\end{split}
\end{equation*}
where the last inequality follows from the above observation. Since $k = (m-r)/(n-1)$ and $r<n-1$, we have $k>m/(n-1)-1$. Thus,
\begin{equation*}
\liminf_{m\rightarrow\infty}\frac{\mathbb{E}[\pi^m]}{m} \geq w(v_0)\cdot\liminf_{m\rightarrow\infty}\frac{(m/(n-1)-1)\cdot p^{n-1}}{m} = \frac{w(v_0)\cdot p^{n-1}}{n-1} >0,
\end{equation*}
and $\sigma$ ensures positive mean-payoff as claimed.
\end{proof}

The proof of Thm.~\ref{thm:parity} follows from combining Lem.~\ref{lemma:rtparity} with Thms.~\ref{thm:AP-Rich} and~\ref{thm:AP-poor}.

\section{Conclusions}
\label{sec:conc}
We study, for the first time, infinite-duration all-pay bidding games. In terms of applications, all-pay bidding, especially combined with poorman bidding, is often more favorable than first-price bidding since it accurately models settings in which bounded resources with little or no inherent value need to be invested. Technically, however, all-pay bidding is much more challenging than first-price bidding since mixed strategies need to be considered. Prior to this work, reachability all-pay bidding games were only recently studied and more questions were left open than closed. This work is thus the first to find rich mathematical structure for all-pay bidding in the form of equivalences with random-turn games. We hope that the techniques we develop here will assist in shedding light also on reachability all-pay bidding games. 

This work constitutes another step in the line of work that studies the intriguing equivalence between bidding games and random-turn games. Starting from reachability first-price bidding games \cite{LLPSU99} and continuing with mean-payoff first-price bidding games~\cite{AHC19,AHI18,AHZ19}. We find the results of all-pay bidding games particularly surprising and we believe they encourage further investigation to understand the elegant connection between bidding games and random-turn games. See \cite{AH20} for a list of concrete open questions on bidding games.

\small
\bibliographystyle{plain}
\bibliography{../../ga}

\end{document}